\newif\ifarxiv

\arxivtrue

\documentclass[3p,sort]{elsarticle}
\usepackage[utf8]{inputenc}
\usepackage[T1]{fontenc}
\usepackage{amsmath,amssymb,amsthm}
\usepackage[defblank]{paralist}
\usepackage{booktabs}
\usepackage{multirow}
\usepackage{array}
\usepackage{tabularx}

\makeatletter
\def\NAT@spacechar{~}%
\makeatother
\usepackage{tikz}
\usetikzlibrary{decorations.pathmorphing}
\usetikzlibrary{decorations.pathreplacing}
\usetikzlibrary{shapes,automata}

\usepackage{color}
\usepackage{graphicx}
\definecolor{mygrey}{rgb}{0.9,0.9,0.9}
\definecolor{darkgreen}{RGB}{0,100,0}

\usepackage[pdfdisplaydoctitle,menucolor=orange!40!black,filecolor=magenta!40!black,urlcolor=blue!40!black,linkcolor=red!40!black,citecolor=green!40!black,colorlinks,pagebackref]{hyperref}

\usepackage{doi}
\usepackage[strings]{underscore}

\usepackage[sort&compress,capitalize]{cleveref}

\renewcommand*{\backref}[1]{}
\renewcommand*{\backrefalt}[4]{
  \ifcase #1 Not cited.%
  \or Cited on page~#2.%
  \else Cited on pages #2.%
  \fi%
}

\theoremstyle{definition}
\newtheorem{theorem}{Theorem}[section]
\newtheorem{corollary}[theorem]{Corollary}
\newtheorem{definition}[theorem]{Definition}
\newtheorem*{problem}{Problem}
\newtheorem{lemma}[theorem]{Lemma}
\newtheorem{proposition}[theorem]{Proposition}

\newtheorem{observation}[theorem]{Observation}
\newtheorem{rrule}[theorem]{Reduction Rule}
\newtheorem{constr}[theorem]{Construction}
\theoremstyle{remark}
\newtheorem*{remark}{Remark}

\crefname{constr}{Construction}{Constructions}
\crefname{step}{Step}{Steps}
\crefname{observation}{Observation}{Observations}
\crefname{proposition}{Proposition}{Propositions}
\crefname{remark}{Remark}{Remarks}
\crefname{rrule}{Reduction Rule}{Reduction Rules}
\crefname{prop}{Property}{Properties}
\crefname{figure}{Figure}{Figures}
\creflabelformat{prop}{(#2#1#3)}

\newcommand{\decprob}[3]{
  \begin{problem}[\textsc{#1}]\leavevmode
    \begin{compactdesc}
    \item[Input:] #2
    \item[Question:] #3
    \end{compactdesc}   
  \end{problem}
}

\newcommand{\hs}[1]{\ensuremath{#1}-\textsc{Hitting Set}}

\DeclareMathOperator{\petal}{petal}

\newcommand{\torso}{\ensuremath{\operatorname{torso}}}

\newcommand{\N}{\mathbb{N}}
\newcommand{\I}{\mathcal{I}}
\newcommand{\R}{\mathcal{R}}
\newcommand{\F}{\mathcal{F}}
\newcommand{\C}{\mathcal{C}}
\newcommand{\ffvd}{\textsc{\(\F\)-free Vertex Deletion}}
\newcommand{\sffvd}{\textsc{Secluded \(\F\)-free Vertex Deletion}}
\newcommand{\ssffvd}{\textsc{Small Secluded \(\F\)-free Vertex Deletion}}
\newcommand{\sssts}{\textsc{Small Secluded $s$\nobreakdash-$t$\nobreakdash-Separator}}

\newcommand{\ssp}{\textsc{Small Secluded~$\Pi$}}
\newcommand{\lsp}{\textsc{Large Secluded~$\Pi$}}
\newcommand{\fco}{\textsc{Fixed  Cardinality Optimization}}

\newcommand{\W}[1]{\ensuremath{\operatorname{W}[#1]}}

\newcommand{\poly}{\ensuremath{\operatorname{poly}}}

\newcommand{\NP}{\ensuremath{\text{NP}}}
\newcommand{\coNP}{\ensuremath{\text{coNP}}}
\newcommand{\nopk}{\ensuremath{\NP\subseteq \coNP/\text{poly}}}

\newcommand{\yes}{yes}

\newcommand{\thetitle}{The parameterized complexity of finding secluded solutions to some classical optimization problems on graphs}

\newcommand{\raproof}{($\Rightarrow$)}
\newcommand{\laproof}{($\Leftarrow$)}

\date{}

\begin{document}

\begin{frontmatter}

  \title{\thetitle{}\tnoteref{ipec}}
  \tnotetext[ipec]{A preliminary version of this work appeared in the proceedings of IPEC 2016 \citep{BFM+17}.  This version provides full proof details and additionally shows that \textsc{Small Secluded \(s\)-\(t\)-Separator} is fixed-parameter tractable parameterized by the combination of the solution size and the open neighborhood size (\cref{thm:fptsssts}).}

\author[1,2]{Ren\'{e}~van~Bevern\fnref{thx1}}
\ead{rvb@nsu.ru}
\fntext[thx1]{Results in \cref{sec:ffvd} were obtained under support of the Russian Science Foundation, grant~16-11-10041.}

\author[3]{Till~Fluschnik\corref{cor1}\fnref{thx2}}
\ead{till.fluschnik@tu-berlin.de}
\fntext[thx2]{Supported by the DFG, project DAMM (NI~369/13).}
\cortext[cor1]{Corresponding author.}

\author[4]{George~B.~Mertzios\fnref{thx2b}}
\ead{george.mertzios@durham.ac.uk}
\fntext[thx2b]{Supported by the EPSRC project EP/P020372/1.}

\author[3]{Hendrik~Molter\fnref{thx3}}
\ead{h.molter@tu-berlin.de}
\fntext[thx3]{Supported by the DFG, project DAPA (NI 369/12).}

\author[3,5]{{Manuel~Sorge}\fnref{thx4}}
\ead{sorge@post.bgu.ac.il}
\fntext[thx4]{Supported by the People Programme (Marie Curie Actions) of the European Union's Seventh Framework Programme (FP7/2007-2013) under REA grant agreement number 631163.11 and by the Israel Science Foundation (grant no. 551145/14).}

\author[6]{Ond\v rej~Such\'{y}\fnref{thx5}}
\ead{ondrej.suchy@fit.cvut.cz}
\fntext[thx5]{Supported by grant 17-20065S of the Czech Science Foundation.}

\address[1]{
\small{Department of Mechanics and Mathematics, Novosibirsk State University, Novosibirsk, Russian Federation,
}
}
\address[2]{
\small{Sobolev Institute of Mathematics of the Siberian Branch of the Russian Academy of Sciences, Novosibirsk, Russian Federation}
}
\address[3]{
\small{Institut f\"ur Softwaretechnik und Theoretische Informatik, TU~Berlin, Germany, 
}
}
\address[4]{
\small{School of Engineering and Computing Sciences, Durham University, Durham, UK,
}
}
\address[5]{
\small{Department of Industrial Engineering and Management, Ben-Gurion University of the Negev, Be'er Sheva, Israel
}
}
\address[6]{
\small{Faculty of Information Technology, Czech Technical University in Prague, Prague, Czech~Republic
}
}

\begin{abstract}
\looseness=-1
  This work studies the parameterized complexity of finding
  \emph{secluded} solutions to classical combinatorial optimization
  problems on graphs such as finding minimum \(s\)-\(t\) separators,
  feedback vertex sets, dominating sets, maximum independent sets, and vertex
  deletion problems for hereditary graph properties:
  Herein, one searches not only to minimize or maximize the size of
  the solution, but also to minimize the size of its
  neighborhood. This restriction has applications in secure routing
  and community detection. %
\end{abstract}

\begin{keyword}
Neighborhood \sep Feedback Vertex Set \sep Vertex Deletion \sep Separator \sep Dominating Set

\end{keyword}

\end{frontmatter}
\ifarxiv{}\thispagestyle{empty}\fi{}

\section{Introduction}\label{sec:intro}
In many optimization problems on graphs, one searches for a minimum or
maximum cardinality subset of vertices and edges satisfying certain
properties, like a shortest $s$-$t$ path, a maximum independent set,
or a minimum dominating set.  \citet{Chechikjpp16} first studied the
problem of finding \emph{secluded} solutions, which additionally limit
the \emph{exposure} of the solution as measured by the size of the
neighborhood.  They motivate these problems by protecting sensitive
information that is sent through a network and potentially intercepted
by neighbors of its travel path.  However, given that there are
effective means of encrypting and signing sensitive information, the
following application seems more realistic: a convoy travelling from a
vertex~$s$ to a vertex~$t$ along an \(s\)-\(t\) path in a
transportation network can potentially be attacked from roads incident
to this path.  Thus, one arrives at the problem of finding an $s$-$t$
path with a small closed neighborhood.
Another motivation for limiting the exposure of solutions is the
search for segregated communities in social
networks~\citep{itoio05}. Here we search for dense subgraphs that
are exposed to few neighbors in the rest of the graph.
The constrained exposure models the concept of inter-cluster sparsity,
which states that communities have weak connections
to the rest of the network~\cite{Gae04}. In addition to
being a natural constraint in the above applications, restricting the
exposure of the solution may also yield more efficient
algorithms~\citep{hkmn09,huffnerks15,itoio05,khmn09}.

\citet{Chechikjpp16} and \citet{FominGKK16} previously studied
secluded paths and Steiner trees, respectively. Our aim in this paper
is to study the classical and parameterized complexity of secluded
variants of classical combinatorial optimization problems in graphs.

Following \citet{Chechikjpp16}, we measure the exposure of a
solution~$S$ by the size of the closed
neighborhood~$N_G[S] = S \cup \bigcup_{v \in S}N_G(v)$ of~$S$ in the
input graph~$G$. Given a predicate~$\Pi(G, S)$ that determines whether
$S$ is a solution for input graph~$G$, we study the following general
problem.
\decprob{Secluded $\Pi$}  {A graph~$G=(V,E)$ and an integer $k$.}  {Is there a
vertex subset $S\subseteq V$ such that $S$~satisfies~$\Pi(G, S)$ and $|N_G[S]|\leq k$?}

\noindent
In some cases, it may be necessary to control the size of the solution
and its neighborhood independently: For example, when routing a convoy
from~$s$ to~$t$ as above, we may simultaneously
aim to minimize its travel time, that is, the number of vertices on
its route, and to limit the exposure. Hence, another measure
for the exposure of the solution is the size of the open neighborhood
$N_{G}(S) = N_{G}[S] \setminus S$.  We thus introduce and study the complexity
of the following problem.
\decprob{Small (Large) Secluded $\Pi$}
{A graph~$G=(V,E)$ and two integers $k,\ell $.} {Is there a vertex subset
  $S\subseteq V$ such that $S$
  satisfies~$\Pi(G, S)$, $|S|\leq k$, and $|N_{G}(S)|\leq \ell$
  (or $|S|\geq k$, and $|N_{G}(S)|\leq \ell$, respectively)?}

\paragraph{Our contributions} We study \textsc{Secluded \(\Pi\)} and
\textsc{Small Secluded \(\Pi\)} in the framework of parameterized
complexity, a framework allowing for a fine-grained complexity
analysis and for proving the effectivity of polynomial-time data
reduction (we give formal definitions in \cref{sec:prelim}): a problem
is \emph{fixed-parameter tractable} with respect to some
parameter~\(k\) if it can be solved in \(f(k)\cdot n^{O(1)}\)~time,
where $n$ is the input size.
Thus, for small parameters~\(k\), fixed-parameter algorithms can
potentially lead to efficient algorithms for NP-hard problems.  In
contrast, if a problem is W[1]-hard with respect to a parameter~\(k\),
then it is presumably not fixed-parameter tractable with respect to
that parameter.  Our results are summarized in~\Cref{results-table}.
\renewcommand{\arraystretch}{1.1}
\newcommand{\tabref}[1]{\footnotesize{(Thm.~\ref{#1})}}
\newcommand{\tabrefx}[2]{\footnotesize{(#1~\ref{#2})}}
\newcommand{\tabrefxx}[2]{\footnotesize{(Cor.~\ref{#1}/Thm.~\ref{#2})}}
\newcommand{\tabrefxxx}[2]{\footnotesize{(Thm.~\ref{#1}/Thm.~\ref{#2})}}
\begin{table}[t]
  \centering
  \begin{tabular}{@{}lllllll@{}}
    \toprule
                          & \multicolumn{2}{l}{Complexity}                & \multicolumn{4}{l}{Parameterized Complexity / Kernelization}                           \\ 
                          & \emph{Secluded}                                       & \emph{Small Secl.}               & \emph{Secluded}      & \multicolumn{3}{l}{\emph{Small Secl.}}         \\
    Problem               &                                                &                           & $k$           & $k$          & $\ell$       & $k+\ell$ \\
    \midrule%
    $s$-$t$ Separator     & P                                   & \NP-c.                    	& P             & W[1]-h.       			& W[1]-h.       			& FPT/noPK     \\
			  & \tabref{thm:sstspolytime} 		& \tabref{thm:ssstshard}	& 		& \tabref{thm:ssstshard}		& \tabref{thm:ssstshard}		& \tabrefxxx{thm:fptsssts}{thm:nopksssts}	\\ \\
    $q$-Dom.\ Set,     & \NP-c.                  	& \NP-c.\textsuperscript{*} 		& W[2]-h.       & $\rightarrow$ 			& $\rightarrow$ 			& W[2]-h.    \\
    $2p \le q$         & \tabref{thm:q-dom-NPh} 		& 				& \tabref{thm:q-domW2}			& 		& 						& \tabrefx{Cor.}{cor:q-dom-hard}	\\ \\
    $q$-Dom.\ Set,       & \NP-c.                 	& \NP-c.\textsuperscript{*} 	& FPT/noPK     			 	& W[2]-h.\textsuperscript{*}             & ?             & FPT/noPK   \\
    $2p > q$		  & \tabref{thm:q-dom-NPh} 		& 				& \tabrefxx{cor:q-dom-FPT}{thm:q-dom-NPh}	& & & \tabrefxx{cor:q-dom-hard}{thm:q-dom-FPT}	\\ \\
    $\mathcal{F}$-free  & \NP-c.                 	& \NP-c.\textsuperscript{*} 	& FPT/PK        & ?            				& ?            				& FPT/?      \\
    Vertex Deletion		  & \tabref{thm:sffcdnphard} 		& 				& \tabref{thm:sffvdpolyker}		& 		& 					& \tabref{thm:ssffvd}	\\ \\
    Feedback   & \NP-c.                              & \NP-c.\textsuperscript{*} 	& FPT/PK        		& ?             	& W[1]-h.       			& ? / ?      \\
    Vertex Set & \tabref{thm:sfvsisnphard} 		& 				& \tabref{thm:sfvspolyker}		& 		& \tabref{thm:ssfvsW1hard}		& 	\\
    
    \midrule
			  &		& \emph{Large Secl.} & 				& \multicolumn{3}{l}{\emph{Large Secl.}} \\
    Independent Set     &  					& \NP-c.\textsuperscript{*} &                           & $\rightarrow$ 		& $\rightarrow$ 			& W[1]-h.                    \\    
			&  		& 				& 						& 	& 	& \tabref{thm:lsishard}	\\
    \bottomrule
  \end{tabular}
  \caption{Overview of our results. PK stands for polynomial kernel. The results marked by an asterisk follow by a straightforward reduction from the non-secluded variant. The complexity for combinations marked by an arrow are resolved by a stronger result in a farther column.
  }
  \label{results-table}
\end{table}

We analyze the impact of the parameter~\(k\) on the complexity of
\textsc{Secluded $\Pi$} and the impact of the parameters~\(k\)
and~\(\ell\) on the complexity of \textsc{Small Secluded $\Pi$}.  The
predicates $\Pi(G, S)$ that we study are
\begin{itemize}
\item $s$-$t$~\textsc{Separator}, %
\item \textsc{Feedback Vertex Set} (\textsc{FVS}), %
\item $\mathcal{F}$\textsc{-free Vertex Deletion} ($\mathcal{F}$-\textsc{FVD}) (for an arbitrary finite family
$\mathcal{F}$ of graphs), encompassing \textsc{Cluster Vertex Deletion}, for example, %
and
\item \textsc{Independent Set} (\textsc{IS})%
.
\end{itemize}
Perhaps surprisingly, we find that \textsc{Secluded $s$-$t$-Separator}
is polynomial-time solvable, whereas \textsc{Small Secluded $s$-$t$
  Separator} is NP-complete. The remaining secluded problem variants are
NP-complete. For them, roughly speaking, we prove that fixed-parameter
tractability results for $\Pi$ parameterized by the solution size
carry over to \textsc{Secluded $\Pi$} parameterized by the closed neighborhood size~$k$.  For
\textsc{Small Secluded $\Pi$} parameterized by the open neighborhood size~$\ell$, however, we
mostly obtain W[1]-hardness.  On the positive side, for \textsc{Small
  Secluded} $\mathcal{F}$-\textsc{FVD} and \textsc{Small Secluded \(s\)-\(t\)-Separator}, we prove fixed-parameter
tractability when parameterized by~$k + \ell$.

 We also study, for two integers $p < q$, the $p$-secluded version of
\textsc{$q$-Dominating Set} (\textsc{$q$-DS}): a vertex set~$S$ is a
\emph{$q$-dominating set} if every vertex of~$V\setminus S$ has
distance at most $q$ to some vertex in~$S$. By $p$\nobreakdash-secluded we
mean that we upper bound the size of the distance-$p$-neighborhood of
the solution~$S$. This problem admits a complexity
dichotomy: Whenever $2p > q$, \textsc{(Small) $p$-Secluded
  $q$-Dominating Set} is fixed-parameter tractable with respect to $k$
(with respect to $k + \ell$), but it is W[2]-hard otherwise.

Finally, we also study the possibility for effective polynomial-time
data reduction.  We observe that the polynomial-size problem kernels
for \textsc{Feedback Vertex Set} and \textsc{$\mathcal{F}$-free Vertex
  Deletion} carry over to their \textsc{Secluded} variants, but
otherwise we obtain mostly absence of polynomial-size problem kernels
unless the polynomial hierarchy collapses.

\newcommand{\pSP}{\textsc{Secluded Path}}
\newcommand{\pST}{\textsc{Secluded Steiner Tree}}
\paragraph{Related work} %
\pSP\ and \pST\ were introduced and
proved NP-complete by~\citet{Chechikjpp16}. They
obtained approximation algorithms for both problems with approximation
factors related to the maximum degree.
They also showed that \textsc{Secluded Path} is fixed-parameter
tractable with respect to the maximum vertex degree of the input
graph, whereas vertex weights lead to NP-hardness for maximum degree
four.

\citet{FominGKK16} studied the parameterized complexity of
\pSP\ and \pST, showing that both are fixed-parameter tractable even
in the vertex-weighted setting. Furthermore, they showed that
\textsc{Secluded Steiner Tree} is fixed-parameter tractable with
respect to $k - s + p$, where $p$ is the number of terminals, %
$k$ is the desired size of the
closed neighborhood of the solution, and $s$ is the size of an optimum
Steiner tree. On the other hand
this problem is co-W[1]-hard when parameterized by $k - s$ only~\cite{FominGKK16}.

The small secluded concept can be found in the context of separator %
problems in graphs~\citep{Marx06,FominGK13}.  
\citet{Marx06} introduced the \textsc{Cutting $k$ Vertices} problem, which asks, given a graph~$G=(V,E)$ and two integers~$k\geq1$ and $\ell\geq0$, whether there is a non-empty set~$S\subseteq V$ such that~$|S|\leq k$ and $|N_G(S)|= \ell$.
It follows from the work of \citet{BuiJ92}  that the problem is NP-hard and \citet{Marx06} proved that the problem is W[1]-hard with respect to~$k+\ell$.
Moreover, for the problem variant where the set $S$ is required to induce a connected subgraph in~$G$, he proved that it becomes fixed-parameter tractable when parameterized by~$k+\ell$, while staying W[1]-hard with respect to~$k$ or~$\ell$.
\citet{FominGK13} studied the variant of \textsc{Cutting $k$ Vertices} where~$|N_G(S)|\leq \ell$, thus resembling our small secluded concept.
They proved that this variant is W[1]-hard with respect to~$k$, but becomes fixed-parameter tractable when parameterized by~$\ell$. 
{As to the latter, we remark that for none of our studied small secluded problems we observed fixed-parameter tractability with respect to~$\ell$.}
Somewhat surprisingly, if it is additionally required that~$S$ has to contain a predefined vertex~$s\in V$, then this variant becomes W[1]-hard with respect to~$\ell$ while staying fixed-parameter tractable when parameterized by~$k+\ell$~\cite{FominGK13}.

The concept of \emph{isolation} states that the solution should have
few edges to the rest of the graph and was originally introduced for
finding cliques~\cite{itoio05}. Isolation was subsequently explored
also for more general definitions of dense
subgraphs~\citep{hkmn09,huffnerks15,itoio05,khmn09}. Chiefly
the constraint that the vertices in the solution shall have
maximum/minimum/average outdegree bounded by a parameter was
considered~\citep{hkmn09,itoio05,khmn09}, leading to various
parameterized tractability and hardness results. Also the overall
number of edges outgoing the solution has been studied
recently~\citep{huffnerks15}. Finding isolated vertices without constraint on their topology was already studied by \citet{DowneyEFPR03}. 

\ssp\ and \lsp\ can be seen as special cases of \fco~\cite{Bru+06,CCC06,Cai08,KS15}. Hence, we can derive some corollaries for secluded problems from results from the literature on \fco, see below.

\paragraph{Preliminary observations} Concerning the classical computational complexity, the \textsc{Small (Large) Secluded} variant of a problem is at least as hard as the nonsecluded problem, by a simple reduction in which we set $\ell =n$, where $n$ denotes the number of vertices in the graph.
Since this reduction is a parameterized reduction with respect to~$k$,
parameterized hardness results for this parameter transfer,
too. Furthermore, observe that hardness also transfers from
\textsc{Secluded $\Pi$} to \textsc{Small Secluded $\Pi$} for all
problems $\Pi$, since \textsc{Secluded $\Pi$} allows for a
parameterized Turing 
reduction to \textsc{Small Secluded $\Pi$}: try out
all $k'$ and $\ell'$ with $k=k'+\ell'$.  
\begin{observation}
\label{obs:reducibility}
\textsc{Secluded $\Pi$} parameterized by $k$ is parameterized Turing reducible to \textsc{Small Secluded $\Pi$} parameterized by $(k+\ell)$ for all predicates $\Pi$.
\end{observation}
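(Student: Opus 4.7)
The key identity to exploit is that the closed neighborhood decomposes as $|N_G[S]| = |S| + |N_G(S)|$, because $N_G[S]$ is by definition the disjoint union $S \cup N_G(S)$. Consequently, a set $S$ satisfies $\Pi(G,S)$ with $|N_G[S]| \leq k$ if and only if there exist nonnegative integers $k', \ell'$ with $k' + \ell' \leq k$ such that $|S| \leq k'$, $|N_G(S)| \leq \ell'$, and $\Pi(G,S)$ holds, i.e.\ if and only if $(G, k', \ell')$ is a yes-instance of \textsc{Small Secluded $\Pi$} for some such pair.

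The plan is therefore to describe the following Turing reduction: on input $(G, k)$, enumerate all pairs $(k', \ell') \in \N \times \N$ with $k' + \ell' \leq k$, query the \textsc{Small Secluded $\Pi$} oracle on each $(G, k', \ell')$, and answer \emph{yes} if and only if at least one query returns \emph{yes}. Correctness is immediate from the identity above (one direction sets $k' = |S|$ and $\ell' = |N_G(S)|$; the other direction takes the witness returned by the successful oracle call).

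It remains to verify the parameterized bounds. The number of queries is at most $\binom{k+2}{2} = \Oh(k^2)$, which is a function of the parameter only, and each query is constructed in polynomial time from the input. Moreover, every queried instance has parameter value $k' + \ell' \leq k$, so the parameter is bounded by a function (in fact, the identity) of the source parameter $k$. These are exactly the conditions defining a parameterized Turing reduction, completing the argument. There is no real obstacle here beyond spelling out the decomposition $|N_G[S]| = |S| + |N_G(S)|$ and checking the parameter bound.
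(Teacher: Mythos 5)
Your proposal is correct and matches the paper's argument: the paper likewise reduces by querying all splits of the budget, trying every $k'$ and $\ell'$ with $k'+\ell'=k$ (your variant with $k'+\ell'\leq k$ is an inessential difference), relying on the same decomposition $|N_G[S]|=|S|+|N_G(S)|$. The only cosmetic point is that the oracle is a decision oracle and returns no witness, but your correctness argument only needs the existence of a solution for the accepted query, so nothing is missing.
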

\noindent
Additionally, many
tractability results (in particular polynomial time solvability and
fixed-parameter tractability) transfer from \textsc{Small
  Secluded~$\Pi$} parameterized by $(k+\ell)$ to \textsc{Secluded
  $\Pi$} parameterized by $k$.
Therefore, for the \textsc{Small (Large) Secluded} variants of the problems the interesting cases are those where the base problem is tractable (deciding whether input graph~$G$ contains a vertex set~$S$ of size~$k$ that satisfies $\Pi(G, S)$) or where the size~$\ell$ of the open neighborhood is a parameter.

\ssp\ can be solved using techniques for \fco~\cite{Bru+06,Cai08},
where we are given an objective function~$\phi$ over subsets of some
universe (usually via an oracle or efficient algorithm) and we seek to
optimize it over all $k$-element subsets: Define the universe as the
vertex set in the input graph, and set $\phi(S) = \infty$ whenever
$\Pi(G, S)$ is not satisfied. Otherwise $\phi(S) = |N(S)|$. Iterating
over all $i = 1, \ldots, k$, minimizing~$\phi$ for $i$-vertex
subgraphs, and checking whether a solution $S$ with
$\phi(S) \leq \ell$ exists thus solves \ssp.

In \ssp\ no vertex of degree greater than $k + \ell$ can be part of the
solution. Hence, the above formulation makes \ssp\ amenable to the
random separation framework for \fco\ in graphs with bounded
degree~\citep{CCC06}: Roughly, we can restrict the domain of~$\phi$ to
the set of vertices of degree smaller than~$k + \ell$, effectively
removing large-degree vertices from the graph, while their
contribution to satisfying~$\Pi$ and to the neighborhoods of other
vertices is still present in~$\phi$. Applying then a fixed-parameter
algorithm with respect to the maximum degree for \fco\ given by \citet[Theorem 4]{CCC06}, we can thus derive fixed-parameter
algorithms with respect to $k + \ell$ for some \ssp\ problems. Herein,
$\Pi(G, S)$ must be computable in fixed-parameter time and fulfill
certain conditions which roughly state that subgraphs of~$G$ that
fulfill $\Pi$ and have a certain distance from each other in~$G$ can
be combined via the disjoint union into one subgraph
fulfilling~$\Pi$. For example, finding secluded subgraphs of fixed
minimum degree, subgraphs of even degree, subgraphs that induce
matchings, and subgraphs of fixed diameter is fixed-parameter tractable
with respect to $k + \ell$ through that approach. Furthermore, as a
special case we obtain that, if $\Pi(G, S)$ is fixed-parameter
tractable with respect to $k + \ell$ and each vertex subset~$S$
satisfying~$\Pi(G, S)$ is connected, then \ssp\ is fixed-parameter
tractable with respect to~$k + \ell$.

Note that a similar strategy as above does not work for \lsp, because
we cannot bound the maximum degree in the same fashion: Neighbors of a
solution vertex may be taken into the (arbitrarily large) solution to
keep the neighborhood of the solution small.

\paragraph{Organization}
We give basic definitions from graph theory and parameterized
complexity theory in \cref{sec:prelim}.  Each of the subsequent
sections is dedicated to one of the studied problems.  We study
\textsc{\(s\)-\(t\)-Separator} in \cref{sec:sts},
\textsc{$q$-Dominating Set} in \cref{sec:dom}, \textsc{\(\F\)-free
  Vertex Deletion} in \cref{sec:ffvd}, \textsc{Feedback Vertex Set} in
\cref{sec:fvs}, and \textsc{Independent Set} in \cref{sec:is}.
\cref{sec:summary} summarizes results and gives directions for future
research.

\section{Preliminaries}
\label{sec:prelim}

We use standard notions from parameterized complexity~\cite{DowneyF13,FlumG06,Nie06,CyganFKLMPPS15} 
and graph theory~\cite{Diestel10,West00}. We use $[p]$ to denote the set $\{1,\ldots,p\}$.
	
\paragraph*{Graph theory}

Let~$G = (V, E)$ be an undirected graph (all graphs in this paper are undirected). 
We denote by $V(G)$ the vertex set of~$G$ and by~$E(G)$ the edge set of~$G$. 
For a vertex set $W \subseteq V (G)$ (edge set $F \subseteq E(G)$), we denote by~$G[W]$ ($G[F]$) the subgraph of~$G$ \emph{induced} by the vertex set~$W$ (edge set $F$), respectively. We also denote by $G-W$ the graph $G[V(G)\setminus W]$.

A graph with vertex set~$\{v_0,\ldots,v_x\}$ and edge set~$\{\{v_{i-1},v_i\}\mid i\in[x]\}$ is called a path (with endpoints $v_0$ and $v_x$, also referred to as $v_0$-$v_x$~path).
The length of a path is the number of edges.
A graph with vertex set~$\{v_0,\ldots,v_x\}$ and edge set~$\{\{v_{i-1},v_i\}\mid i\in[x]\}\cup \{v_x,v_0\}$ is called a cycle (of length $x+1$).

We denote $d_G(u,v)$ the \emph{distance} between vertices~$u$ and~$v$ in~$G$, that is, the number of edges of a shortest $u$-$v$ path in $G$. For a set $V'$ of vertices and a vertex~$v \in V$ we let the distance of $v$ from $V'$ be $d_G(v,V'):=\min\{d_G(u,v) \mid u \in V'\}$. We use $N^d_G[V']=\{v \mid d_G(v,V') \le d\}$ and $N^d_G(V') = N^d_G[V']\setminus V'$ for any $d \ge 0$ (hence $N^0_G(V') = \emptyset$). We omit the index if the graph is clear from context and also use $N[V']$ for $N^1[V']$ and $N(V')$~for $N^1(V')$. If $V'=\{v\}$, then we write $N^d[v]$ in place of $N^d[\{v\}]$.
The \emph{diameter} of a graph~$G$ is the maximum distance between $v$ and $w$ over all $v, w\in V(G)$. 

A subset $V'\subseteq V(G)$ is called an $s$-$t$ separator in~$G$ for two distinct vertices~$s$ and~$t$ if there is no $s$-$t$~path in~$G-V'$.
An $s$-$t$ separator $V'$ is called minimal if for all $V''\subsetneq V'$ holds that there is an $s$-$t$~path in $G-V''$.

\paragraph*{Parameterized complexity}

Parameterized complexity has been introduced to more effectively but
optimally solve NP-hard problems: one accepts the apparently
inevitable combinatorial explosion in algorithms for NP-hard problems,
yet decouples it from the input size and limits it to one aspect of
the problem---some small \emph{parameter}.  The instances~\((x,k)\) of
a \emph{parameterized problem}~$P\subseteq \Sigma^*\times\mathbb N$ consist
of an input~\(x\) and a parameter~\(k\).  A parameterized problem~$P$
is \emph{fixed-parameter tractable (FPT)} if, for every
$(x,k)\in \Sigma^*\times \N$, it can be decided
in~$f(k)\cdot |x|^{O(1)}$ time whether~$(x,k)\in P$, where~$f$ is an
arbitrary computable function only depending on~$k$.  The
fixed-parameter tractable parameterized problems form the
parameterized complexity class FPT.

Parameterized complexity provides means to show intractability: There
is a hierarchy of parameterized complexity classes
FPT\({}\subseteq{}\)W[1]\({}\subseteq{}\)W[2]\({}\subseteq\dots\subseteq{}\)W[P],
where all inclusions are conjectured to be strict.  
A \emph{parameterized reduction} from parameterized problem $P_1$ to parameterized problem $P_2$ is 
an algorithm that
maps an instance~\((x,k)\) of~\(P_1\) to an instance~$(x',k')$
of~\(P_2\) in time~$f(k)\cdot\poly(|x|)$ such that $(x,k)$ and $(x',k')$ 
are equivalent and $k'\leq g(k)$, where
\(f\)~and~\(g\) are arbitrary computable functions only depending
on~$k$.
We say that an instance~$(x,k)$ of parameterized problem~$P_1$ 
is \emph{equivalent} with an instance~$(x',k')$ of parameterized 
problem~$P_2$ if $(x,k)\in P_1\iff(x',k')\in P_2$.
A \emph{polynomial-parameter transformation} is a parameterized reduction 
for which both $f$ and $g$ are polynomial. 
Note that such a reduction is also a polynomial-time many-one reduction 
from~$P_1$ to~$P_2$ considered in classical complexity theory (also called Karp reduction).
A parameterized problem~$P_2$ is W[\(t\)]\emph{-hard} if there
is a parameterized reduction from each problem~\(P_1\in{}\)W[\(t\)] to~\(P_2\).
No W[\(t\)]-hard problem is fixed-parameter tractable unless FPT\({}={}\)W[\(t\)].

A \emph{parameterized Turing reduction}
from parameterized problem $P_1$ to parameterized problem $P_2$ is 
an algorithm that decides whether~$(x,k)\in P_1$ in~$f(k)\cdot |x|^{O(1)}$ time if it is  
provided with access to an oracle that can decide instances~$(x',k')$
of~\(P_2\) with $k'\leq g(k)$ in constant time.
Here \(f\)~and~\(g\) are arbitrary computable functions only depending
on~$k$.

\paragraph*{Problem kernelization}
Parameterized complexity introduced the concept of
kernelization to measure the effect of polynomial-time data
reduction.  
A \emph{kernelization} is a polynomial-parameter transformation of 
a parameterized problem~$P$ to itself, transforming an instance $(x,k)$ of~$P$
into an instance~$(x',k')$ (the \emph{kernel}), such that 
$|x'|+k'\leq f(k)$ for some computable function~$f$ only depending
on~$k$.  We call $f$ the \emph{size} of the kernel.  If
$f\in k^{O(1)}$, then we say that $P$ admits a polynomial-size kernel.
A decidable parameterized problem is fixed-parameter
tractable if and only if it admits a kernel~\cite{CyganFKLMPPS15}.
 
Using the following cross-composition method, one can show that a
problem does not have polynomial-size kernels unless the
polynomial-time hierarchy collapses to the third level.  

An equivalence
relation~$\R$ on the instances of some problem~$L$ is a
\emph{polynomial equivalence relation} if
\begin{itemize}[(i)]
 \item one can decide for any two instances in time polynomial in their sizes whether they belong to the same equivalence class, and
 \item for any finite set~$S$ of instances, $\R$ partitions the set into at most~$(\max_{x \in S} |x|)^{O(1)}$ equivalence classes. %
\end{itemize}
An \emph{OR-cross-composition} of an \NP-hard problem~$L$ into a
parameterized problem~$P$ (with respect to a polynomial equivalence
relation~$\R$ on the instances of~\(L\)) is an algorithm that takes
$\ell$ $\R$-equivalent instances~$x_1,\ldots,x_\ell$ of~$L$ and
constructs in time polynomial in $\sum_{i=1}^\ell |x_i|$ an instance
$(x,k)$ of~\(P\) such that
\begin{itemize}
\item $k$ is polynomially upper-bounded in $\max_{1\leq i\leq \ell}|x_i|+\log(\ell)$ and 
\item $(x,k)\in P$ if and only if there is at least one~$\ell'\in[\ell]$ such that $x_{\ell'}\in L$. 
\end{itemize}
If an \NP-hard problem~\(L\) OR-cross-composes into a parameterized
problem~$P$, then~$P$ does not admit a polynomial-size kernel, unless $\NP\subseteq \coNP/\poly$~\cite{CyganFKLMPPS15},
which would cause a collapse of the polynomial-time hierarchy to the third
level.

\section{\boldmath\(s\)-\(t\)-Separator}\label{sec:sts}
In this section, %
we show that \textsc{Secluded $s$-$t$-Separator} is in P, while \textsc{Small Secluded $s$-$t$-Separator} is \NP-hard and $W[1]$-hard with respect to the size of the open neighborhood and with respect to the size of the solution. 
Moreover, we show that, parameterized by the sum of the sizes of the
open neighborhood and the solution, the problem is fixed-parameter
tractable yet does not allow for polynomial-size kernels.

\subsection{Secluded \(s\)-\(t\)-Separator}\label{ssec:ssts}

In this subsection we show that the following problem can be solved in polynomial time.
\decprob{Secluded $s$-$t$-Separator}
	{A graph~$G=(V,E)$, two distinct vertices $s,t\in V$, and an integer $k$.}
	{Is there an $s$-$t$~separator $S\subseteq V\setminus \{s,t\}$ such that $|N_G[S]|\leq k$?}

\begin{theorem}%
  \label{thm:sstspolytime}
  \textsc{Secluded $s$-$t$-Separator} can be solved in polynomial time.
\end{theorem}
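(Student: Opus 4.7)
The plan is to reduce \textsc{Secluded $s$-$t$-Separator} to a minimum $s$-$t$ cut computation in an auxiliary flow network $H$, which is solvable in polynomial time via standard max-flow techniques. The starting observation is that any $s$-$t$ separator $S \subseteq V \setminus \{s,t\}$ induces a canonical partition $V = A_{\text{far}} \sqcup A_{\text{near}} \sqcup S \sqcup B_{\text{near}} \sqcup B_{\text{far}}$, where $A_{\text{far}} \cup A_{\text{near}}$ is the $s$-component of $G - S$, $B_{\text{far}} \cup B_{\text{near}}$ is the $t$-component, and $A_{\text{near}}$ (respectively $B_{\text{near}}$) consists of those vertices in $A$ (respectively $B$) with at least one neighbor in $S$. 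In this setup $N_G[S] = A_{\text{near}} \cup S \cup B_{\text{near}}$, so minimizing $|N_G[S]|$ is equivalent to finding such a partition of minimum ``middle mass''.

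The key structural claim, which I will verify by case analysis, is the following: assigning each $v \in V$ a label $\sigma(v) \in \{0,1,2,3,4\}$ (in the linear order $B_{\text{far}}, B_{\text{near}}, S, A_{\text{near}}, A_{\text{far}}$) with $\sigma(s) \in \{3,4\}$ and $\sigma(t) \in \{0,1\}$ corresponds to a valid separator partition if and only if $|\sigma(u) - \sigma(v)| \leq 1$ for every edge $\{u,v\} \in E(G)$. For example, an $A_{\text{far}}$-vertex cannot be adjacent to an $S$-vertex by the very definition of $A_{\text{far}}$, ruling out the label pair $(4,2)$; the connectivity of $A$ and its separation from $B$ by $S$ account for the remaining forbidden combinations.

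Based on this characterization, I build $H$ as follows. For each $v \in V$, create four copies $v^1, v^2, v^3, v^4$ with capacity-$1$ arcs $v^i \to v^{i+1}$ and infinite-capacity back-arcs $v^{i+1} \to v^i$ for $i \in \{1,2,3\}$. For each edge $\{u,v\} \in E(G)$ and each $i \in \{2,3,4\}$, add infinite-capacity arcs $u^i \to v^{i-1}$ and $v^i \to u^{i-1}$. Take $s^3$ as the source and $t^2$ as the sink. For any finite $s^3$-$t^2$ cut, the back-arcs force the source-side of each vertex's copies to be a prefix $\{v^1, \dots, v^{\sigma(v)}\}$, so $\sigma$ is a well-defined labeling; the edge arcs enforce exactly $|\sigma(u) - \sigma(v)| \leq 1$; and the cut capacity equals $|\{v : \sigma(v) \in \{1,2,3\}\}|$. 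By the structural claim this number coincides with $|N_G[S]|$ for the separator $S = \sigma^{-1}(2)$, and conversely every separator $S$ gives a feasible $\sigma$ with cut value $|N_G[S]|$. Hence the minimum cut in $H$ equals $\min_{S} |N_G[S]|$, and the decision question is answered by a single comparison with $k$.

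The main technical obstacle is establishing this bijection cleanly and, in particular, ruling out ``wasteful'' cuts which, while formally feasible, assign $\sigma(v) \in \{1,3\}$ to vertices without a neighbor in $S$. The straightforward exchange argument suffices: relabeling such a $v$ to $\sigma(v) \in \{0,4\}$ (on the appropriate side) preserves feasibility of all edge constraints while strictly decreasing the cut capacity, so the optimum automatically avoids this situation. The remainder is bookkeeping: $H$ has $O(n)$ vertices and $O(n + m)$ arcs, and a polynomial max-flow algorithm completes the proof.
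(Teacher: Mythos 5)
Your construction is correct, but it follows a genuinely different route from the paper. The paper attaches pendant terminals $s'$, $t'$ to $s$ and $t$, takes the \emph{third power} of the resulting graph, and proves (Lemma~3.3) that $G$ has an $s$-$t$ separator $S$ with $|N[S]|\le k$ if and only if the third power has an $s'$-$t'$ vertex separator of size at most $k$; the algorithm is then a single off-the-shelf minimum vertex-cut computation. You instead encode the problem as a minimum edge cut in a layered (Ishikawa-style) network whose finite cuts correspond to $5$-level labelings $\sigma$ with $|\sigma(u)-\sigma(v)|\le 1$ along edges, $\sigma(s)\ge 3$, $\sigma(t)\le 1$, and cut value equal to the number of ``middle'' labels; your chain/back-arc and edge-arc gadgets do enforce exactly these constraints, and the two directions of the correspondence go through (any separator yields a feasible labeling of middle mass $|N[S]|$, and from any feasible labeling $S=\sigma^{-1}(2)$ is a separator with $N[S]\subseteq\sigma^{-1}(\{1,2,3\})$ since every neighbor of a label-$2$ vertex has label in $\{1,2,3\}$ --- note this containment already gives the lower bound, so your exchange argument for ``wasteful'' cuts, while valid, is not needed). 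Two small imprecisions are worth fixing but do not harm the argument: $G-S$ may have components containing neither $s$ nor $t$, so your five-block ``partition'' is not canonical --- assign such components to the $A$-side with the same near/far rule; and if $\{s,t\}\in E$ the network has an infinite source-to-sink arc, which correctly certifies that no separator exists. Comparing the two approaches: the paper's reduction is shorter to verify and delegates everything to standard vertex connectivity, while your network gives the optimum value $\min_S|N[S]|$ directly from one max-flow and adapts more readily to variants (vertex weights, or bounding $N^d[S]$ by adding more layers), at the cost of a more delicate correctness argument.
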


In order to prove~\cref{thm:sstspolytime}, we show that it is enough
to compute a separator of size~\(k\) in the third power of a
graph that is obtained from~\(G\) by adding two new terminals
to~\(s\) and~\(t\):

\begin{definition}
 \label{def:xthpower}
For $x \in \N$ the $x$-th power of a graph $G=(V,E)$ is a  
 graph \(G'=(V,E')\) where for each pair of distinct vertices $u, v \in V$ we have $\{u,v\} \in E'$ if and only if $d_{G}(u,v) \le x$.
\end{definition}

That is, the $x$-th power of a graph~$G$ is obtained by adding edges
between vertices that are at distance at most~$x$ in~$G$.  
This can be done in polynomial time.  

\begin{lemma}
 \label{lem:sepsinthirddpower}
 Let $G=(V,E)$ be an undirected graph with two distinct
 vertices~$s,t\in V$.  Let $G''$ be the graph obtained from $G$ by
 adding two vertices $s',t'$ and two edges $\{s',s\},\{t,t'\}$.  Then
 there is an $s$-$t$-separator $S$ in $G$ with $|N[S]| \le k$ if and
 only if there is an $s'$-$t'$-separator~$S'$ with~$|S'| \le k$ in the
 third power~$G'$ of~$G''$.
\end{lemma}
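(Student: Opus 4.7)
My plan is to prove the two directions via explicit constructions that convert a closed-neighborhood separator in $G$ into an ordinary separator in $G'$ of the same size, and vice versa.

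For the forward direction, I would set $S' := N_G[S] \subseteq V$, giving $|S'| \le k$ immediately, and then argue that $S'$ separates $s'$ from $t'$ in $G'$. Suppose towards contradiction that $Q = s' = v_0, v_1, \ldots, v_r = t'$ is a path in $G' - S'$. Because $s'$ and $t'$ are pendant in $G''$, the inner vertices $v_1, \ldots, v_{r-1}$ lie in $V \setminus N_G[S]$: none of them is in $S$ and none has a $G$-neighbor in $S$. Each edge of $Q$ corresponds to a $G''$-path of length at most $3$ between its endpoints, which after restricting to $V$ is a $G$-path whose intermediate vertices are neighbors of the two endpoints; the first and last edges of $Q$ in particular yield $G$-paths of length at most $2$ from $s$ to $v_1$ and from $v_{r-1}$ to $t$. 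Concatenation produces an $s$-$t$-walk in $G - S$, contradicting that $S$ separates $s$ from $t$. The degenerate case $r = 1$ forces $s$ and $t$ to be adjacent in $G$, under which no separator $S \subseteq V \setminus \{s,t\}$ can exist at all.

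For the reverse direction, given $S' \subseteq V$ with $|S'| \le k$ separating $s',t'$ in $G'$, I put $T := S'$, define the $G$-interior $T^\circ := \{v \in T : N_G(v) \subseteq T\}$, and take $S := T^\circ \setminus \{s,t\}$. Then $S \subseteq V \setminus \{s,t\}$ and, since every $v \in S$ lies in $T^\circ$, $N_G[S] \subseteq T = S'$ gives $|N_G[S]| \le k$ essentially for free. The substantive part is that $S$ separates $s$ from $t$; I prove this by contrapositive. Given an $s$-$t$-path $P = s = u_0, u_1, \ldots, u_m = t$ with $V(P) \cap S = \emptyset$, each inner vertex $u_i$ (for $1 \le i \le m-1$) lies outside $T^\circ$, so I can define $f(u_i) := u_i$ if $u_i \notin T$ and otherwise $f(u_i) := w_i$ for some $w_i \in N_G(u_i) \setminus T$. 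Every $f(u_i)$ lies in $V \setminus T$; a short case check on whether $u_i$ and $u_{i+1}$ belong to $T$ shows $d_G(f(u_i), f(u_{i+1})) \le 3$, and similarly $d_G(s, f(u_1)), d_G(f(u_{m-1}), t) \le 2$. Hence $s', f(u_1), \ldots, f(u_{m-1}), t'$ is an $s'$-$t'$-walk in $G' - T$, contradicting the choice of $S'$.

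I expect the main obstacle to lie in the reverse direction, specifically in picking $S$ correctly. The obvious choice $S := S'$ separates but may strictly inflate $|N_G[S]|$; the interior $T^\circ$ has the right neighborhood bound almost by definition, but its separator property is no longer immediate. The bypass map $f$ is the key gadget: it detours around each $T$-vertex on $P$ via a non-$T$ neighbor, and the third-power slack in $G'$ is precisely what absorbs the extra unit of distance these detours incur. Note also that even if $s$ or $t$ itself lies in $T^\circ$, no special treatment is needed, since $s'$ and $t'$ are connected in $G'$ to every vertex within $G$-distance $2$ of $s$ and $t$ respectively.
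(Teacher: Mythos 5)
Your proof is correct, and while the forward direction uses the same witness as the paper ($S' := N_G[S]$), your verification and especially your reverse direction take a genuinely different route. The paper checks that $N[S]$ separates $s'$ from $t'$ by looking at the connected component $A'$ of $s'$ in $G''-S$, taking the last path vertex inside $A'$, and deriving a distance contradiction; you instead expand each $G'$-edge of a hypothetical $s'$-$t'$-path into a short $G''$-path and splice these into an $s$-$t$-walk in $G-S$, which is a more hands-on but equally valid argument. In the reverse direction the paper sets $S := \{v \in S' \mid d_{G''}(v,A') = 2\}$, again anchored to the component $A'$, and proves both $N[S] \subseteq S'$ and the separation property via distance-to-$A'$ bookkeeping; you take the \emph{interior} $\{v \in S' \mid N_G(v) \subseteq S'\} \setminus \{s,t\}$, for which $N_G[S] \subseteq S'$ is immediate, and push all the work into the bypass map $f$, whose correctness rests exactly on the length-$3$ slack of the power graph. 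The two constructions generally yield different sets (your interior may contain $S'$-vertices far from the cut, which is harmless since only $|N_G[S]|$ is constrained, while the paper's set is tied to the actual cut location), and your version has the advantage of being local and symmetric between the two directions, at the price of a slightly longer case analysis in the splicing steps. One further point in your favor: you make explicit the degenerate case ($s$ adjacent to $t$, forcing the edge $\{s',t'\}$ in $G'$) and the convention that separators exclude their terminals, both of which the paper's argument also relies on but leaves implicit.
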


\begin{proof}

    \raproof{} Let $S$ be an $s$-$t$-separator in $G$ with $|N[S]| \le k$. Observe that $S$ then also constitutes an $s'$-$t'$-separator in $G''$ as every path in $G''$ from $s'$ must go through~$s$ and every path to $t'$ must go through~$t$. We claim that $S'= N[S]$ is an $s'$-$t'$-separator in $G'$. Suppose for contradiction that there is an $s'$-$t'$ path $P= p_0,p_1, \ldots, p_q$ in $G' - S'$. Let $A'$ be the set of vertices of the connected component of $G''-S$ containing~$s'$ and let $a$ be the largest index such that $p_a \in A'$ (note that $p_0=s' \in A'$ and $p_q=t' \notin A'$ by definition). It follows that $p_{a+1} \notin A'$ and, since $\{p_{a},p_{a+1}\} \in E'$, there is a $p_a$-$p_{a+1}$ path $P'$ in $G''$ of length at most three.
    As we have $p_a \in A'$ and $p_{a+1} \in V \setminus (A' \cup S')$ and $G[A']$ is a connected component of $G'' - S$, there must be a vertex $x$ of $S$ on $P'$. Since neither $p_a$ nor $p_{a+1}$ is in $S'=N[S]$, it follows that $d_G(p_a,x) \ge 2$ and $d_G(p_{a+1},x) \ge 2$. This contradicts $P'$ having length at most~3.
    
    \laproof{} Let $S'$ be an $s'$-$t'$-separator in $G'$ of size at most $k$. 
    Let $A'$ be the vertex set of the connected component of $G'-S'$ containing $s'$. Consider the set~$S=\{v \in S' \mid d_{G''}(v,A')=2\}$. We claim that $S$ is an $s$-$t$-separator in $G$ and, moreover, that $N[S] \subseteq S'$ and, hence, $|N[S]| \le k$.    
    As to the second part, we have $S \subseteq S'$ by definition. Suppose for contradiction that there was a vertex $u \in N(S)\setminus S'$ that is a neighbor of $v \in S$. Then, since $d_{G''}(v,A')=2$, we have $d_{G''}(u,A') \le 3$, meaning that $u$ has a neighbor in~$A'$ in~$G'$, and, thus $u$ is in $A'$. This implies that $d_{G''}(v,A')=1$, a contradiction. Hence, $|N[S]| \leq k$.
    
    It remains to show that $S$ is an $s$-$t$-separator in~$G$. For this, we prove that $S$ is an $s'$-$t'$-separator in $G''$. Since it contains neither $s$ nor $t$, it follows that it must be also an $s$-$t$-separator in $G$. Assume for contradiction that there is an $s'$-$t'$ path in $G'' -S$. This implies that $d_{(G'' -S)}(t',A')$ is well defined (and finite). Let $q:=d_{(G'' -S)}(t',A')$ and let $P$ be a corresponding shortest path in $G'' -S$. Let us denote $P=p_0, \ldots, p_q$ with $p_q =t'$ and $p_0 \in A'$. If $d_{G''}(t',A')\le 3$, then $t'$ has a neighbor in $A'$ in $G'$, and therefore it is in $A'$ contradicting our assumption that $S'$ is an $s'$-$t'$-separator in $G'$. As $t'=p_q$, we have $q > 3$. Since $d_{G''}(p_0,A')=0$, $d_{G''}(p_q,A')> 3$, and $d_{G''}(p_{i+1},A')\le d_{G''}(p_i,A')+1$ for every $i \in \{0, \ldots q-1\}$, there is an $a$ such that $d_{G''}(p_a,A')=2$. Note that each vertex $v$ with $d_{G''}(v, A') \leq 3$ is either in $A'$ or in $S'$. If $p_a$ is not in $S'$, then $p_a$ is in $A'$, contradicting our assumptions on $P$ and $q$ as $a \ge 2$. Therefore we have $d_{G''}(p_a,A')=2$ and $p_a$ is in $S'$. It follows that $p_a$ is in $S$, a contradiction. \qedhere
\end{proof}

As a minimum $s$-$t$~separator can be computed in polynomial time using standard methods, for example, based on network flows (cf.~e.g.~\cite{KleinbergT06}), \cref{thm:sstspolytime} follows immediately from~\cref{lem:sepsinthirddpower}.

\subsection{Small Secluded \(s\)-\(t\)-Separator}\label{ssec:sssts}

In this subsection we prove hardness and tractability results for the following problem.

\decprob{Small Secluded $s$-$t$-Separator}
{A graph~$G=(V,E)$, two distinct vertices $s,t\in V$, and two integers $k,\ell$.}
{Is there an $s$-$t$~separator $S\subseteq V\setminus \{s,t\}$ such that $|S|\leq k$ and $|N_G(S)|\leq \ell$?}

\noindent We show that, in contrast to \textsc{Secluded
  $s$-$t$-Separator}, the above problem is \NP-hard. Moreover, at the
same time, we show parameterized hardness with respect to $k$ and with
respect to $\ell$.  Later, however, we will show a fixed-parameter
algorithm for the combined parameter~\(k+\ell\).
	
\begin{theorem}%
\label{thm:ssstshard}
\textsc{Small Secluded $s$-$t$-Separator} is \NP-hard and $W[1]$-hard when parameterized by~$k$ or by~$\ell$.
\end{theorem}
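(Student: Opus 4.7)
The plan is to reduce from \textsc{Multicolored Clique}, which is $W[1]$-hard with respect to the number $h$ of color classes and is NP-hard. Because the paper will subsequently establish fixed-parameter tractability with respect to the combined parameter $k+\ell$ (cf.\ \cref{thm:fptsssts}), a single reduction that simultaneously bounds both $k$ and $\ell$ by $f(h)$ cannot exist; I therefore plan two reductions, one in which $k$ is bounded by $f(h)$ and $\ell$ is deliberately large, and a dual one in which $\ell$ is bounded by $f(h)$ and $k$ is large.

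For $W[1]$-hardness in $k$ (which also yields NP-hardness), I would start from an instance $(G, V_1, \dots, V_h)$ of \textsc{Multicolored Clique} and build a graph $G'$ with new terminals $s$ and $t$. For each color class $V_i$ I attach a selection gadget through which every $s$-$t$-path must pass, designed so that any separator of size at most $h$ must include exactly one ``representative'' vertex corresponding to some $v \in V_i$; alternative cuts are ruled out by surrounding auxiliary vertices with large cliques or many parallel paths, making them too expensive for the $k$-budget. To enforce that the chosen representatives form a clique, for each non-edge $\{u,v\}$ of $G$ with $u \in V_i$, $v \in V_j$ I add a bypass $s$-$t$-path that remains uncut precisely when both $u$ and $v$ are selected. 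Setting $k = h$ and $\ell$ to a trivially large polynomial in $|V(G')|$, a small secluded separator then corresponds exactly to a multicolored clique in $G$.

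For $W[1]$-hardness in $\ell$, I would use a dual construction in which the separator is allowed to be large (linear in the input), but its open neighborhood must encode the $\binom{h}{2}$ edges of a would-be clique. The idea is to embed each candidate clique-vertex inside a ``protective region'' such that isolating $s$ from $t$ forces the full region of a single chosen vertex per color class into $S$; the open boundary of $S$ then consists, by design, of edge-witness vertices, one per pair of chosen representatives, precisely when the chosen representatives are pairwise adjacent. Setting $\ell = \binom{h}{2}$ and $k$ to a sufficiently large value so that it does not constrain the solution then yields the desired hardness.

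The main obstacle in both reductions is the gadget engineering. I must rule out ``cheating'' separators that would shortcut the intended vertex-selection semantics (typically handled by large cliques or high-multiplicity parallel paths as blockers), and I must ensure that the intentionally unbounded parameter can really be set freely with no hidden implicit bound arising from the construction — this last point is what separates the two reductions and what makes them consistent with the later $k+\ell$ fixed-parameter tractability result.
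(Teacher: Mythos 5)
There is a genuine gap, and the more serious of the two problems is in your first reduction. You propose to prove hardness with respect to~$k$ by setting $k=h$ and letting $\ell$ be ``a trivially large polynomial in $|V(G')|$.'' But if $\ell$ is large enough to be vacuous, the constraint $|N_{G'}(S)|\leq\ell$ disappears and the target instance asks only whether $G'$ has an $s$-$t$~separator of size at most~$h$ --- a minimum vertex cut question that is solvable in polynomial time by network flow. A polynomial-time reduction from \textsc{Multicolored Clique} to such instances would place an NP-hard (and W[1]-hard) problem in~P, so no amount of gadget engineering can make this plan work. The hardness of \textsc{Small Secluded $s$-$t$-Separator} with respect to~$k$ is driven precisely by a nontrivial neighborhood budget~$\ell$ (unbounded in the parameter, but not vacuous); indeed the paper stresses that \textsc{Secluded $s$-$t$-Separator} itself is in~P, so any correct reduction must exploit the interaction between the two budgets rather than switch one of them off. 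Your meta-argument that ``a single reduction cannot exist'' is also too strong: what is excluded by \cref{thm:fptsssts} is only a reduction bounding \emph{both} output parameters by a function of the source parameter; a single reduction in which $k'$ depends only on $k$ and $\ell'$ only on $\ell$ of a source problem that is itself W[1]-hard for each parameter separately transfers both hardness results at once.

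This is exactly what the paper does: it gives a two-line polynomial-parameter transformation from \textsc{Cutting at Most $k$ Vertices with Terminal} (shown by Fomin, Golovach, and Korhonen to be NP-hard and W[1]-hard with respect to~$k$ and with respect to~$\ell$), adding two pendant terminals $s',t'$ adjacent to the prescribed vertex~$s$ and setting $k'=k$, $\ell'=\ell+2$, so that separating $s'$ from $t'$ forces $s\in S$ and the two problems coincide up to the two extra neighbors. By contrast, your second (``dual'') reduction for the parameter~$\ell$ is only a sketch of intent: the protective regions, the mechanism forcing an entire region of one chosen vertex per color class into~$S$, and the argument that the boundary consists exactly of $\binom{h}{2}$ edge-witness vertices precisely when the representatives are pairwise adjacent are all left unspecified, and you yourself flag the gadget engineering as the open obstacle --- but that engineering, together with ruling out cheating separators, \emph{is} the proof. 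As it stands, neither direction of the statement is established.
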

In the proof of the theorem,
we reduce from the \textsc{Cutting at Most $k$ Vertices with Terminal}~\citep{FominGK13} problem, which asks, given a graph $G=(V,E)$, a vertex $s\in V$, and two integers $k\geq1$, $\ell\geq0$, whether there is a set $S\subseteq V$ such that $s\in S$, $|S|\leq k$, and $|N_G(S)|\leq \ell$.
\citet{FominGK13} proved that \textsc{Cutting at Most $k$ Vertices with Terminal} is \NP-hard and W[1]-hard when parameterized by~$k$ or by $\ell$.

{
  \begin{proof}
  We give a polynomial-parameter transformation from \textsc{Cutting at Most $k$ Vertices with Terminal} to \textsc{Small Secluded $s$-$t$-Separator}.
  
  \emph{Construction.}
  Let $\I:=(G=(V,E),s,k,\ell)$ be an instance of \textsc{Cutting at Most $k$ Vertices with Terminal}. 
  We construct an instance $\I':=(G',s',t',k',\ell')$ of \textsc{Small Secluded $s$-$t$-Separator} equivalent to $\I$ as follows.
  To obtain $G'$ from $G$ we add to $G$ two vertices $s'$ and $t'$ and two edges $\{s',s\}$ and $\{s,t'\}$.
  Note that $G=G'-\{s',t'\}$.
  We set $k'=k$ and $\ell'=\ell+2$.
  Hence, we ask for an $s'$-$t'$~separator $S\subseteq V(G')\setminus \{s',t'\}$ in~$G'$ of size at most $k'$ with $|N_{G'}(S)|\leq \ell'$. Clearly, the construction can be carried out in polynomial time.
  
  \emph{Correctness.} We show that $\I$ is a yes-instance of~\textsc{Cutting at Most $k$ Vertices with Terminal} if and only if $\I'$ is a yes-instance of \textsc{Small Secluded $s$-$t$-Separator}.
  
  \raproof{}
  Let $\I$ be a yes-instance and let $S\subseteq V(G)$ be a solution to~$\I$, that is, $s\in S$, $|S|\leq k$, and $|N_G(S)|\leq \ell$.
  We claim that $S$ is also a solution to $\I'$.
  Since $s\in S$ and $s'$ and $t'$ are both only adjacent to~$s$, $S$ separates $s'$ from $t'$ in $G'$.
  Moreover, $|S|\leq k=k'$ and, as $N_{G'}(S)=N_G(S) \cup \{s',t'\}$, we have $|N_{G'}(S)|\leq \ell+2=\ell'$.
  Hence, $S'$ is a solution to $\I'$, and $\I'$ is a yes-instance.
  
  \laproof{}
  Let $\I'$ be a yes-instance and let $S'\subseteq V(G')\setminus \{s',t'\}$ be an $s'$-$t'$~separator in $G'$ with $|S'|\leq k'$ and $|N_{G'}(S')|\leq \ell'$.
  We claim that $S'$ is also a solution to $\I$.
  Note that $|S'|\leq k'=k$.
  Since $S'$ is an $s'$-$t'$~separator in $G'$ and $s'$ and $t'$ are both adjacent to~$s$, it follows that~$s\in S'$ and $s',t' \in N_{G'}(S')$.
  Thus, we have $s\in S'$ and $|N_G(S')|=|N_{G'-\{s',t'\}}(S')|=|N_{G'}(S')|-2\leq \ell'-2=\ell$.
  Hence, $S'$ is a solution to~$\I$ and $\I$ is a yes-instance.
  
  Note that, in the reduction, $k'$ and $\ell'$ only depend on $k$ and $\ell$, respectively.
  Since \textsc{Cutting at Most $k$ Vertices with Terminal} parameterized by $k$ or by $\ell$ is W[1]-hard~\citep{FominGK13}, it follows that \textsc{Small Secluded $s$-$t$-Separator} parameterized by~$k$ or by~$\ell$ is W[1]-hard.
   \end{proof}
}
\noindent
Note that the above reduction exploits the fact that we can increase the separator size to decrease the number of vertices in its neighborhood. In fact, most of the vertices declared to be in the separator do not serve to separate the terminals at all. To avoid this idiosyncrasy we suggest to study secluded inclusion-wise minimal separators in future work.

In the following, we prove that \textsc{Small Secluded $s$-$t$-Separator} is FPT when parameterized by~$k+\ell$.
\begin{theorem}
  \label{thm:fptsssts}
  There is a computable function $f$ such that \sssts\ is solvable in
  $f(k + \ell) \cdot m \log n$ time, where $n$ is the number of
  vertices and $m$ the number of edges in the input graph, $k$ is the
  separator size and $\ell$ its open neighborhood size.
\end{theorem}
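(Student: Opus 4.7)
The plan is to exploit the reduction in \cref{lem:sepsinthirddpower}. Set $q := k + \ell$ and observe that any solution $S$ to \sssts\ has $|N_G[S]| \le q$, so by \cref{lem:sepsinthirddpower} the set $N_G[S]$ is an $s'$-$t'$-separator of size at most $q$ in the third power $G'$ of the augmented graph $G''$. Conversely, starting from any $s'$-$t'$-separator $S'$ of size at most $q$ in $G'$, the construction in the proof of \cref{lem:sepsinthirddpower} recovers a candidate $s$-$t$-separator $S := \{v \in S' : d_{G''}(v, A') = 2\}$ in $G$, where $A'$ is the connected component of $s'$ in $G' - S'$, and it furnishes the containment $N_G[S] \subseteq S'$, whence $|S| + |N_G(S)| \le |S'|$. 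A valid solution therefore corresponds to such an $S'$ whose associated split additionally satisfies $|S| \le k$ and $|N_G(S)| \le \ell$.

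The algorithm then enumerates a bounded family of candidate separators in $G'$ and verifies each one. Concretely, I would invoke Marx's important-separator theorem on $G'$ with terminals $\{s', t'\}$ and parameter $q$, yielding at most $4^q$ important $s'$-$t'$-separators of size at most $q$ in time $f(q) \cdot m \log n$; since $G'$ can have $\Theta(n^2)$ edges, it is not materialized explicitly, but adjacency in $G'$ reduces to a distance-at-most-$3$ query in $G''$, and the $O(q)$ max-flow and BFS subroutines required for the enumeration can be implemented directly on $G''$ within the claimed time. For each candidate $S'$, a single BFS in $G'' - S'$ computes $A'$ and hence $S$ and $N_G(S)$ in $O(m)$ time, and the checks $|S| \le k$ and $|N_G(S)| \le \ell$ are then immediate.

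The main obstacle is to show that every valid solution is captured by \emph{some} important separator in $G'$. I expect this to follow from a shifting argument: given a valid solution $S$, one iteratively replaces $S' = N_G[S]$ by a dominating $s'$-$t'$-separator (with a larger $s'$-side and no larger size) until an important separator is reached; the technical heart is to check that the derived sizes $|S|$ and $|N_G(S)|$ do not escape the thresholds $k$ and $\ell$ during this process, which I would argue by starting from a solution that is ``closest to $s$'' so that the shifts can only move vertices from $N_G(S)$ to $S$ or conversely in a controlled manner. Should this analysis fall short, the enumeration can be supplemented by an explicit branching step that guesses, for each enumerated separator, which of its vertices belong to $S$ versus $N_G(S)$; this adds only a multiplicative $2^q$ factor and still fits the claimed $f(k+\ell) \cdot m \log n$ running time.
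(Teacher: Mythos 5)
Your overall route is genuinely different from the paper's (which combines random separation via universal sets, contraction of the green components to make the solution an inclusion-minimal separator, the treewidth reduction of \cref{lem:twred}, and Courcelle's theorem via \cref{ssts:courcelle}), but as written it has a genuine gap exactly at the step you flag yourself: you never establish that some important $s'$-$t'$-separator of size at most $q=k+\ell$ in the third power $G'$ captures a valid solution. The difficulty is structural, not technical. Pushing a separator towards $t'$ (the essence of importance and of your ``shifting argument'') preserves only the \emph{total} budget: if $Y$ dominates $N_G[S]$ for a valid solution $S$, then the set recovered from $Y$ satisfies $|S_Y|+|N_G(S_Y)|\le |Y|\le k+\ell$, but there is no control over how this total splits into the two individual budgets $|S_Y|\le k$ and $|N_G(S_Y)|\le\ell$; a dominating separator can correspond, e.g., to a two-vertex cut with a tiny neighborhood while the valid solution was a single vertex with a larger neighborhood. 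This split is precisely what makes the problem hard (it is W[1]-hard in $k$ and in $\ell$ separately, by \cref{thm:ssstshard}), whereas important separators are a Pareto frontier only with respect to total size versus the reachable $s'$-side and are completely insensitive to the partition of the budget. A maximality argument over valid solutions does not rescue this: choosing a valid $S$ with inclusion-maximal $s'$-side only shows that any dominating separator recovers an \emph{invalid} set, which is exactly the bad case.

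Your fallback does not close this gap. Branching over which vertices of an enumerated separator $S'$ belong to $S$ versus $N_G(S)$ only fixes the recovery/splitting of a separator you have already enumerated; it does nothing to guarantee that any enumerated important separator \emph{contains} $N_G[S]$ for some valid $S$, which is the missing claim. (There is also a secondary issue: even when $N_G[S]$ itself is important, your canonical recovery $S:=\{v\in S' : d_{G''}(v,A')=2\}$ need not return $S$, so the verification step already needs the $2^q$ branching; but again that is minor compared to the capture claim.) The paper's proof avoids this obstacle entirely by never optimizing over separators in $G'$: it colour-codes so that the solution is green and its neighborhood red, contracts green components so that the solution becomes an inclusion-minimal separator, and then uses treewidth reduction plus an MSO formula that can speak about $|S|\le k$ and $|N(S)|\le\ell$ \emph{separately}. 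If you want to pursue your route, the statement you must prove is: whenever a valid $(k,\ell)$-solution exists, some important $s'$-$t'$-separator of size at most $k+\ell$ in $G'$ contains the closed neighborhood of some valid solution; at present you give neither a proof nor convincing evidence of this, and the heavy machinery the paper deploys is a strong hint that no such clean statement is available.
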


\noindent
To prove \cref{thm:fptsssts}, we exploit that \sssts{} is efficiently solvable
on graphs of small treewidth:
\begin{lemma}\label[lemma]{ssts:courcelle}
  There is a computable function~$f$ such that \sssts{} can be solved in \(f(k,\ell,w)\cdot n\)~time on graphs of treewidth~\(w\).
\end{lemma}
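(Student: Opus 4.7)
The plan is to apply Courcelle's theorem by expressing \sssts{} as a monadic second-order (MSO) formula whose size depends only on $k$ and $\ell$. Since for each fixed value of $k + \ell$ the formula is a concrete (constant-size) MSO sentence, Courcelle's theorem yields an algorithm running in $g(|\varphi|, w) \cdot n$ time for some computable $g$, and absorbing the dependence on $|\varphi|$ into $k$ and $\ell$ produces the claimed $f(k,\ell,w)\cdot n$ bound.

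The formula I would write has three ingredients. First, the candidate solution is an existentially quantified vertex set variable $S$, together with $k$ individual vertex variables $v_1,\dots,v_k$ and $\ell$ individual vertex variables $u_1,\dots,u_\ell$ used purely to certify the size constraints: we require $s\notin S$, $t\notin S$, that $\forall x\,(x\in S \to \bigvee_{i=1}^k x = v_i)$, and that every vertex $y$ adjacent to some vertex of $S$ but not itself in $S$ satisfies $\bigvee_{j=1}^{\ell} y = u_j$. This enforces $|S|\le k$ and $|N_G(S)|\le \ell$ using a formula of size $O(k+\ell)$. Second, to encode that $S$ is an $s$-$t$-separator I would existentially quantify a vertex set $A$ that plays the role of the component of $s$ in $G - S$: $s \in A$, $t \notin A$, $A \cap S = \emptyset$, and there is no edge $\{x,y\}$ with $x\in A$, $y\notin A$, $y\notin S$. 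This is the usual MSO expression of ``$S$ separates $s$ from $t$'' and, together with the size constraints, completely captures the problem.

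Having obtained a single MSO sentence $\varphi_{k,\ell}$ of size $O(k+\ell)$ whose models on $G$ (with $s$ and $t$ marked by unary predicates) are exactly the solutions to the \sssts{} instance, I would invoke Courcelle's theorem, which states that checking whether a graph of treewidth $w$ satisfies an MSO sentence $\varphi$ can be done in $h(|\varphi|, w)\cdot n$ time for some computable $h$. Setting $f(k,\ell,w) := h(O(k+\ell), w)$ gives the desired running time.

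The only potentially delicate point is making sure the separator condition is phrased so that it stays in MSO$_1$ (quantifying only over vertex sets), which the component formulation above achieves; no quantification over edge sets or paths of unbounded length is needed. The size bounds are straightforward because we hard-code $k$ and $\ell$ vertex variables into the formula itself, so there is no subtlety in counting. Thus the proof is essentially an MSO-formalization exercise followed by a direct appeal to Courcelle's theorem.
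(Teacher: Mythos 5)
Your proposal is correct and follows essentially the same route as the paper: encode the instance as an MSO sentence whose size depends only on $k$ and $\ell$ (a separator condition via an existentially quantified vertex set containing $s$ and avoiding $t$, plus cardinality bounds hard-coded with $O(k+\ell)$ vertex variables) and then invoke Courcelle's theorem on graphs of treewidth $w$. The only differences are cosmetic---the paper bounds $|S|$ and $|N(S)|$ via a pigeonhole-style predicate with universally quantified variables and phrases separation with two sets $A,B$, whereas you use existential covering variables and a single set $A$---and both yield the claimed $f(k,\ell,w)\cdot n$ running time.
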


\begin{proof}
  By Courcelle's theorem~\cite{Cou90}, it is enough to express the
  existence of an \(s\)-\(t\)-separator of cardinality~\(k\) and open
  neighborhood size~\(\ell\) in a formula of the monadic second-order logic of
  graphs such that this formula has a size depending only on~\(k\) and~\(\ell\).\footnote{We refrained from giving a tedious dynamic program
    on a tree-decomposition and opted for Courcelle's theorem instead
    for clarity and because a singly exponential running time is out
    of reach due to the use of the treewidth reduction technique that we
    will apply later.}
  More specifically, using Courcelle's theorem, we want to verify
  whether our graph~\(G=(V,E)\) of treewidth~\(w\) satisfies
  \[
    \exists S\subseteq V:\text{separates}(s,S,t)\wedge \text{card-le}(S, k)\wedge \forall N\subseteq V:(\text{open-nh}(N,S)\Rightarrow \text{card-le}(N,\ell))
  \]
  where the predicate
  \[
    \text{card-le}(S,k)\equiv\forall v_1\forall v_2\dots\forall v_{k+1}:\Bigl(\bigwedge_{i=1}^{k+1}v_i\in S\Bigr)\Rightarrow \Bigl(\bigvee_{i=1}^{k}\bigvee_{j=i+1}^{k+1}v_i=v_j\Bigr)
  \]
  is true if and only if \(|S|\leq k\) and this predicate has a size depending only on~\(k\),
  \[
    \text{open-nh}(N,S)\equiv \forall v\in V: v\in N\Leftrightarrow (v\notin S\wedge\exists u\in S:\text{adj}(u,v))
  \]
  is a constant-size predicate that is true if and only if~\(N=N(S)\), and, finally,
  \begin{align*}
    \text{separates}(s,S,t)&\equiv \exists A \subseteq V\exists B \subseteq V:(\forall v:v \in A \vee v \in B \vee v \in S) \wedge s \in A \wedge t\in B \\
    &\wedge (\forall u \forall v:\text{adj}(u,v) \Rightarrow (u \notin A \vee v \notin B) \wedge (u \notin B \vee v \notin A))
  \end{align*}
  is a constant-size predicate that is true if and only if \(S\)~separates~\(s\) from~\(t\).  Namely it is true if and only if the vertex set $V \setminus S$ can be divided into two sets $A$ and $B$ such that $s\in A$, $t \in B$, and there are no edges between the sets $A$ and $B$.
  \end{proof}

  \noindent
In view of \cref{ssts:courcelle}, to prove \cref{thm:fptsssts} it is enough to reduce \sssts{}
to the case where the input graph has treewidth bounded by some function in~\(k+\ell\).
To this end, we
use the following treewidth reduction
technique.
Let $G$ be a graph and $W \subseteq V(G)$. The
\emph{torso $\torso(G, W)$} is a graph obtained from $G[W]$ by taking
each connected component~$C$ in $G \setminus W$ and making $N(C)$ into
a clique in $G[W]$. The following lemma is implied by
\citeauthor{MOR13}'s Lemma~2.11~\cite{MOR13}:
\begin{lemma}[\citet{MOR13}]\label{lem:twred}
  Let $s, t$ be two vertices of a graph~$G$ and let
  $r \in \mathbb{N}$. Let $W'$ be the union of all inclusion-wise
  minimal $s$-$t$ separators of size at most~$r$. Then, there is an
  $f(r) \cdot (n + m)$-time algorithm that returns a set
  $W \supseteq W' \cup \{s, t\}$ such that $\torso(G, W)$ has treewidth upper
  bounded by some function depending only on~$r$.
\end{lemma}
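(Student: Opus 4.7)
The plan is to follow the strategy of \citet{MOR13} in three stages: (a) compute $W'$ itself using flow, (b) extend $W' \cup \{s,t\}$ to a set $W$ whose every "gap" — that is, every connected component $C$ of $G - W$ — has boundary of bounded size, and (c) exhibit a tree decomposition of $\torso(G,W)$ of width bounded by a function of $r$, read off from the laminar tree-like structure of all minimal $s$-$t$ separators of size at most $r$.

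First I would compute $W'$ by deciding, for each vertex $v$, whether $v$ lies on some minimal $s$-$t$ separator of size at most $r$. By the standard vertex-splitting trick plus Menger's theorem, this reduces to testing whether the minimum $s$-$t$ cut forced through $v$ has size at most $r$; each such test is a max-flow computation capped after $r+1$ augmenting paths, at cost $O(r(n+m))$, and reusing flows across vertex tests brings the total down to $f(r)\cdot(n+m)$.

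Next I would extend $W_0 := W' \cup \{s,t\}$ iteratively to a set $W$ for which every boundary $N(C)$ of a component $C$ of $G-W$ has size at most $r$. Whenever some component $C$ of $G - W_i$ violates this, the classical closest-minimum-cut argument (uncrossing plus submodularity of the cut function) applied inside $G[C \cup N(C)]$ produces a minimal $s$-$t$ separator of $G$ of size at most $r$ that splits $C$ nontrivially; its vertices are added to form $W_{i+1}$. Each added vertex still lies in some minimal $s$-$t$ separator of size at most $r$, so the invariant $W_i \subseteq W'$ is preserved throughout, and because there are boundedly many essentially distinct small minimal separators refining each other, the process terminates after $f(r)$ rounds within the overall $f(r)\cdot(n+m)$ budget.

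Finally, the tree decomposition of $\torso(G,W)$ comes from the fact that any two minimal $s$-$t$ separators of size at most $r$ are either nested or parallel, so they organise into a laminar tree. Bags are formed as the union of a bounded number of neighbouring separators on this tree; each bag has size $O(r)$, the neighbourhood $N(C)$ of every component $C$ of $G-W$ is contained in a single bag (which absorbs the clique that $\torso$ places on $N(C)$), and every other edge of $\torso(G,W)$ joins vertices inside a common or adjacent separator in the laminar family, so the usual tree-decomposition properties are satisfied. The hardest step will be Step 2: formalising the submodularity-based refinement and bounding its number of rounds independently of $n$, while simultaneously maintaining the laminar structure needed for Step 3. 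This is the technical heart of the MOR13 argument and rests on Marx's theory of important separators combined with standard uncrossing of $s$-$t$ cuts.
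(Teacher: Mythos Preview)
The paper does not prove this lemma at all; it is quoted as a black box, explicitly attributed to \citet{MOR13} (their Lemma~2.11), and then applied inside the proof of \cref{thm:fptsssts}. So there is no ``paper's own proof'' to compare against beyond the citation.

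That said, your sketch has two genuine gaps that would prevent it from being turned into a proof.

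First, Step~(b) is vacuous as written. You start with $W_0 = W' \cup \{s,t\}$, where $W'$ already contains \emph{every} vertex lying on any minimal $s$-$t$ separator of size at most~$r$. You then propose to add, in each round, the vertices of some further minimal $s$-$t$ separator of size at most~$r$, and you explicitly claim ``each added vertex still lies in some minimal $s$-$t$ separator of size at most~$r$, so the invariant $W_i \subseteq W'$ is preserved.'' But if the added vertices are in $W'$, they were in $W_0$ to begin with and nothing is being added; the process never moves. (The inclusion is also written the wrong way round: you want $W' \subseteq W_i$, not $W_i \subseteq W'$.) Either the refinement is trivial, or the vertices you add are \emph{not} all in $W'$, in which case you need a different argument for why the process terminates and why the final set still has bounded-treewidth torso.

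Second, the structural claim in Step~(c) is false. Minimal $s$-$t$ separators of bounded size are not laminar as vertex sets; two such separators can genuinely cross. What \emph{is} true is that the family of minimal $s$-$t$ separators carries a lattice order induced by the ``closer to $s$'' relation, and submodularity lets you uncross pairs, but this is not the same as the original family being laminar, and you cannot directly read off a tree decomposition from a laminar tree that does not exist. The actual MOR13 argument goes through a more careful analysis (via important separators and the structure of the separator lattice) rather than a laminar decomposition; your sketch would need to be substantially reworked along those lines.
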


\noindent\looseness=-1
The algorithm for \cref{thm:fptsssts} now proceeds in three stages.
First, we use random separation~\cite{CCC06}: we randomly color all
vertices red or green.  With sufficiently high probability, all separator vertices
will be colored green and all their neighbors red.  We will henceforth
assume to have such a coloring.  Second, we prepare a graph of low
treewidth containing our solution. This we do by first contracting
each green component in the graph to make our desired separator an
inclusion-wise minimal one. Relying on this, we then use the
treewidth reduction technique (\cref{lem:twred}) to compute a vertex
subset containing the solution which induces a graph of small
treewidth. In this vertex subset, however, neighbors of some $s$-$t$
separators may be missing, possibly introducing false
positives. Furthermore, paths between $s$ and $t$ may be missing,
introducing false positives as well. Hence, we make some modifications
to reintroduce the corresponding information. To the resulting graph,
we then apply \cref{ssts:courcelle}.

\begin{proof}[Proof of \cref{thm:fptsssts}]
    In the following we simultaneously describe the algorithm and gather
  arguments to show that, if there is a solution, then the algorithm
  will accept. For this purpose, assume that the input is a
  yes-instance and, among all separators with at most $k$ vertices and
  open neighborhood of size at most~$\ell$, let $K$ be a separator
  with minimum number of vertices. We will argue in the end that, if
  the input is a no-instance, then the algorithm will reject.

  \smallskip
  \emph{Random separation stage}: We present this stage already in
  derandomized form. We use \citeauthor{NSS95}'s construction of
  universal sets~\cite{NSS95}. An \emph{$(n, d)$-universal set} $\F$
  over some universe $U$ of size $n$ is a family of subsets of $U$
  such that, for each~$A \subseteq U$ of size exactly $d$, the family
  $\{A \cap S \mid S \in \F\}$ contains $A$ and all subsets
  of~$A$. \citet{NSS95} showed that an $(n, d)$-universal set of size
  $2^dd^{O(\log d)}\log n$ can be computed in~$2^dd^{O(\log d)}n \log n$~time.
  In the random separation stage, we compute an
  $(n-2, k + \ell)$-universal set~$\F$ over the vertex set $V \setminus \{s,t\}$ of the
  input graph. We iterate over all the sets $F \in \F$ and in each
  such iteration perform all the algorithm steps described later. Call
  the vertices in~$F$ \emph{green} and the vertices in $V \setminus F$
  \emph{red}. Note that, in one of these iterations, we have that each
  vertex in $K$ is green and each vertex in $N(K)$ is red, by the
  definition of universal sets. Call such an iteration
  \emph{good} for $K$. Our aim is to show that, if we are in a good
  iteration for $K$, we will accept.

  \smallskip \emph{Building a graph of low treewidth}:
  We call an inclusion-wise maximal connected set of green vertices of~\(G\) a
  \emph{green component} of~\(G\).  We
  construct a graph~$G_1$ from~$G$ by contracting each green component
  into one vertex.
We remove all self-loops introduced in the process. In the following,
  we call \emph{component vertices} the vertices in~$G_1$
  corresponding to green components in~$G$. Let $m: V(G) \to V(G_1)$ be the function mapping 
  each red vertex to itself and each green vertex to the corresponding component vertex.

  \looseness=-1
  If we are in a good iteration for~$K$, each green component in~$G$ is
  either contained in~$K$ or disjoint from it.
  We claim that then $K$~induces an inclusion-wise minimal $s$-$t$
  separator~$K'$ in $G_1$, where $K'=m(K)$, that is, we obtain $K'$ from~$K$ by replacing
  each green component in~$K$ by the corresponding component vertex
  in~$G_1$.  Clearly, $K'$ is an $s$-$t$ separator in~\(G_1\). Observe that $K'$ is an
  independent set in~$G_1$. Now, for the sake of a contradiction,
  assume that there is an inclusion-wise minimal separator $K'_1$
  strictly contained in~$K'$. Since $K'$ is an independent set,
  $|N_{G_1}(K'_1)| \leq |N_{G_1}(K')| \leq \ell$. 
  Hence, $m^{-1}(K'_1)$ is an $s$-$t$ separator in~\(G\) with 
  $|m^{-1}(K'_1)| < |K| \le k$ and $|N_G(m^{-1}(K'_1))|=|N_{G_1}(K'_1)| \le \ell$.
  This is a contradiction to the
  fact that~$K$ has minimum number of vertices among 
  $s$-$t$ separators in~\(G\) with at most $k$ vertices and neighborhood size at most~$\ell$.
  Thus, indeed, $K'$~is inclusion-wise minimal.

  We next use treewidth reduction (\cref{lem:twred}) to compute a
  vertex set $W$ containing the vertices~$s,t$ and all inclusion-wise
  minimal $s$-$t$ separators in $G_1$ of size at most~$k$ such that
  $\torso(G_1, W)$ has treewidth upper bounded by a function of~$k$
  alone. Clearly, $K' \subseteq W$. In the following, we construct a
  graph~$G_4$ from $G_1[W]$ so that we can simply compute
  a small secluded \(s\)-\(t\) separator in $G_4$ instead of the input
  graph. (We bypass $\torso(G_1, W)$ and use it only to show that
  $G_4$ has bounded treewidth.) The construction of~$G_4$ is as
  follows.
  \begin{enumerate}[1)]
  \item Initially, let $G_2 = G_1[W]$.\label{step:init}
  \item Denote $B=N_G(m^{-1}(W))$. Add $B$ to $G_2$ and for each $b \in B$ and each $v \in N_G(b)$ if $m(v)$ is in $W$, then make $b$ adjacent to $m(v)$.\label{step:nbs1}
  \item Let $\mathcal{C}$ be the set of connected components of the graph $G_1 - W$. For each $C \in \mathcal{C}$ introduce the new vertex $c_C$ to $G_2$ and make it adjacent to each vertex in $m^{-1}(C) \cap B$. For reference later on, let $D=\{c_C\mid C \in \mathcal{C}\}$.\label{step:comps}
  \item Let us now distinguish the following subsets of $W$.
  Recall that $F$ is the set of green vertices and $m(F)$ is the set of component vertices.
  The set $R= W \setminus m(F)$ is the set of red vertices in $W$.
  The set $H= \{a \in W\mid |m^{-1}(a)| > k\}$ is the set of component vertices in $W$ corresponding to huge green components that cannot take part in the solution.
  Finally, the set $A=\{a \in m(F)\cap W\mid |m^{-1}(a)| \le k\}=W \setminus (R \cup H)$ is the set of component vertices in $W$ allowed in the solution.
  
   Construct $G_3$ from $G_2$ as follows. 
   Replace each component vertex $a$ in $A$ by a clique on vertex set $m^{-1}(a)$ and make each vertex of the clique adjacent to each neighbor of $a$ in $G_2$. Note that, as the neighbors of a green component are all red, none of the neighbors are being replaced in the current step. Let $A'=m^{-1}(A) \subseteq F$ be the set of vertices newly (re-)introduced to $G_3$.\label{step:uncontract}
  \item Finally, to obtain $G_4$ from $G_3$, for each vertex in $R \cup H\cup B \cup D$ introduce $k + \ell + 1$ new
    degree-one neighbors.\label{step:nbs2}
   \end{enumerate}
   We claim that $G_4$
  satisfies the following properties.
  \begin{enumerate}[(i)]
  \item If we are in a good iteration for $K$, then $K \subseteq V(G_4)$, $|N_{G_4}(K)| \leq \ell$, and $K$ is an
    $s$-$t$ separator in~$G_4$.\label{prop:preserve sol}
  \item If there is an $s$-$t$ separator $L$ in $G_4$ of size at most
    $k$ and $|N_{G_4}(L)| \leq \ell$, then there is also such a
    separator in $G$. (Regardless of whether we are in a good
    iteration.)\label{prop:back}
  \item There is a function~$f$ such that the treewidth of $G_4$ is
    bounded from above by $f(k)$.\label{prop:tw}
  \end{enumerate}
  
  Let us prove Property~(\ref{prop:preserve sol}). Assume that we are in a good iteration for $K$. Recall that~$K' \subseteq W$. Hence,~$K'$ is present in~$G_2$ after Step~\ref{step:init} and, clearly, is not modified in Steps~\ref{step:nbs1} and~\ref{step:comps}. Furthermore, each component vertex in~$K'$
  corresponds to a green connected component in~$G$ of size at
  most~$k$. Hence, in Step~\ref{step:uncontract}, $K'$ is replaced
  by~$K$ and thus $K \subseteq V(G_4)$. 
  Furthermore, $N_{G_4}(K) \subseteq W \cup B$. By construction we have $N_{G_4}(K) \cap W = N_{G_3}(K) \cap W = N_{G_2}(K') \cap W=N_{G_1}(K') \cap W$. Thus, for each vertex $r \in N_{G_4}(K) \cap W$ there is a vertex $v \in K$ such that $r$ and $v$ are adjacent in $G$ by the construction of $G_1$. Also, by construction, we have $N_{G_4}(K) \cap B = N_{G_3}(K) \cap B = N_{G_2}(K') \cap B$ and for each $b \in N_{G_4}(K) \cap B$ there is a vertex $v \in K$ such that $b$ and $v$ are adjacent in $G$ by the construction of $G_2$.
  Hence, $N_{G_4}(K) \subseteq N_G(K)$ and $|N_{G_4}(K)| \leq \ell$. 
  
  To prove Property~(\ref{prop:preserve sol}) it remains to show that $K$ is an $s$-$t$ separator in~$G_4$. %
  Suppose for contradiction that there is an $s$-$t$ path $P$ in $G_4- K$.
  Since $G_4$ and $G_3$ only differ in degree-one vertices, $P$ is an $s$-$t$ path also in~$G_3$.
  If $P$ uses vertices in $A'=\{m^{-1}(a) \mid a \in A\}$, then we can replace a part of $P$ between the first and last vertex in $m^{-1}(a)$ by $a$.
  This way we obtain an $s$-$t$ path $P_2$ in $G_2 - K'$. 
  If $P_2$ uses vertices outside~$W$ then let~$Q$ be the set of vertices appearing on $P_2$ between two consecutive vertices $a_1$ and $a_2$ of~$W$ and observe that $Q \subseteq B \cup D$.   
  We claim that we can remove $Q$ from $P_2$ and replace it by a path~$P'$ between~$a_1$ and~$a_2$ in $G_1$ and, hence, obtain a path~$P_1$ in~$G_1$. 
  To see this, note that the vertices $b \in B$ are only connected to vertices in $W \cap N_{G_1}(C)$ and to $c_C$ for a connected component $C$ of $G_1- W$ with~$m(b) \in C$. 
  Furthermore,~$c_C \in D$ is only connected to vertices in $B$. It follows that there is a connected component~$C$ of $G_1- W$ such that $Q \subseteq \{c_C\}\cup \{b \in B\mid m(b) \in C\}$ and $\{a_1,a_2\} \subseteq N_{G_1}(C)$. Hence, there is the claimed path~$P'$ between~$a_1, a_2$ in~$G_1$ and we can replace $Q$ in $P_2$ with~$P'$. 
  Doing this with each part of $P_2$ outside~$W$, we obtain an $s$-$t$~path in $G_1 - K'$, a contradiction.
  Hence, $K$ is an $s$-$t$ separator in $G_4$.

  We now prove Property~(\ref{prop:back}). We may assume that $L$ does not contain
  any vertex that has been introduced in
  Step~\ref{step:nbs2} as they could be removed from~$L$, yielding another separator that fits the definition of~$L$.
  Furthermore, $L$ does not contain any vertex in $R \cup H\cup B \cup D$, because $L$'s neighborhood has size at most~$\ell$.
  Hence, we have $L \subseteq A' \subseteq V(G)$.
  
  Suppose for contradiction that there is an $s$-$t$ path $P$ in $G - L$. As shown above, no (part of a) green component of size more than~$k$ in~$G$ is contained in~$L$. Furthermore, since green components of size at most~$k$ have been contracted in~$G_1$, and then replaced by a clique of vertices with identical neighborhood in~$G_3$, we may assume without loss of generality, that each green component of~$G$ is either completely contained in~$L$ or disjoint from it.
  Hence, contracting green parts of path~$P$ we obtain an $s$-$t$ path $P_1$ in $G_1 - m(L)$.
  For each part of $P_1$ using vertices outside of $W$ we do the following. Obviously, each such part must stay within one connected component $C$ of $G_1 - W$. Let $b_1$ and $b_2$ be the first and last vertex of $C$ on $P_1$. We replace the part of $P_1$ between $b_1$ and $b_2$ by the vertex $c_C$. Doing this with each part of $P_1$ outside of $W$, we obtain an $s$-$t$ path $P_2$ in $G_2 - m(L)$.
  If we replace each vertex $a \in A$ on $P_2$ with an arbitrary vertex of~$m^{-1}(a)$, we obtain an $s$-$t$ path $P_3$ in $G_3 - L$. 
  This path is also present in $G_4 - L$, contradicting $L$ being an $s$-$t$ separator in $G_4$.
  Hence, $L$ is an $s$-$t$ separator
  in~$G$. 

  To prove Property~(\ref{prop:back}) it remains to show that the neighborhood of~$L$ in~$G$ has size at most~$\ell$. We show that $N_{G}(L) \subseteq N_{G_4}(L)$. Recall that $L \subseteq A'$. 
  Let $b$ be a vertex of $N_{G}(L)$ and $v$ one of its neighbors in~$L$. 
  Obviously $m(v)$ is in $A \subseteq W$ and $|m^{-1}(m(v))| \le k$.
  If $b$ is in $F$, then $m(b)=m(v)$ and $b$ is in $N_{G_4}(L)$, since $m^{-1}(m(v))$ is a clique in $G_4$.
  Otherwise, $m(b)=b$. 
  If $b$ is in $W$, then $b$ is a neighbor of $m(v)$ in $G_1[W]$, yielding that $b$ is in $N_{G_4}(L)$.
  Finally, if $b$ is in $G_1 - W$, then $b$ is in $B$, $b$ is adjacent to $m(v)$ in $G_2$, meaning that $b$ is in $N_{G_4}(L)$.
  Hence, $N_{G}(L) \subseteq N_{G_4}(L)$, implying that $|N_{G}(L)| \le \ell$.
  
  We now prove Property~(\ref{prop:tw}). By Lemma~\ref{lem:twred}
  there exists a tree decomposition~$T$ for $\torso(G_1, W)$ and a
  function~$f'$ such that $T$ has width~$f'(k)$. We show how to
  adapt~$T$ into a tree decomposition for~$G_4$
  without increasing its width too much. Clearly, $T$ is also a tree decomposition for $G_1[W]$.
  Recall that the neighborhood of each connected component $C$ of $G_1 - W$ is a
  clique in $\torso(G_1, W)$. Hence, this neighborhood $N_{G_1}(C)$ occurs in one
  bag $Q$ of~$T$. Thus, to incorporate~$c_C$ and its adjacent edges into~$T$, we make a copy~$Q'$ of bag $Q$,
  add~$c_C$ to $Q'$ and make $Q'$ a child of $Q$ in $T$.
  Then for each $b \in B$ adjacent to~$c_C$ in~$G_2$ we add a child bag of $Q'$ with vertex set $Q' \cup \{b\}$.
  Since all neighbors of $b$ in $G_2$ are in $Q'$ and vertices of $D$ only have neighbors in $B$, we obtain a valid tree decomposition  for $G_2$. This increases the width of $T$ by at most~$2$.
  In Step~\ref{step:uncontract} we can replace each component vertex by
  the at most~$k$ vertices in the corresponding green component
  in~$G$, increasing the width of~$T$ by at most a factor
  of~$k$ and obtaining a tree decomposition for $G_3$. 
  Then for each vertex $v$ to receive degree-one neighbors in Step~\ref{step:nbs2} we find an arbitrary bag $Q$ of $T$ containing $v$ and for each of the degree one neighbors of $v$ we create a child bag of $Q$ containing $v$ and the degree one vertex.
  Since each of these bags is of size 2, this does not increase the width of $T$. 
  Making these changes to~$T$, we obtain a tree decomposition for $G_4$ of width at
  most~$f'(k) \cdot k + 2$, as required.

  \smallskip\emph{Conclusion:} Let us show that the above properties
  together with \cref{ssts:courcelle} conclude the proof. In each
  iteration of the random separation phase, we build the graph $G_4$
  and the tree decomposition for it and run the algorithm of
  \cref{ssts:courcelle}.  If any of the iterations accepts, then we
  accept, otherwise we reject.  On one hand, if $K$ is an $s$-$t$
  separator in $G$ of size at most $k$ and $|N_{G}(K)| \leq \ell$,
  then at least one iteration is good for it and the algorithm will
  accept the input by Property~(\ref{prop:preserve sol}). On the other
  hand, if the algorithm accepts, then there is an $s$-$t$ separator
  $L$ in $G_4$ of size at most $k$ and $|N_{G_4}(L)| \leq \ell$ and by
  Property~(\ref{prop:back}) $L$ is such a separator in $G$. Thus, the
  algorithm accepts if and only if we face a yes-instance.
  
  The running time can be upper bounded as follows. Computing the
  universal set~$\F$ takes
  $2^{k + \ell}(k + \ell)^{O(\log (k + \ell))}n \log n$ time. For each
  of the $2^{k + \ell}(k + \ell)^{\log (k + \ell)}\log n$ elements we
  make one iteration, each of which takes the following computation
  time.

  Graph $G_1$ can be constructed in linear time, by first
  finding the green components in linear time. Then, we compute the
  neighborhood for all green components simultaneously by scanning
  over the adjacency lists of each contained vertex, and marking it as
  neighbor for that component. Finally, we scan over all vertices,
  finding the neighborhoods of the corresponding component
  vertices.

  Next, Step~1 can clearly be computed in linear time and
  simultaneously we can find in the same fashion as before the
  connected components in $G_1 - W$, their neighborhoods in $W$, as
  well as the set~$B$ together with, for each vertex in $B$ its
  connected component in~$G_1 - W$. Using this information, we
  can compute Step~2 and~3 in linear time.

  Since each vertex in~$A$ corresponds to a connected component of
  size at most~$k$, Step~4 can be computed in $O(k(n + m))$
  time. Step~5 can be computed in $O((k + \ell)(n + m))$ time, because
  each of the sets~$R, H, B, D$ can be computed in linear time
  alongside the previous computations.

  Hence, for each element of the universal set we have made a constant
  number of $O((k + \ell)(n + m))$-time computations so far. After
  that, we find a tree decomposition for~$G_4$ in $f(k)\cdot (n + m)$
  time, using a fixed-parameter constant-factor approximation
  algorithm~\cite{BodDDFLP16} (recall that $G_4$ has treewidth bounded
  by a function of~$k$ by Property~(\ref{prop:tw})). Finally,
  \cref{ssts:courcelle} again uses fixed-parameter linear time with
  respect to~$k + \ell$. Thus, each of the steps has
  $f(k + \ell) \cdot (n + m)$ running time.
\end{proof}

\noindent
In contrast to \cref{thm:fptsssts}, we show that,  under standard assumptions, the problem does not admit a polynomial-size kernel with respect to the parameter~\(k+\ell\):
\begin{theorem}
  \label{thm:nopksssts}
Unless $\nopk$, \textsc{Small Secluded $s$-$t$-Separator} parameterized by $k+\ell$ does not admit a polynomial kernel.
\end{theorem}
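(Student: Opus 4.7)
I would prove \cref{thm:nopksssts} via an OR-cross-composition into \sssts{} parameterized by $k + \ell$. As source problem I would take \textsc{Cutting at Most $k$ Vertices with Terminal}, which is \NP-hard by \citet{FominGK13} and powered the hardness reduction in \cref{thm:ssstshard}. As polynomial equivalence relation $\R$ I would declare two instances equivalent iff they agree on the triple $(n, k, \ell)$ (and put all syntactically malformed instances in a single class), which clearly gives only polynomially many classes in the instance size.

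\textbf{Construction sketch.} Given $t$ $\R$-equivalent instances $(G_1, s_1, k, \ell), \dots, (G_t, s_t, k, \ell)$, I would build one composed instance $(G^*, s^*, t^*, k^*, \ell^*)$ as follows. Take the disjoint union of $G_1, \dots, G_t$, introduce fresh terminals $s^*, t^*$, and attach a \emph{selector gadget} of size $O(\log t)$ organised in $L := \lceil \log_2 t \rceil$ layers. The selector is designed so that instance $i$, with binary encoding $(c_{i,1}, \dots, c_{i,L})$, can be \emph{opened} (i.e.\ have all paths through $G_i$ destroyed together with all paths through $G_j$ for $j \neq i$) precisely by deleting the $L$ selector vertices that encode $\mathrm{bin}(i)$. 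The route from $s^*$ to each $s_i$ is then attached to $G_i$ through one of its endpoints in the style of \cref{thm:ssstshard}, so that any cheap $s^*$-$t^*$ separator must (i) commit to some index $i^*$ by cutting the $L$ selector vertices of $\mathrm{bin}(i^*)$ and (ii) supplement this with an $s_{i^*}$-containing set $S_{i^*} \subseteq V(G_{i^*})$ satisfying $|S_{i^*}| \leq k$, $|N_{G_{i^*}}(S_{i^*})| \leq \ell$. Setting $k^* := k + L$ and $\ell^* := \ell + O(L)$ yields parameters polynomial in $\max_i |x_i| + \log t$, as required for a cross-composition.

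\textbf{Correctness.} The forward direction is routine: from a yes-witness $S_{i^*}$ for instance $i^*$ assemble $S^* := S_{i^*} \cup \{\text{the $L$ selector vertices encoding } i^*\}$ and verify the size and open neighborhood bounds. The converse requires showing that any secluded $s^*$-$t^*$ separator $S^*$ in $G^*$ must (a) cut exactly one vertex per selector layer, thereby encoding some $i^*$, and (b) have its intersection with $V(G_{i^*})$ form a valid witness for $(G_{i^*}, s_{i^*}, k, \ell)$; this is driven by the fact that $N_{G^*}(S^*) \leq \ell^*$ rules out cutting too many selector vertices or touching multiple $G_j$'s nontrivially.

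\textbf{Main obstacle.} The principal difficulty is keeping the open neighborhood $\ell^*$ at $\ell + O(L)$, because a naïve selector (e.g.\ a ``book'' with $t$ pages sharing a spine of $L$ vertices) has $\Omega(t)$ neighbors per selector vertex. I would resolve this by routing each instance's path to the selector via its own private chain of intermediate vertices and by making the selector a path-like gadget in which each vertex has only $O(1)$ neighbors in the selector itself; the private routing vertices then absorb the per-instance adjacencies without inflating the neighborhood of a hypothetical cut within the selector. Verifying simultaneously that such a layered routing (i) transfers AND-like cuts out of the $t$--fold union into OR-like behaviour with respect to the parameter $k+\ell$, and (ii) stays within the stated bounds on $k^*$ and $\ell^*$, is where the technical work concentrates.
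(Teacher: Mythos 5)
There is a genuine gap, and it sits exactly where you located the ``technical work'': the open-neighborhood budget of the selector. In your parallel layout, the only way a budget of $k+L$ deletions can block the $t-1$ non-selected instances is that each deleted selector vertex lies on the access paths of $\Omega(t)$ instances. But if those access paths are private (as your fix requires), then each such instance contributes at least one \emph{distinct} private vertex adjacent to that selector vertex, so the selector vertex has $\Omega(t)$ neighbors outside the selector, and deleting it forces $|N_{G^*}(S^*)|=\Omega(t)$. Making the selector ``path-like with $O(1)$ neighbors inside the selector'' does not help: the blow-up comes from the per-instance attachment edges, which cannot be removed without destroying the blocking function of the vertex, and cannot be shared without giving up privacy of the routing. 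So $\ell^*$ cannot be kept at $\ell+O(\log t)$ (nor at any bound polynomial in $\max_i|x_i|+\log t$ independent of $t$), and the cross-composition as sketched does not satisfy the parameter condition. A secondary issue is direction (a) of your converse: even granting a working selector, you would still have to prevent cheap separators that ignore the selector entirely (e.g.\ cutting near $s^*$ or inside several $G_j$'s), which needs further guarding gadgets; but the neighborhood blow-up is the step that actually fails.

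The paper avoids the selector altogether by exploiting that $s$-$t$ separation composes OR-wise under \emph{series} (not parallel) composition: it OR-cross-composes \sssts{} into itself, concatenating the instances by identifying $t_q$ with $s_{q+1}$, so that any single block's separator already separates $s_1$ from $t_p$, while conversely a minimal separator of size at most $k$ must live inside one block (the junction vertices are made undeletable by attaching $k+\ell+1$ pendant neighbors, since deleting a junction would force an open neighborhood larger than $\ell$). The parameters are not increased at all ($k^*=k$, $\ell^*=\ell$), which trivially satisfies the cross-composition bound. If you want to salvage your approach, the lesson is to look for a composition in which selecting an instance is achieved by the \emph{structure of the problem} (here: a chain only needs one link cut) rather than by deleting high-degree selector vertices, because in a problem whose parameter includes the open neighborhood, any deletable ``hub'' is self-defeating.
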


{
  \begin{proof}
  We apply an OR-cross-composition with input problem \textsc{Small Secluded $s$-$t$-Separator}~(SSstS) to \textsc{Small Secluded $s$-$t$-Separator} parameterized by $k+\ell$.

  Let $(\I_q=(G_q,s_q,t_q,k_q,\ell_q))_{q=1,\ldots,p}$ be instances of \textsc{Small Secluded $s$-$t$-Separator}.
  We assume that $\min\{k_i,\ell_i\} \geq 0$ and $\max\{k_i,\ell_i\} \leq |V(G_i)|$ for each $i\in[p]$, because otherwise, we can decide $\I_i$ in polynomial time.
  By virtue of choosing a corresponding polynomial equivalence relation, we also assume that 
  \begin{inparaenum}[(i)]
   \item $0\leq k_i,\ell_i\leq |V(G_i)|$ for all~$i\in[p]$, and
   \item $k_i=k_j$ and $\ell_i=\ell_j$ for all~$i,j\in[p]$.
  \end{inparaenum}
  We OR-cross-compose into one instance $\I=(G,s_1,t_p,k,\ell)$ of \textsc{Small Secluded $s$-$t$-Separator}, with $k:=k_i$ and $\ell:=\ell_i$ for any~$i\in[p]$.

  \emph{Construction}:
  Initially, let $G$ be the disjoint union of $G_1,\ldots,G_p$, that is $G=G_1\cup\cdots\cup G_p$.
  Identify each $t_q$ with $s_{q+1}$ for all $q\in[p-1]$.
  Call the obtained vertex $st_q$, $q\in[p-1]$.
  For each $st_q$, $q\in[p-1]$, add $k+\ell+1$ vertices $x_1^{q},\ldots,x_{k+\ell+1}^q$, and connect them by an edge with $st_q$. 
  We will also refer to $s_1$ as $s$ and as $st_0$ and to $t_p$ as $t$ and as $st_p$.
  This finishes the construction of the instance.
  
  \emph{Correctness}:
  We claim that $\I$ is a \yes-instance if and only if there exists $q\in[p]$ such that $\I_q$ is a \yes-instance.

  \laproof{}
  Let $q\in[p]$ such that $\I_q$ is a yes-instance of SSstS.
  Let $S\subseteq V(G_q)\setminus\{s_q,t_q\}$ be an $s_q$-$t_q$-separator of size at most~$k$ in $G_q$ such that $N_{G_q}(S)\leq \ell$.
  By construction of $G$, for all $V'\subseteq V(G_r)\setminus \{s_r,t_r\}$, $r\in [p]$, it holds that $N_{G}(V')=N_{G_r}(V')$.
  Moreover, since $G$ is obtained by a ``serial'' composition of $\I_1,\ldots,\I_p$, every $s$-$t$ path in $G$ contains $s=st_0,st_1,\ldots,st_{p-1},st_p=t$ in this order.
  Hence, any vertex set $V'\subseteq V(G_r)\setminus \{s_r,t_r\}$ separating $st_{r-1}$ and $st_{r}$ in~$G$, $r\in[p]$, also separates $s$ and $t$ in~$G$.
  Altogether, $S$ is an $s$-$t$-separator in $G$ of size at most~$k$ with $N_{G}(S)=N_{G_q}(S)\leq \ell$.
  Thus, $\I$ is a \yes-instance of~SSstS.

  \raproof{}
  Let $S\subseteq V(G)\setminus\{s,t\}$ be a minimal $s$-$t$~separator (of size at most~$k$) such that $N_G(S)\leq \ell$.
  Observe that $S\cap \{st_1,\ldots,st_{p-1}\}=\emptyset$, since every $st_r$, $r\in[p-1]$, is incident to at least $k+\ell+1$ vertices.
  Moreover, no vertex $x^i_j$, $i\in[p-1]$, $j\in[k+\ell+1]$ is contained in $S$ since $S$ is chosen as minimal and $x^i_j$ is of degree one and hence not participating in any minimal $s$-$t$~separator in~$G$.
  We claim that there exists a $q\in[p]$ with $S\subseteq V(G_q)\setminus\{s_q,t_q\}$.
  Following the argumentation above, since $S$ separates $s$ and $t$, there is at least one $r\in [p]$ such that $S$ separates $st_{r-1}$ and $st_{r}$. 
  Let $q$ be the minimal index such that $S$ separates $st_{q-1}$ and $st_{q}$
  Suppose there is an $r\neq q$ such that $S\cap V(G_r)\setminus\{s_r,t_r\}\neq \emptyset$.
  Since $S$ separates $s$ from $st_{q}$, $S'=S\cap (V(G_q)\setminus\{s_q,t_q\})$ is an $s$-$t$-separator of $G$ of size smaller than~$S$.
  This contradicts the minimality of~$S$.
  Hence, $S\subseteq V(G_q)\setminus\{s_q,t_q\}$.
  Since $S$ separates $st_{q-1}$ and $st_{q}$ in~$G$, it follows that $S$ separates $s_q$ and $t_q$ in $G_q$. 
  Together with $|S|\leq k$ and $N_{G_q}(S)=N_{G}(S)$ implying $|N_{G}(S)|\leq \ell$, it follows that $\I_q$ is a \yes-instance.
  \end{proof}
}

\section{\boldmath$q$-Dominating Set}\label{sec:dom}
In this section, for two constants $p,q\in\mathbb{N}$ with $0 \le p<q$, we study the following problems:%
\decprob{$p$-Secluded $q$-Dominating Set}
	{A graph~$G=(V,E)$ and an integer $k$.}
	{Is there a set $S\subseteq V$ such that $V = N^q_G[S]$ and $|N^p_G[S]|\leq k$?}
\decprob{Small $p$-Secluded $q$-Dominating Set}
	{A graph~$G=(V,E)$ and two integers $k, \ell$.}
	{Is there a set $S\subseteq V$ such that $V = N^q_G[S]$, $|S| \le k$, and $|N^p_G(S)|\leq \ell$?}

\noindent
For $p=0$, the size restrictions in both cases boil down to $|S| \le k$. This is the well-known case of \textsc{$q$-Dominating Set} (also known as $q$-\textsc{Center}) which is \NP-hard and \W{2}-hard with respect to $k$ (see~\citet{LokshtanovMPRS13}, for example). Therefore, for the rest of the section we focus on the case $p > 0$. 
Additionally, by a simple reduction from \textsc{$q$-Dominating Set}, letting $\ell=|V(G)|$, we arrive at the following observation.

\begin{observation}
 For any $0 <p<q$, \textsc{Small $p$-Secluded $q$-Dominating Set} is \W{2}-hard with respect to~$k$.
\end{observation}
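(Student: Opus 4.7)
The plan is to give a trivial parameterized reduction from \textsc{$q$-Dominating Set} (the $p=0$ case), which was just noted above the observation to be \W{2}-hard with respect to the solution size~$k$. Given an instance $(G,k)$ of \textsc{$q$-Dominating Set}, I would construct the instance $(G,k,\ell)$ of \textsc{Small $p$-Secluded $q$-Dominating Set} by simply setting $\ell:=|V(G)|$ and leaving the graph and the bound~$k$ unchanged.

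The correctness is immediate: any candidate set $S\subseteq V(G)$ satisfies $N_G^p(S)\subseteq V(G)$, so the constraint $|N_G^p(S)|\le \ell$ is fulfilled vacuously. Hence the remaining requirements on $S$ are exactly $V=N_G^q[S]$ and $|S|\le k$, which is precisely the specification of a $q$-dominating set of size at most~$k$ in~$G$. Thus $(G,k,\ell)$ is a yes-instance of \textsc{Small $p$-Secluded $q$-Dominating Set} if and only if $(G,k)$ is a yes-instance of \textsc{$q$-Dominating Set}.

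The reduction runs in linear time, and the parameter~$k$ is preserved identically, so it is a (polynomial-parameter) parameterized reduction with respect to~$k$. Since \textsc{$q$-Dominating Set} is \W{2}-hard with respect to~$k$, the \W{2}-hardness transfers to \textsc{Small $p$-Secluded $q$-Dominating Set} parameterized by~$k$ for every $0<p<q$. There is no real obstacle here; the only thing worth double-checking is that padding $\ell$ up to $|V(G)|$ indeed renders the neighborhood constraint inactive, which holds because $N_G^p(S)$ is always a subset of~$V(G)$ regardless of the value of~$p$.
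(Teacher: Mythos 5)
Your proof is correct and matches the paper's approach exactly: the paper obtains this observation by the same trivial reduction from \textsc{$q$-Dominating Set}, setting $\ell=|V(G)|$ so that the neighborhood constraint $|N^p_G(S)|\leq\ell$ is vacuous while the parameter~$k$ is preserved. Nothing further is needed.
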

Furthermore, let us make the following observation which we will use at multiple occasions in the proofs.
\begin{observation}
\label{obs:cliques}
Let $(G=(V, E), k)$ be a yes-instance of \textsc{$p$-Secluded $q$-Dominating Set} and $S\subseteq V$ be a $q$-dominating set of $G$ with $|N^p_G[S]|\leq k$. If $G$ contains a clique $C$ with $|C| > k$, then for any $v\in V$ and any $c \in C$ such that $d_G(v, c) < p$ we have $v\notin S$.
\end{observation}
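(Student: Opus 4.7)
The plan is a short contradiction argument based on the triangle inequality applied within the clique. Suppose, toward a contradiction, that there exist $v \in S$ and $c \in C$ with $d_G(v,c) < p$. I would first observe that, since $C$ is a clique, every $c' \in C$ satisfies $d_G(c,c') \leq 1$. Combined with the triangle inequality, this yields $d_G(v,c') \leq d_G(v,c) + d_G(c,c') \leq (p-1) + 1 = p$ for every $c' \in C$, so that $C \subseteq N^p_G[v]$.

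Since $v \in S$, we have $N^p_G[v] \subseteq N^p_G[S]$, and therefore $C \subseteq N^p_G[S]$. But then $|N^p_G[S]| \geq |C| > k$, contradicting the assumption that $|N^p_G[S]| \leq k$. Hence no such $v$ can lie in $S$.

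There is essentially no obstacle here; the only subtlety is the off-by-one handling of the strict inequality $d_G(v,c) < p$, which is exactly what allows one extra ``hop'' through the clique to reach every vertex of $C$ within distance $p$. The statement itself is written precisely to make this bookkeeping work out.
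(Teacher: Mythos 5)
Your proof is correct and matches the paper's argument exactly: both assume $v\in S$ with $d_G(v,c)<p$, use the clique to conclude $d_G(v,c')\leq p$ for all $c'\in C$, and derive $|N^p_G[S]|\geq |C|>k$, contradicting seclusion. No differences worth noting.
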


  \begin{proof}
  Let $S$ be a $p$-secluded $q$-dominating set of $G$ and $C$ be a clique in $G$ with $|C|>k$. Assume for contradiction that there is a vertex $v \in S$ and a vertex $c \in C$ with $d_G(v, c)<p$. Then we have that $d_G(v, c')\leq p$ for any~$c'\in C$, which implies that $C \subseteq N^p_G[S]$ and hence $|N^p_G[S]|>k$, a contradiction.
  \end{proof}
We now go on to show \NP-hardness and \W{2}-hardness with respect to $k$ for \textsc{$p$-Secluded $q$-Dominating Set}. 
We reduce from the following problem:
\decprob{Set Cover}
{A finite universe $U$, a family $F \subseteq 2^U$, and an integer $k$.}
{Is there a subset $X\subseteq F$ such that $|X| \le k$ and $\bigcup_{x\in X} x=U$?}

\noindent
We write $\bigcup X$ short for $\bigcup_{x\in X} x$. It is known that \textsc{Set Cover} is \NP-complete, \W{2}-hard with respect to $k$, and admits no polynomial kernel with respect to $|F|$, unless \nopk~\citep{dom2014kernelization}.
	
\begin{theorem}%
\label{thm:q-dom-NPh}
 For any $0 <p<q$, \textsc{$p$-Secluded $q$-Dominating Set} is \NP-hard. Moreover, it does not admit a polynomial kernel with respect to $k$, unless \nopk.
\end{theorem}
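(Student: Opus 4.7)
The plan is to establish both claims through a single polynomial parameter transformation from \textsc{Set Cover} (parameterized by $|F|$) to \textsc{$p$-Secluded $q$-Dominating Set} (parameterized by $k$). Since \textsc{Set Cover} is $\NP$-hard and admits no polynomial kernel with respect to $|F|$ unless $\nopk$, any such transformation whose output parameter $k'$ is polynomially bounded in $|F|$ simultaneously yields the $\NP$-hardness and the kernelization lower bound.

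Given an instance $(U, F, \hat k)$ of \textsc{Set Cover}, I will build the graph $G$ as follows. For each set $f \in F$, introduce a set vertex $v_f$; for each element $u \in U$, introduce an element vertex $v_u$ embedded in a private clique $C_u$ of size $k' + 1$, which by \cref{obs:cliques} forbids $v_u$ and every vertex within distance less than $p$ of $C_u$ from belonging to any valid solution. For each $f \in F$, attach a path $v_f, x_1^f, \dots, x_{p-1}^f, w_f$ of length $p$ and add the edge $\{w_f, v_u\}$ whenever $u \in f$; this makes $d_G(v_f, v_u) = p + 1$ exactly when $u \in f$, placing each element vertex just beyond $N^p[v_f]$ yet within dominating range $q$. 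Add an apex vertex $a$ adjacent to every $v_f$ so that non-selected set vertices become $q$-dominated through a length-two detour. Finally, to force any solution to consist exclusively of set vertices, augment every $f$ with $k' + 1$ common dummy elements (each lying in every set); this gives every interior path vertex $x_i^f$ and every $w_f$ at least $k' + 1$ element neighbors within distance $p$, so that placing any such vertex into $S$ would already push $|N^p[S]|$ above $k'$. I will take $k' := c \cdot |F| \cdot p$ for a sufficiently large constant $c$ depending on $p$ and $q$.

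For the forward direction, a set cover $X$ of size at most $\hat k$ yields $S := \{v_f : f \in X\}$; a direct check shows that every $v_u$ is dominated at distance $p + 1 \le q$, that all other vertices are dominated within two edges through $a$, and that $|N^p[S]|$ consists only of $S$, the $p$-long paths from selected $v_f$'s, the apex, and a bounded neighborhood of $a$, totalling $O(|F| \cdot p) \le k'$ vertices. Conversely, starting from a valid solution $S$, \cref{obs:cliques} together with the dummy-element counting argument confines $S$ to the set vertices, and the $q$-dominating property forces $\{f : v_f \in S\}$ to be a set cover of size $|S| \le \hat k$.

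The principal technical obstacle will be calibrating the gadget lengths and the apex adjacencies uniformly for every pair $0 < p < q$, and in particular in the tight regime $q = p + 1$: there the apex must remain close enough to each $v_f$ to $q$-dominate non-selected set vertices, while the private cliques $C_u$ must remain reachable from $S$ within $q$ edges without being drawn into $N^p[S]$. Resolving this tension typically requires an auxiliary adjacency between clique vertices and the shared $w_f$'s so that every $c \in C_u$ ends up at distance exactly $q$ from the nearest selected $v_f$; this is precisely what keeps $|N^p[S]| = O(|F| \cdot p)$ independent of $|U|$, and hence yields the required polynomial bound on $k'$.
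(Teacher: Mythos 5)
There is a genuine gap, and in fact two. First, your budget does not encode the \textsc{Set Cover} bound at all: you set $k' = c\cdot|F|\cdot p$, and neither the graph nor $k'$ depends on $\hat k$, so two \textsc{Set Cover} instances differing only in $\hat k$ map to the same output and the reduction cannot be correct. Concretely, nothing in your reverse direction forces $|S|\le \hat k$: with budget $\Theta(|F|p)$ one may simply place \emph{all} $|F|$ set vertices into $S$ (their combined $p$-ball has size $O(|F|p)$, since the dummies sit at distance $p+1$), which $q$-dominates every element lying in some set irrespective of whether a cover of size $\hat k$ exists. The paper's reduction lives or dies by exactly this bookkeeping: a hub $r$ adjacent to all set vertices, together with a pendant path of length $q$ to a vertex $s$ that only the hub region can dominate, is forced into the solution, and the budget is set to $k'=p+1+|F|\cdot p+k$; since $N^p[r]$ already pays for all of $V_F$ and the path interiors, each additional chosen set vertex $v_A$ costs exactly one extra vertex (namely $v'_A$), which is what caps the number of chosen sets at $k$. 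Your construction has no forcing gadget and no such tight accounting, so the backward direction's conclusion ``a set cover of size $|S|\le\hat k$'' is unsupported.

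Second, the distance calibration fails in the regime $q\ge p+2$ (the regime you flag, $q=p+1$, is actually the easy one). With an apex $a$ adjacent to every $v_f$ and common dummy elements adjacent to every $w_f$, all inter-gadget distances collapse: $d(a,v_u)=p+2$ and $d(v_f,v_u)\le p+3$ for every $f$ and every $u$ (via $a$, or via the length-$2$ shortcut $w_f$--dummy--$w_{f'}$ that the shared dummies create). Hence for $q\ge p+2$ the apex, and for $q\ge p+3$ any single set vertex, $q$-dominates all element vertices (and for $q$ slightly larger also their private cliques), so the constructed instance is a yes-instance regardless of the \textsc{Set Cover} instance. The paper avoids this precisely with the gadget you omit: each element $u$ gets an appended path of length $q-p-2$ whose every vertex carries a size-$k'$ clique, so the far cliques lie at distance exactly $q$ from $v_A$ with $u\in A$ but at distance $>q$ from $r$, from the $s$-path, and from every $v_B$ with $u\notin B$ (\cref{obs:cliques} keeps all intermediate vertices out of $S$). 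Your suggested ``auxiliary adjacency between clique vertices and the shared $w_f$'s'' goes the wrong way—it shortens distances and only makes spurious domination easier. Finally, a smaller but real slip in your forward direction: for $q=p+1$ the set $S=\{v_f: f\in X\}$ leaves the vertices $w_{f'}$ of unselected sets (at distance $p+2$ from any selected $v_f$) undominated, so the hub/apex must itself be placed in the solution, as $r$ is in the paper's argument.
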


  \begin{proof}
  We give a polynomial-parameter transformation from \textsc{Set Cover} parameterized by~$|F|$. Let $(U,F,k)$ be an instance of \textsc{Set Cover}. Without loss of generality we assume that $0 \le k <|F|$.  
 
  \emph{Construction.} Let $k'=p+1+|F|\cdot p +k$. We construct the graph~$G$ of a \textsc{$p$-Secluded $q$-Dominating Set} instance~$(G,k')$ as follows.
  We start the construction %
  by taking two vertices $s$ and $r$ and three vertex sets $V_U = \{u \mid u \in U\}$, $V_F = \{v_A \mid A \in F\}$, and $V'_F = \{v'_A \mid A \in F\}$. We connect vertex $r$ with vertex $s$ by a path of length exactly~$q$. For each $A \in F$ we connect vertices~$v_A$ and $r$ by an edge and vertices~$v_A$ and $v'_A$ by a path $t_0^A, t_1^A, \ldots, t_{p}^A$ of length exactly~$p$, where $t_0^A = v_A$ and $t_{p}^A = v'_A$. 
Let us denote $T$ the set of vertices on all these paths (excluding the endpoints).
  All introduced paths are internally disjoint and the internal vertices are all new. We connect a vertex~$v'_A \in V'_F$ with a vertex $u \in V_U$ by an edge if and only if $u \in A$. Furthermore, 
  we introduce a clique $C_{U}$ of size $k'$ and make all its vertices adjacent to each vertex in $V'_{F} \cup V_U$.

   If $q-p \ge 2$, then for each $u \in U$ we create a path $b_0^{u}, b_1^{u}, \ldots, b_{q-p-2}^{u}$ of length exactly $q-p-2$ such that~$b_0^{u}=u$ and the other vertices are new. 
   Let us denote the set of all new vertices introduced in this step $B$.
  Furthermore, in this case, for each $h \in \{0, \ldots, q-p-2\}$ we introduce a clique~$C_h^{u}$ of size $k'$ and make all its vertices adjacent to vertex $b_h^{u}$. %
   Let us denote the set of all vertices introduced in this step $C$.
  If $q-p=1$ we do not introduce any new vertices.
  
  \emph{Correctness}: We show that the original instance of \textsc{Set Cover} is a yes-instance if and only if the constructed instance of \textsc{$p$-Secluded $q$-Dominating Set} is a yes-instance.
  
  \raproof{} 
  Let us next show that if there is an $X\subseteq F$ such that $|X| \le k$ and $\bigcup X=U$, then there is a subset~$S$ of $V(G)$ such that $V(G) = N^q_{G}[S]$, and $|N^p_{G}[S]|\leq k'$. Indeed, take a set $S=\{r\}\cup \{v_A \mid A \in X\}$. 
  It is a routine to check that all the following vertices are in distance at most $q$ to $r$ and, hence, they are in $N^q_{G}[S]$: vertex $s$ and the vertices on the path between $r$ and $s$, vertices in $V_F$ and $V'_F$ and vertices on the paths between them. For each $u \in V_U$ there is $A \in X$ such that $u \in A$ (as $\bigcup X=U$). As $u$ is adjacent to $v'_A$, $u$ is in distance (at most) $p+1 \le q$ to $v_A$ and, thus, it is in $N^q_{G}[S]$. The same holds for vertices of $C_U$. Moreover, for $q-p \ge 2$, each vertex in $B \cup C$ is in distance at most $q-p-1$ to some vertex in $V_U$. Therefore, all vertices in $B \cup C$ are in $N^q_{G}[S]$ and $V(G)=N^q_{G}[S]$.

  It remains to bound $|N^p_{G}[S]|$. The $p$-neighborhood of $r$ is formed by $p$ vertices on the path to $s$, the vertices in $V_F$, and $p-1$ vertices on each path between $V_F$ and $V'_F$, hence the closed $p$-neighborhood is of size exactly $p+1+|F|\cdot p$. For each vertex in $V_F$, its $p$-neighborhood is only formed by vertices already in the neighborhood of $r$, except for the corresponding vertex in $V'_F$. We get that $|N^p_{G}[S]| \le p+1+|F|\cdot p+k$.

  \laproof{} 
  For the other direction, let us assume that there is a subset $S$ of $V(G)$ such that $V(G) = N^q_{G}[S]$ and~$|N^p_{G}[S]|\leq k'$. 

  We first observe that $S \cap (B \cup C \cup T \cup V'_F \cup V_U \cup C_U)= \emptyset$. Notice that for every $b^u_h\in B$ we have that~$\{b^u_h\}\cup C^u_h$ is a clique of size $k'+1$ and hence, by \Cref{obs:cliques}, for $v \in B \cup C$ we have that $v \notin S$. Similarly, for any $v'_A\in V'_F$ we have that $\{v'_A\}\cup C_U$ is a clique of size $k'+1$ and furthermore we have that for each $t\in T$ there is a $v'_A \in V'_F$ such that $d_G(t, v'_A)<p$. It follows by \Cref{obs:cliques} that for every~$v \in T\cup V'_F \cup C_U$ we have that $v\notin S$. For any $u\in V_U$ we have that $\{u\}\cup C_U$ is a clique of size $k'+1$. %
  \Cref{obs:cliques} yields that $S\cap V_U = \emptyset$.
  Hence, $S$ may only contain vertices in $V_F$ and the vertices on the path between $s$ and $r$ (including endpoints).

  Now suppose $r \notin S$. Then $S$ contains one vertex of $V_F$ for each vertex of $V'_F$ (since these are the only vertices to dominate $V'_F$ and each vertex of $V_F$ can dominate one vertex of $V'_F$) and at least one vertex of the path between $s$ and $r$ (as no vertex in $V_F$ dominates $s$). It follows that $|N^p_G[S]| \geq p+1+|F|\cdot(p+1) > k'$, a contradiction. Hence $S$ must contain $r$. Note that for $S=\{r\}$ we have that $|N^p_G[S]| = p+1+|F|\cdot p = k' - k$. Adding a vertex from $V_F$ to $S$ increases the size of the closed $p$-neighborhood of $S$ by one. It follows that $S$ may contain at most $k$ vertices from $V_F$.

  Now, let us denote $X=\{A \mid v_A \in S\}$. We claim that $X$ is a set cover for $U$ of size at most $k$, that is, $|X| \le k$ and $\bigcup X=U$.
  We already know that $|X| \le k$. Suppose $X$ is not a set cover. Then there is a~$u \in U \setminus \bigcup X$. If $q-p=1$, then, since $S$ is a solution, there must be a vertex in $S$ in distance at most~$q$ from~$u$ in~$G$. As $V_U \cap S=\emptyset$, this vertex must be in distance at most $q-1=p$ from $v'_A$ for some~$A \in F$, $u \in A$. If $q-p \ge 2$, then, since $S$ is a solution, there must be a vertex in $S$ in distance at most $q$ from any vertex in $C_{q-p-2}^{u}$ in~$G$. As $(T \cup V_U \cup C_U\cup B \cup C) \cap S=\emptyset$, this vertex must also be in distance at most~$q-(q-p-1)-1=p$ from~$v'_A$ for some $A \in F$, $u \in A$. In both cases it follows that the vertex of $S$ is in $V_F$ and it is the vertex $v_A$. But then $A \in X$ and $u \in \bigcup X$---a contradiction.

  Since the construction can be performed in polynomial time and $k'$ is linear in $|F|$, the results follows.
  \end{proof}

\noindent
In the following, we observe that the parameterized complexity of both problems varies for different choices of $p$ and $q$.

\begin{theorem}
  \label{thm:q-domW2}
 For any $0 < p\le\frac12 q$, \textsc{$p$-Secluded $q$-Dominating Set} is \W{2}-hard with respect to $k$.
\end{theorem}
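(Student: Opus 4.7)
The plan is a polynomial-parameter transformation from \textsc{Set Cover} parameterized by $k$ (W[2]-hard) to \textsc{$p$-Secluded $q$-Dominating Set}, refining the construction of \cref{thm:q-dom-NPh} so that the new budget $k'$ depends only on $k$ (and the fixed constants $p,q$), not on $|F|$. The blow-up of order $|F|\cdot p$ in that earlier reduction arose because the root $r$ was made directly adjacent to every $v_A$, filling $N^p(r)$ with all sets and the initial segments of the paths to the $V'_F$-side. The slack $q-p\ge p$ available under the hypothesis $2p\le q$ is exactly what is needed to push every $v_A$ out of $N^p(r)$ while keeping $v_A\in N^q[r]$, so that the cost of the root drops to the constant $p+1$.

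Construction: given an instance $(U,F,k)$ of \textsc{Set Cover}, first pad it by adding $k'+1$ universal elements common to every set, which does not change the minimum cover size and guarantees $|A|\ge k'+1$ for all $A\in F$. Then build $G$ as follows. Take a root $r$ and a shared path $r\,\rho_1\cdots\rho_{p-1}\,c$ of length $p$. For each $A\in F$, attach a spine $c\,\gamma_1^A\cdots\gamma_{q-p-1}^A\,v_A$ of length $q-p$, so that $d(r,v_A)=q$ and yet $v_A\notin N^p(r)$. From each $v_A$, attach a buffer $v_A\,\mu_1^A\cdots\mu_p^A$, make $\mu_p^A$ adjacent to $x_u$ for every $u\in A$, and add the shortcut edge $\mu_p^A$--$c$ (its purpose is to keep the buffers of non-selected sets within $N^q[r]$). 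Finally, attach at $r$ a forcing gadget, for example a pendant clique of size $>k'$ that is only $q$-dominated from $r$, so that any feasible solution contains $r$. Set $k':=(p+1)+k(2p+1)$, which is polynomial in $k$ since $p,q$ are constants.

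Forward direction: for a set cover $X$ of size $k$, take $S:=\{r\}\cup\{v_A:A\in X\}$. One verifies $|N^p[r]|=p+1$, $|N^p[v_A]|=2p+1$ (the $p$ spine vertices closest to $v_A$, then $v_A$ itself, then the $p$ buffer vertices), so $|N^p[S]|\le k'$; coverage holds because $r$ reaches every $\rho_j$, every $\gamma_j^A$ and every $v_A$ within distance $\le q$ via $c$, each selected $v_A$ dominates its own buffer and all $x_u$ with $u\in A$ at distance $p+1\le q$, and the shortcut $\mu_p^A$--$c$ places every $\mu_j^A$ within distance $p+1+(p-j)\le 2p\le q$ of $r$. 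For the converse, given $S$ with $|N^p[S]|\le k'$, the forcing gadget imposes $r\in S$, and the padding makes $|N^p[\mu_p^A]|,|N^p[x_u]|>k'$ (each of them has at least $k'+1$ vertices at distance $\le p$ via the universal elements), so no such vertex lies in $S$. The remaining members of $S\setminus\{r\}$ then sit on spine or buffer paths, each uniquely attributable to some $A\in F$; each contributes at least $2p+1$ to $|N^p[S]|$, so $|S\setminus\{r\}|\le k$, and $q$-dominating every $x_u$ forces the selected sets to cover $U$. The main obstacle is calibrating the padding, forcing gadget, and shortcut edges so that simultaneously (i) every vertex outside the intended selectors has $|N^p|>k'$ and hence cannot be in $S$, and (ii) every vertex of $G$—notably the buffer paths of non-selected sets—remains within distance $q$ of the intended $S$. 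The hypothesis $q\ge 2p$ is used in precisely these two places: to position $v_A$ at distance exactly $q$ from $r$ (outside $N^p(r)$ but still $q$-dominated by $r$), and to make the shortcut edges keep every $\mu_j^A$ within $N^q[r]$.
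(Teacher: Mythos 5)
Your high-level plan is the same as the paper's (a polynomial-parameter transformation from \textsc{Set Cover} parameterized by $k$, using the slack $q-p\ge p$ to push the set-vertices $v_A$ out of the root's $p$-ball so that the root costs $p+1$ and each selected set costs $O(p)$), but the concrete construction is broken, and the breakage is exactly at the point where the paper's \cref{constr:sds} invests its machinery. The shortcut edges $\mu_p^A$--$c$ put every element vertex within distance $p+2$ of $r$: indeed $d(r,c)=p$, $d(c,\mu_p^A)=1$, $d(\mu_p^A,x_u)=1$ for $u\in A$. Since $q\ge 2p$, we have $p+2\le q$ in every case except $(p,q)=(1,2)$, so the single vertex $r$ already $q$-dominates the entire graph (spine vertices at distance $\le q$, buffer vertices at distance $\le 2p$ via the shortcut, elements at distance $\le p+2$), at cost $|N^p[r]|=p+1\le k'$. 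Hence every constructed instance is a yes-instance and the backward direction collapses; for $p=1$, $q\ge 3$ even a spine vertex adjacent to $c$ has a three-vertex $p$-ball and dominates all of $V_U$ through the shortcuts. The tension you tried to resolve with the shortcut---keeping the buffers of non-selected sets inside $N^q[S]$ without bringing the elements into reach of the root region---is what the paper handles by attaching the elements to the far endpoints $v'_A$, hanging size-$k'$ cliques ($C_U$, and the cliques $C^u_h$ on pendant paths when $q-p\ge 2$) that can only be $q$-dominated cheaply through the corresponding $v_A$, and, instead of shortcuts, adding connector paths from $r$ to the spine vertices $t^A_h$ of length $p$ (if $3p\ge q$) or $q-2p$ (if $3p<q$), calibrated so that $c$ is at distance exactly $q$ from every $v'_A$: the whole set gadget is dominated by $c$, but no element is.

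The forcing gadget is a second genuine gap. If the size-$(>k')$ clique is attached at $r$, it lies inside $N^p[r]$ (as $p\ge 1$), so $|N^p[r]|>k'$ and even your intended solution is infeasible; if you move it away from $r$ so that $N^p[r]$ stays small, it no longer forces $r$ specifically, since any vertex within distance $q-1$ of its attachment point dominates it---and then the trivial solution $\{r\}$ from the previous paragraph is feasible. One cannot force a prescribed vertex into a $q$-dominating set by such gadgets, and the paper does not try: in \cref{lem:q-domW2} it attaches a path of length $q$ from $c$ to a vertex $s$ and argues only that \emph{some} vertex of that path lies in $S$, charging it at least $p+1$, and then charges each chosen $v_A$ a provably private amount of its $p$-ball ($2p+1$ when $3p<q$, and $p+\frac12(3p-q+1)(q-p)$ when $3p\ge q$, which is why the budget $k'$ has a case distinction). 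Your uniform ``each remaining solution vertex contributes at least $2p+1$'' needs the same care, since spine vertices near $c$ share most of their $p$-balls with $r$ and with each other, so the bound $|S\setminus\{r\}|\le k$ does not follow as stated. Repairing all of this essentially leads you to \cref{constr:sds}.
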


We prove \cref{thm:q-domW2} via a polynomial parameter transformation from \textsc{Set Cover} parameterized by the size of the solution $k$.
To this end, we apply the following construction.

\begin{constr}
 \label{constr:sds}
 Let $(U,F,k)$ be an instance of \textsc{Set Cover}.
 Let $k'=(k+1) \cdot (2p+1)$ if $3p < q$ and~$k' = 2p+1+k \cdot [p+\frac12(3p-q+1)(q-p)]$ otherwise. We construct a graph $G$ as follows.

  We start the construction (illustrated in \Cref{fig:q-dom}) by creating three vertices $s$, $c$, and $r$ and three vertex sets $V_U = \{u \mid u \in U\}$, $V_F = \{v_A \mid A \in F\}$, and $V'_F = \{v'_A \mid A \in F\}$. 
  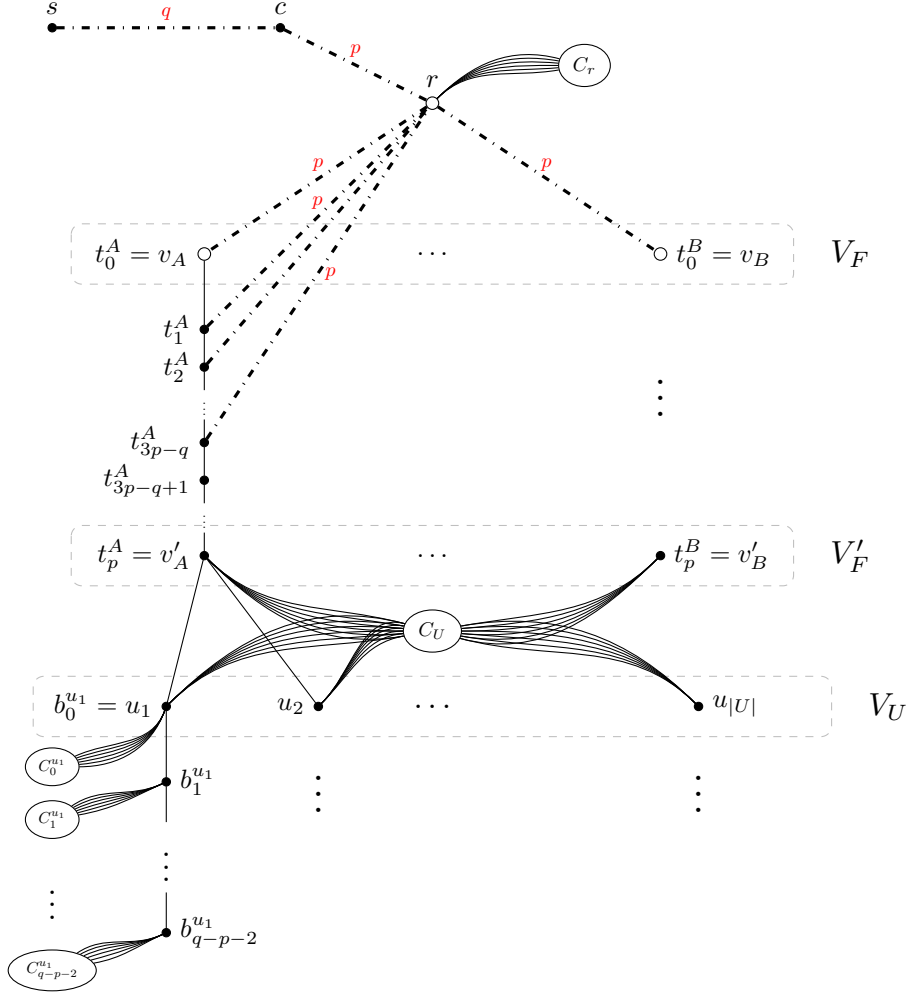
\begin{figure}[t!]
    \centering
    \begin{tikzpicture}
      \usetikzlibrary{calc,shapes}

      \tikzstyle{xnode}=[circle,scale=1/2,draw];
      \tikzstyle{xnode2}=[circle,fill,scale=1/3,draw];
      \tikzstyle{xnodeC}=[scale=0.75];
      \tikzstyle{xnodeC2}=[scale=0.6];
      \tikzstyle{dashdotted}=[dash pattern=on 3pt off 3pt on \the\pgflinewidth off 3pt,line width=1.2]
      \tikzstyle{xpath}=[dashdotted,-];

      \node(s) at (0,0-0.5)[xnode2,label=90:{$s$}]{};
      \node(c) at (3,0-0.5)[xnode2,label=90:{$c$}]{};
      \node(r) at (5,-1-0.5)[xnode,label=90:{$r$}]{};

      \draw[xpath] (s) to node[midway, above,scale=0.75,color=red]{$q$}(c);
      \draw[xpath] (c) to node[midway, above,scale=0.75,color=red]{$p$}(r);

      \node[ellipse,draw,minimum height=10, minimum width=10] (Cr)  at (7,-0.5-0.5)[xnodeC]{$C_r$};
      \foreach \x in {1,2,...,6}{
	      \draw[-] (r) to [in=180-20+5*\x](Cr);
      }

      \draw[rounded corners,dashed,thin,color=lightgray] (0.25,-3.1) rectangle (9.75,-3.9);
      \node[scale=1.2] at (9+1.5,-3.5){$V_F$};

      \node (f1) at (2,-3.5)[xnode,label=180:{$t_0^A=v_A$}]{};
      \node (fld) at (5,-3.5)[]{$\ldots$};
      \node (fn) at (8,-3.5)[xnode,label=0:{$t_0^B=v_B$}]{};

      \draw[xpath] (r) to node[midway, above,scale=0.75,color=red]{$p$}(f1);
      \draw[xpath] (r) to node[midway, above,scale=0.75,color=red]{$p$}(fn);

      \node (t1) at (2,-4-0.5)[xnode2,label=180:{$t_1^A$}]{};
      \node (t2) at (2,-4.5-0.5)[xnode2,label=180:{$t_2^A$}]{};
      \node (tld) at (2,-5-0.5)[scale=1/2]{$\vdots$};
      \node (t3qp) at (2,-5.5-0.5)[xnode2,label=180:{$t_{3p-q}^A$}]{};
      \node (t3qpP1) at (2,-6-0.5)[xnode2,label=180:{$t_{3p-q+1}^A$}]{};
      \node (tld2) at (2,-6.5-0.5)[scale=1/2]{$\vdots$};
      \node (tp) at (2,-7-0.5)[xnode2,label=180:{$t_p^A=v'_{A}$}]{};
      \node (tnld) at (8,-5.25)[scale=1.5]{$\vdots$};

      \node (fld) at (5,-7.5)[]{$\ldots$};

      \draw[-] (f1) to (t1);
      \draw[-] (t1) to (t2);
      \draw[xpath] (t1) to node[midway, above,scale=0.75,color=red]{$p$}(r);
      \draw[-] (t2) to (tld);
      \draw[xpath] (t2) to (r);
      \draw[-] (tld) to (t3qp);
      \draw[-] (t3qpP1) to (t3qp);
      \draw[xpath] (t3qp) to node[midway, right,scale=0.75,color=red]{$p$}(r);
      \draw[-] (t3qpP1) to (tld2);
      \draw[-] (tp) to (tld2);

      \draw[rounded corners,dashed,thin,color=lightgray] (0.25,-7.1) rectangle (9.75,-7.9);
      \node[scale=1.2] at (9+1.5,-7.5){$V'_F$};

      \node[ellipse,draw,minimum height=10, minimum width=10] (Cu)  at (5,-8.5)[xnodeC]{$C_U$};
      \node (tpn) at (8,-7-0.5)[xnode2,label=0:{$t_p^B=v'_{B}$}]{};

      \draw[rounded corners,dashed,thin,color=lightgray] (-.25,-9.1) rectangle (10.25,-9.9);
      \node[scale=1.2] at (9+2,-9.5){$V_U$};

      \node (u1) at (1.5,-9.5)[xnode2,label=180:{$b_0^{u_1}=u_1$}]{};
      \node (u2) at (3.5,-9.5)[xnode2,label=180:{$u_2$}]{};
      \node (uld) at (5,-9.5)[scale=1.2]{$\ldots$};
      \node (um) at (8.5,-9.5)[xnode2,label=0:{$u_{|U|}$}]{};

      \foreach \x in {1,2,...,6}{
	      \draw[-] (tp) to [out=-45,in=180-20+5*\x](Cu);
	      \draw[-] (tpn) to [out=225,in=0-20+5*\x](Cu);
	      \draw[-] (u1) to [out=45,in=180-20+5*\x](Cu);
	      \draw[-] (u2) to [out=45,in=180-20+5*\x](Cu);
	      \draw[-] (um) to [out=-225,in=0-20+5*\x](Cu);
      }

      \draw[-] (tp) to (u1);
      \draw[-] (tp) to (u2);

      \node (bu2vd) at (3.5,-10.5)[scale=1.5]{$\vdots$};
      \node (bumvd) at (8.5,-10.5)[scale=1.5]{$\vdots$};

      \node (bu1) at (1.5,-10.5)[xnode2,label=0:{$b^{u_1}_1$}]{};
      \node (buvd) at (1.5,-11.5)[scale=1.2]{$\vdots$};
      \node (buqp2) at (1.5,-12.5)[xnode2,label=0:{$b^{u_1}_{q-p-2}$}]{};

      \draw[-] (u1) to (bu1);
      \draw[-] (buvd) to (bu1);
      \draw[-] (buvd) to (buqp2);

      \node[ellipse,draw] (ku1) at (0,-10.3)[xnodeC2]{$C^{u_1}_{0}$};
      \node[ellipse,draw] (ku2) at (0,-11)[xnodeC2]{$C^{u_1}_{1}$};
      \node (kuvd) at (0,-12)[scale=1.2]{$\vdots$};
      \node[ellipse,draw] (kuqp2) at (0,-13)[xnodeC2]{$C^{u_1}_{q-p-2}$};

      \foreach \x in {1,2,...,6}{
	      \draw[-] (u1) to [out=-110,in=25-30+5*\x](ku1);
	      \draw[-] (bu1) to [out=-155,in=25-20+5*\x](ku2);
	      \draw[-] (buqp2) to [out=-155,in=25-20+5*\x](kuqp2);
      }

    \end{tikzpicture}
    \caption{%
      Illustration of the construction from \Cref{thm:q-domW2} for $3p \ge q$ and $p>1$. 
      Dash-dotted lines represent paths (where the corresponding length is labeled next to each). 
      The ellipses labeled $C_r$, $C_U$ and $C^u_i$ for some $i$ correspond to cliques of size $k'$.
      If a vertex is connected to such a clique, then there is an edge connecting the vertex with each vertex of the clique (illustrated by multiple lines).
      Vertices represented by empty circles form the set containing a $p$-secluded $q$-dominating set (if any exists).
      }
      \label{fig:q-dom}
  \end{figure}
  We connect the vertices $s$ and $c$ by a path of length exactly $q$ and $c$ with $r$ by a path of length exactly $p$. %
  We connect a vertex $v'_A \in V'_F$ with a vertex $u \in V_U$ by an edge if and only if $u \in A$. 

  We introduce a clique $C_r$ of size $k'$ and make all its vertices adjacent to $r$, and we introduce a clique~$C_{U}$ of size $k'$ and make all its vertices adjacent to each vertex in $V'_{F} \cup V_U$. If $q-p \ge 2$, then for each $u \in V_U$ we further introduce a path $b_0^{u,}, b_1^{u}, \ldots, b_{q-p-2}^{u}$, where $b_0^{u}=u$ and $b_h^{u}$ is a new vertex for each $h \ge 1$. Let $B$ denote the set of all these vertices. Furthermore, in such a case, let us introduce for each $u \in V_U$ a set of cliques $C_0^{u}, C_1^{u}, \ldots, C_{q-p-2}^{u}$, each of size $k'$, and for each $h \in \{0, \ldots, q-p-2\}$ connect every vertex in $C_h^{u}$ by an edge to $b_h^{u}$. Let $C$ denote the union of vertices in these cliques. 

  Then we connect for each $A \in F$ the vertex $v_A \in V_F$ with the vertex $v'_A \in V'_F$ by a path $t_0^{A}, t_1^{A}, \ldots, t_p^{A}$ of length $p$, where $t_0^{A}=v_A$ and $t_p^{A}=v'_A$. Let $T$ denote the set of vertices on all these paths (excluding the endpoints).

  Now let us distinguish two cases. If $3p \ge q$, then we connect $r$ by paths of length $p$ to vertices $t_h^{A}$ for all~$A \in F$ and $h \in \{0, \ldots, 3p-q\}$ (note that $3p-q \le p$ as $2p \le q$). If $3p < q$, then we connect $r$ by paths of length $q-2p$ to vertices  $t_0^{A}=v_A$ for all $A \in F$. Note that in this case $q-2p > p$ and that in both cases the distance between $c$ and any $v'_A$ is exactly $q$. Indeed, in the first case the shortest path contains vertices~$c$,~$r$,~$t_{3p-q}^{A}$, and $t_p^{A}=v'_A$ and the distances are $p$, $p$, and $p-(3p-q)=q-2p$, respectively. In the later case, the shortest path contains vertices $c$, $r$, $v_A$, and $v'_A$ and the distances are $p$, $q-2p$, and $p$, respectively. Let us denote $T'$ the set of vertices introduced in this step. This finishes the construction of the graph $G$.
\end{constr}

Let us now explain the intuition behind~\cref{constr:sds} (cf.~\cref{fig:q-dom}). 
Vertex $c$ can $q$-dominate all vertices in $V_F$, $V_F'$, $C_r$, vertices $r$ and $s$ and the paths connecting these. 
Furthermore, vertex $c$ $q$-dominates the most vertices among all vertices that $q$-dominate $s$ and, thus, we can assume that $c$ is any optimal solution of $(G,k')$. 
The vertices of $G$ that remain to be $q$-dominated are the vertices in $C_U$, $V_U$ and the corresponding vertices in $C$. 
Our construction enforces that the vertices in $V_F$ are the only ones suitable for that and, hence, their selection corresponds to a set cover in the original instance.

\begin{lemma}
 \label{lem:q-domW2}
 Let $(U,F,k)$ be an instance of \textsc{Set Cover} and~$0 < p\le\frac12 q$.
 Let~$(G,k')$ be the instance obtained from~$(U,F,k)$ by applying \cref{constr:sds}.
 Then $(U,F,k)$ is a yes-instance of \textsc{Set Cover} if and only if~$(G,k')$ is a yes-instance of \textsc{$p$-Secluded $q$-Dominating Set}.
\end{lemma}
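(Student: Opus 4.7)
The plan is to prove both directions of the equivalence by showing that set covers $X \subseteq F$ of $U$ of size at most $k$ correspond to $q$-dominating sets of the form $S = \{c\} \cup \{v_A : A \in X\}$ in $G$. Intuitively, the hub $c$ $q$-dominates the ``upper'' part of $G$, namely $s$, the $s$-$c$ and $c$-$r$ paths, the clique $C_r$, the set $V_F$, the path interiors $T$ and $T'$, and $V'_F$, while each $v_A$ takes care of the ``lower'' part attached to $A$, namely $C_U$, the neighbors $u \in A$ via $v'_A$, and the chains $b^u_h$ together with their attached cliques $C^u_h$.

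For the direction \raproof{}, given a set cover $X$ I take $S = \{c\} \cup \{v_A : A \in X\}$ and verify first that it is a $q$-dominating set by tracing distances from $c$ and from each $v_A$, using $\bigcup X = U$ to reach every $u \in V_U$ and its downstream chain via some $v_A \in S$. For the size bound, one sees $|N^p_G[c]| = 2p + 1$, and by following the paths around each $v_A$ one gets $|N^p_G[v_A]| = 2p + 1$ in the case $3p < q$ and $|N^p_G[v_A]| = 2p + 1 + \sum_{h=1}^{3p-q}(p - h)$ in the case $3p \geq q$. Distinct $v_A$'s have pairwise disjoint $p$-balls, and their $p$-balls overlap with $N^p_G[c]$ only at the vertex $r$ and only in the case $3p \geq q$. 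Applying the algebraic identity
\begin{equation*}
2p + \sum_{h=1}^{3p-q}(p-h) = p + \tfrac{1}{2}(3p-q+1)(q-p)
\end{equation*}
then yields $|N^p_G[S]| \leq k'$ in both cases.

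For the direction \laproof{}, starting from a solution $S$ I first apply \cref{obs:cliques} to the four families of cliques of size $k'+1$ in $G$, namely $\{r\} \cup C_r$, $\{v'_A\} \cup C_U$ for $A \in F$, $\{u\} \cup C_U$ for $u \in V_U$, and $\{b^u_h\} \cup C^u_h$, to exclude from $S$ every vertex at distance strictly less than $p$ from any member of these cliques. A careful enumeration confines $S$ to $\{s\} \cup V(s\text{-}c\text{ path}) \cup V_F$, together with (only in the case $3p < q$) possibly some ``middle'' interior vertices of the $r$-to-$v_A$ paths at distance at least $p$ from $r$. Setting $X := \{A : v_A \in S\}$, I argue that $X$ is a set cover: for each $u \in V_U$ the ``terminal'' object attached to $u$ (the clique $C^u_{q-p-2}$ when $q - p \geq 2$, else a vertex of $C_U$) lies at distance strictly greater than $q$ from every allowed vertex of $S$ other than $v_A$ with $u \in A$. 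The bound $|X| \leq k$ follows by the counting from the forward direction applied in reverse: each $v_A \in S$ contributes essentially disjointly to $N^p_G[S]$, any vertex chosen from the $s$-$c$ path (needed to $q$-dominate $s$) contributes at least $p + 1$ further vertices, and likewise for any ``middle'' vertex, so $|N^p_G[S]| \leq k'$ forces $|X| \leq k$.

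The main obstacle is the bookkeeping in the $3p \geq q$ case, where the auxiliary paths from $r$ to $t^A_h$ for $h \in \{1, \ldots, 3p-q\}$ create a nontrivial overlap pattern in the $p$-neighborhoods and where the size formula $k'$ is tight. One has to establish the displayed algebraic identity and confirm that these extra paths do not introduce unintended shortcuts between $N^p_G[v_A]$ and $N^p_G[v_B]$ for distinct $A, B$, so that $d_G(v_A, v_B) = 2p$ remains valid with $r$ as the unique midpoint. A secondary technicality is the treatment of the ``middle'' interior vertices of the $r$-to-$v_A$ paths in the $3p < q$ case, which are not forbidden by \cref{obs:cliques} but cannot help $q$-dominate the lower part of $G$ and therefore only consume the budget $|N^p_G[S]| \leq k'$.
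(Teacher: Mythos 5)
Your proposal follows essentially the same route as the paper's proof: the same candidate solution $S=\{c\}\cup\{v_A \mid A\in X\}$ in the forward direction, with the same per-vertex $p$-neighborhood counts and the same algebraic identity; and in the backward direction the same use of \cref{obs:cliques} to confine $S$ to the $s$-$c$ path, $V_F$, and (only for $3p<q$) the far portion of $T'$, followed by the same witness-based set-cover argument and the same budget counting for $|X|\le k$.

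Two details deserve correction. First, in the case $q-p=1$ (which under $p\le\frac12 q$ means $p=1$, $q=2$) your chosen witness for an uncovered $u$ --- ``a vertex of $C_U$'' --- does not work: every vertex of $C_U$ is adjacent to all of $V'_F\cup V_U$, hence lies within distance $p+1=q$ of every $v_B\in V_F$ (indeed of every set-vertex, covered or not), so the claim that it is at distance strictly greater than $q$ from every allowed vertex of $S$ other than $v_A$ with $u\in A$ is false. The correct witness in this case is $u$ itself, as in the paper: since $V_U\cap S=\emptyset$, the only allowed vertices within distance $q$ of $u$ are the vertices $v_A$ with $u\in A$. Second, in the case $3p\ge q$ the $p$-balls of distinct vertices $v_A$ are not pairwise disjoint --- they all contain $r$ (each $v_A$ is joined to $r$ by a path of length $p$) --- so the phrase ``pairwise disjoint $p$-balls'' is literally inaccurate; however, since you already account for $r$ through the overlap with $N^p_G[c]$ and the balls are disjoint outside $\{r\}$, your final computation of $|N^p_G[S]|\le k'$ is still correct once this is stated precisely. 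With these two repairs the argument matches the paper's proof.
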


\begin{proof}
 \raproof{} 
  Let us first show that if there is $X\subseteq F$ such that $|X| \le k$ and $\bigcup X=U$, then there is a subset~$S$ of~$V(G)$ such that $V(G) = N^q_{G}[S]$, and $|N^p_{G}[S]|\leq k'$. Indeed, take a set $S=\{c\}\cup \{v_A \mid A \in X\}$.  
  It is a routine to check that all the following vertices are in distance at most $q$ to $c$ and, hence, they are in $N^q_{G}[S]$: vertices $s$ and $r$ and the vertices on the paths between $c$ and these vertices, vertices in $C_r$, $T'$, $T$, $V_F$, and~$V'_F$. For each $u \in V_U$ there is $A \in X$ such that $u \in A$ (as $\bigcup X=U$). As $u$ is adjacent to $v'_A$, $u$ is in distance (at most) $p+1$ to $v_A$ and, thus, it is in $N^q_{G}[S]$. The same holds for all vertices in $C_U$. Moreover, each vertex in~$B \cup C$ is in distance at most $q-p-1$ to some vertex in $V_U$. Therefore, all vertices in $B \cup C$ are in $N^q_{G}[S]$ and $V(G)=N^q_{G}[S]$.

  It remains to bound $|N^p_{G}[S]|$. The $p$-neighborhood of $c$ is formed by $p$ vertices on the path to~$s$ and~$p$ vertices on the path to $r$ (including $r$ itself), hence it is of size exactly $2p$. For each vertex in $V_F$, its $p$-neighborhood is only formed by vertices in $T \cup T' \cup V'_F \cup \{r\}$. In particular, if $3p\ge q$, then the $p$-neighborhood of $v_A$ contains~$p$ vertices of the form $t_h^{A}$ for $h \in \{1, \ldots, p\}$, $p-1$ vertices on the path from $t_0^{A}$ to $r$ (excluding~$v_A=t_0^{A}$, and~$r$%
  ) and $p-h$ vertices on the path from $t_h^{A}$ to $r$ (excluding $t_h^{A}$) for $h \in \{1, \ldots, 3p-q\}$. This, together gives 
  \begin{align}
  1+p+\sum_{h=0}^{3p-q}(p-h)-1=p+\frac12(3p-q+1)(q-p). \label{eq:neigh-size}
  \end{align}
  If $3p < q$, then the $p$-neighborhood of $v_A$ contains $p$ vertices of the form $t_h^{A}$ for $h \in \{1, \ldots, p\}$ and $p$ vertices on the path from $v_A$ to $r$, that is, $2p$ vertices in total. It follows, that $|N^p_{G}[S]| \le k'$.

  \laproof{} 
  For the other direction, let us assume that there is a subset $S$ of $V(G)$ such that $V(G) = N^q_{G}[S]$, and $|N^p_{G}[S]|\leq k'$. 

  We first observe, that $S \cap (B \cup C \cup T \cup V'_F \cup V_U \cup C_U)= \emptyset$. Notice that for every $b^u_h\in B$ we have that~$\{b^u_h\}\cup C^u_h$ is a clique of size $k'+1$ and hence, by~\Cref{obs:cliques}, for $v \in B \cup C$ we have that $v \notin S$. Similarly, for any $v'_A\in V'_F$ we have that $\{v'_A\}\cup C_U$ is a clique of size $k'+1$ and furthermore we have that for each $t\in T$ there is a $v'_A \in V'_F$ such that $d_G(t, v'_A)<p$. It follows by~\Cref{obs:cliques} that for every~$v \in T\cup V'_F \cup C_U$ we have that $v\notin S$. %
  For any $u\in V_U$ we have that $\{u\}\cup V_U$ is a clique of size $k'+1$. %
  \Cref{obs:cliques} yields that $S\cap V_U = \emptyset$.
  Further observe that in case $3p \ge q$ we have that $d_G(t', r) < p$ for each $t'\in T'$ and, by~\Cref{obs:cliques}, $S \cap T'= \emptyset$.
  It follows that $S$ only contains vertices from $V_F$ and from the path from $s$ to $c$. In case $3p <q$ the set $S$ may also contain vertices from the set $T'$, but notably, all these vertices are in distance at least $p+1$ to any vertex in~$V_U$.

  Now, let us denote $X=\{A \mid v_A \in S\}$. We claim that $X$ is a set cover for $U$ of size at most $k$, that is,~$|X| \le k$ and $\bigcup X=U$.
  Suppose it is not a set cover. Then there is a $u \in U \setminus \bigcup X$. If $q-p =1$, then, since $S$ is a solution, there must be a vertex in $S$ in distance at most $q$ from $u$ in $G$. As $V_U \cap S=\emptyset$, this vertex must be in distance at most $q-1=p$ from $v'_A$ for some $A \in F$, $u \in A$. Similarly, if $q-p \ge 2$, then, since $S$ is a solution, there must be a vertex in $S$ in distance at most $q$ from $v \in C_{q-p-2}^{u}$. As $(B \cup C\cup V_U) \cap S=\emptyset$, this vertex must be in distance at most $q-(q-p)=p$ from $v'_A$ for some $A \in F$, $u \in A$. Since this vertex is not in~$(B \cup C \cup V_U \cup T \cup V'_F)$, and if $3q \ge p$ also not in $T'$, it follows that this vertex is in $V_F$ and it is the vertex $v_A$. But then $A \in X$ and $u \in \bigcup X$, a contradiction.

  Before we show that $|X| \le k$, observe that in $S$, there must be a vertex on the path between $c$ and $s$ (possibly one of the endpoints), this vertex is not in $V_F$, and has a $p$-neighborhood of size at least $p+1$. For~$3p < q$ each vertex of $V_F$ has $p$ vertices of the path to $r$ and to the corresponding vertex of $V'_F$ in its closed $p$-neighborhood. These $2p+1$ vertices are not in $p$-neighborhood of any other vertex in $V_F$. It follows that~$|X| \le |S \cap V_F| \le \frac{k'-(p+1)}{2p+1}=\frac{(k+1) \cdot (2p+1)-(p+1)}{2p+1}<k+1$. Finally, in case $3p \ge q$, the $p$ neighborhood of each vertex in $V_F$ contains $p+\frac12(3p-q+1)(q-p)$ vertices (see Equation~\ref{eq:neigh-size}) which are not in a $p$-neighborhood of any other vertex in $V_F$. It follows that
  \begin{align*}
  |X| &= |S \cap V_F| \\
      &\le \frac{k'-(p+1)}{p+\frac12(3p-q+1)(q-p)}\\
      &=\frac{2p+1+k \cdot [p+\frac12(3p-q+1)(q-p)]-(p+1)}{p+\frac12(3p-q+1)(q-p)}\\
      &=k+\frac{p}{p+\frac12(3p-q+1)(q-p)}\\
      &<k+1.
  \end{align*}
  Thus in all cases $|X| \le k$, finishing the proof of equivalence of the instances.
  Since the construction can be performed in polynomial time and $k'$ is linear in $k$, the result follows.
\end{proof}

As \cref{constr:sds} can be done in polynomial time, \cref{thm:q-domW2} immediately follows from~\cref{lem:q-domW2}.

\noindent
For \textsc{Small $p$-Secluded $q$-Dominating Set}, we remark that we can adapt the reduction for~\Cref{thm:q-dom-NPh}: instead of restricting the size of the closed neighborhood of the $q$-dominating set to at most~$p+1+|F|\cdot p +k$, we restrict the size of the $q$-dominating set to at most~$k+1$ and the size of its open neighborhood to at most~$p+|F|\cdot p$.
Analogously, we can adapt the reduction for~\Cref{thm:q-domW2}. 
This yields the following hardness results.

\begin{corollary}
  \label{cor:q-dom-hard}
For any $0 < p<q$, \textsc{Small $p$-Secluded $q$-Dominating Set} is \NP-hard. Moreover, it does not admit a polynomial kernel with respect to $(k+\ell)$ unless \nopk. For any $0 < p\le\frac12 q$, \textsc{Small $p$-Secluded $q$-Dominating Set} is \W{2}-hard with respect to $(k+\ell)$.
\end{corollary}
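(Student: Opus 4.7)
The plan is to obtain all three parts of the corollary by reusing, essentially unchanged, the graph constructions from \cref{thm:q-dom-NPh} and \cref{thm:q-domW2} and simply reinterpreting the single closed-neighborhood budget as a pair of budgets on $|S|$ and $|N^p_G(S)|$. This is exactly what the remark preceding the corollary suggests, so the work is to verify that the two proofs go through with the split budget.

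For NP-hardness and the kernel lower bound, I would start from the construction in \cref{thm:q-dom-NPh}, which is a polynomial-parameter transformation from \textsc{Set Cover} parameterized by $|F|$. In the yes direction, the intended solution is $S = \{r\} \cup \{v_A : A \in X\}$, which has $|S| = |X| + 1 \leq k + 1$ and $|N^p_G[S]| \leq p + 1 + |F|\cdot p + |X|$, so $|N^p_G(S)| \leq p + |F|\cdot p$. I therefore set $k' := k + 1$ and $\ell' := p(|F| + 1)$. Since $k' + \ell'$ is polynomial in $|F|$ and \textsc{Set Cover} parameterized by $|F|$ admits no polynomial kernel unless $\nopk$, the kernel lower bound for \textsc{Small $p$-Secluded $q$-Dominating Set} parameterized by $k + \ell$ follows at once; NP-hardness is immediate from the same transformation.

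For the backward direction I would recycle the argument of \cref{thm:q-dom-NPh}. \cref{obs:cliques} still forbids $S$ from containing any vertex of $B \cup C \cup T \cup V'_F \cup V_U \cup C_U$. The step ``$r \in S$'' now follows directly from the solution-size bound rather than from counting the closed $p$-neighborhood: if $r \notin S$, then $S$ must contain one vertex from each of the $|F|$ paths dominating $V'_F$ plus at least one vertex on the $s$-$r$ path, forcing $|S| \geq |F| + 1$, which contradicts $|S| \leq k+1$ under the harmless assumption $k < |F|$. Given $r \in S$, the bound $|S \cap V_F| \leq k$ is immediate from $|S| \leq k+1$, and the remainder of the original distance argument yields that $X := \{A : v_A \in S\}$ is a set cover of size at most~$k$.

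For W[2]-hardness with respect to $k + \ell$ I would apply the exact same reinterpretation to \cref{constr:sds}. Let $k^\star$ denote the closed-neighborhood budget defined there; for the intended yes-instance solution $S = \{c\} \cup \{v_A : A \in X\}$, the computation in the forward direction of \cref{lem:q-domW2} bounds $|N^p_G[S]|$ by $k^\star$, so setting $k' := k+1$ and $\ell' := k^\star - (k+1)$ gives a valid small-secluded instance whose combined parameter $k' + \ell' = k^\star$ is linear in~$k$. Together with \cref{lem:q-domW2} (whose backward direction again factors cleanly through the clique observation and then uses the $V_F$-counting estimate, which works identically once the closed-neighborhood budget is recovered from $|S| + |N^p_G(S)|$), this yields the claimed W[2]-hardness.

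The main obstacle I anticipate is nothing conceptually new but a careful accounting check: I must make sure that at every place in the two original proofs where a bound on $|N^p_G[S]|$ was used, the same inequality holds after splitting this budget as $|S| + |N^p_G(S)| = |N^p_G[S]|$. This amounts to verifying that each vertex counted toward the original closed-neighborhood budget lies either in $S$ (absorbed by $k'$) or in $N^p_G(S)$ (absorbed by $\ell'$), so no slack is lost and the equivalence of yes-instances is preserved in both directions.
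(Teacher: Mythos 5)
Your proposal is correct and takes essentially the same route as the paper: the paper's own justification of \cref{cor:q-dom-hard} is exactly the remark that the reductions of \cref{thm:q-dom-NPh} and \cref{thm:q-domW2} carry over once the closed-neighborhood budget is split into a solution-size bound of $k+1$ and an open-neighborhood bound of $p+|F|\cdot p$ (respectively, the remaining budget for \cref{constr:sds}). Your choice of $k'$ and $\ell'$ and the accounting check that every vertex formerly charged to $|N^p_G[S]|$ is absorbed either by $k'$ or by $\ell'$ is precisely the adaptation the paper has in mind.
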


\noindent
Now we look at the remaining choices for $p$ and $q$, that is all $p$, $q$ with $p >\frac12 q$. 
In these cases we can show fixed-parameter tractability.
The crucial difference to the case $p\leq \frac12 q$ is the following.
Clearly, in either case, if a solution exists, then the solution is contained in the set~$Y$ of vertices whose $p$-neighborhood has size at most~$k+\ell$.
In the case~$p\leq \frac12 q$ we cannot bound the size of~$Y$ in $k, \ell$ or~$q$: Indeed, in our hardness reduction, the vertex set~$V_F$ contains an in $k, \ell$, and $q$ unbounded number of vertices, but all of these vertices have $p$-neighborhood of size at most $k + \ell$ and are eligible for the solution.
However, in the case~$p >\frac12 q$, we can show that the size of~$Y$ is bounded from above by a function in~$k$,~$\ell$, and~$q$ (otherwise we can answer ``no'').
Roughly, this is because whenever two vertices in~$Y$ $q$-dominate each other, there is a path of length at most~$q < 2p$ connecting them such that all of the vertices on this path have a small neighborhood. This allows us to upper-bound the number of vertices in $Y$ that can be $q$-dominated by one single vertex of~$Y$. Hence, if $Y$ is too large we can reject, and otherwise we can brute force on the set~$Y$ of vertices whose $p$-neighborhood has size at most~$k+\ell$ (which can be found in polynomial time) to find a solution, leading to our fixed-parameter algorithm.

\begin{theorem}%
  \label{thm:q-dom-FPT}
  For any $p >\frac12 q$, \textsc{Small $p$-Secluded $q$-Dominating Set} can be solved in $O(m k^{k+2}(k+\ell)^{qk})$ time and, hence, it is fixed-parameter tractable with respect to $k+\ell$.
\end{theorem}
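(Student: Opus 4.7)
The plan is to reduce the problem to a brute-force search over a polynomial-time computable candidate set whose size we can bound in yes-instances. Define
\[
Y := \{v \in V(G) : |N_G^p[v]| \leq k+\ell\},
\]
which is clearly computable in polynomial time. Any solution $S$ satisfies $S \subseteq Y$, because for $v \in S$ we have $N^p[v] \subseteq N^p[S]$ of cardinality at most $|S| + |N^p(S)| \leq k + \ell$. The substantive content of the proof is to bound $|Y|$ by a function of $k+\ell$ and $q$ in any yes-instance, after which we simply enumerate all $k$-subsets of $Y$.

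The core structural observation is that any vertex $w$ with $d_G(w, Y) \leq p - 1$ has degree at most $k+\ell$ in $G$. Indeed, fixing $u \in Y$ with $d_G(w, u) \leq p - 1$, every neighbor of $w$ lies in $N^p[u]$, so $|N(w)| \leq |N^p[u]| \leq k + \ell$. Now invoke the hypothesis $p > \tfrac12 q$, equivalently $q \leq 2p - 1$: for any two vertices $s, u \in Y$ with $d_G(s, u) \leq q$, every vertex $w$ on a shortest $s$-$u$ path satisfies $d(w, s) + d(w, u) \leq q$, whence $\min(d(w, s), d(w, u)) \leq \lfloor q/2 \rfloor \leq p - 1$. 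Thus $w \in N^{p-1}[Y]$ and so $\deg_G(w) \leq k + \ell$.

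Consequently, fixing any $s \in Y$, every $u \in Y \cap N^q_G[s]$ is reachable from $s$ by a path of length at most $q$ all of whose vertices have degree at most $k + \ell$ in $G$. Performing a BFS from $s$ in the subgraph induced by the vertices of degree at most $k+\ell$ yields
\[
|Y \cap N^q_G[s]| \leq \sum_{i=0}^{q}(k+\ell)^i = O((k+\ell)^q).
\]
In any yes-instance with solution $S \subseteq Y$ of size at most $k$, the set $S$ is $q$-dominating, so $Y \subseteq \bigcup_{s \in S} N^q_G[s]$ and hence $|Y| = O(k(k+\ell)^q)$. If $|Y|$ exceeds this bound we reject; otherwise we enumerate all subsets $T \subseteq Y$ with $|T| \leq k$ and, for each, test in $O(mk)$ time via BFS whether $V = N^q_G[T]$ and $|N^p_G(T)| \leq \ell$. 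The number of subsets is $O(|Y|^k) = O(k^k (k+\ell)^{qk})$, so the total running time is of the form $O(m\, k^{k+O(1)}(k+\ell)^{qk})$, matching the stated bound.

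The main obstacle is establishing the degree bound on vertices in $N^{p-1}[Y]$ and the fact that shortest paths between $Y$-vertices stay inside $N^{p-1}[Y]$; both rely essentially on $p > q/2$. Without this hypothesis the midpoint of a shortest $s$-$u$ path between two $Y$-vertices could be at distance exceeding $p - 1$ from $Y$ and thus of unbounded degree, so the BFS-based bound on $|Y|$ collapses; this is consistent with the W[2]-hardness for $p \leq q/2$ established in \cref{thm:q-domW2}.
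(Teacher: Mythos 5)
Your proof is correct and follows essentially the same route as the paper's: the same candidate set $Y=\{v \mid |N^p[v]|\le k+\ell\}$, the same use of $p>\tfrac12 q$ to ensure every vertex on a path of length at most $q$ between two $Y$-vertices lies within distance $p-1$ of one endpoint and hence has bounded degree, the same bound $|Y|=O(k(k+\ell)^q)$ via bounded-degree balls, and the same brute force over size-at-most-$k$ subsets of $Y$ with an $O(mk)$ BFS check. The only cosmetic difference is that observing $N[w]\subseteq N^p[u]$ (so $\deg_G(w)\le k+\ell-1$ rather than $\le k+\ell$) makes the ball bound come out to exactly $(k+\ell)^q$, as in the paper's induction, and yields the stated $O(m k^{k+2}(k+\ell)^{qk})$ running time without the extra constant-base-exponential slack hidden in your $\sum_{i=0}^{q}(k+\ell)^i$ estimate.
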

  \begin{proof}
  Consider a solution~\(S\) for an instance $(G,k,\ell)$ of \textsc{Small $p$-Secluded $q$-Dominating Set}.
  If $x \in S$, then $|N^{p}[x]| \le k+\ell$, since $|S|\le k$ and $|N^p(S)| \le \ell$. Moreover, $|N^{p}[x]| \le k+\ell$ implies $|N[v]| \le k+\ell$ for every $v \in N^{p-1}[x]$. It follows that, if $|N^{p}[y]| \le k+ \ell$ and $y \notin S$, then for each $x \in N^q[y] \cap S$ every vertex on every $x$-$y$ path of length at most $2p-1 \ge q$ has degree at most $k+\ell -1$, since each such vertex
  has distance at most~$p-1$ to~$x$ or~$y$. 
  We point out that this property only holds if~$p>q/2$, making this case different to the case $0< p\leq q/2$ (see~\cref{cor:q-dom-hard}).
  
  If $k+\ell=1$, then either $G$ has at most one vertex or $(G,k,\ell)$ is a no-instance. Hence, in the following, we assume $k+\ell \ge 2$.
  We call vertices $u$ and $v$ \emph{linked}, if there is a path of length at most $q$ between $u$ and $v$ in $G$ such that the degree of every vertex on the path is at most $k+\ell-1$. 
  Let $B[u]=\{v \mid u \text{ and } v \text{ are linked}\}$. 
  We claim that $|B[v]| \le (k+\ell)^q$ for any~$v$.

  To prove the claim, let us denote $B^i[v]$ the set of vertices $u$ such that there is a path of length at most~$i$ between $u$ and $v$ in $G$ such that the degree of every vertex on the path is at most $k+\ell-1$. Obviously,~$B^q[v]=B[v]$. We prove by induction, that $|B^i[v]| \le (k+\ell)^i$ for every $i \in \mathbb{Z}^+_0$. The claim then follows. For $i=0$ we have $B^i[v]=\{v\}$ and $|B^i[v]|=1 \le (k+\ell)^0=1$, showing the basic step. For $i \ge 1$ and~$u \in B^i[v] \setminus B^{i-1}[v]$, let $p_0, p_1, \ldots, p_{i}$ be the $u$-$v$ path showing that $u \in B^i[v]$, i.e., $p_0=u$, $p_{i}=v$, and for every $j \in \{0, \ldots, i\}$ we have $\deg p_j \le k+\ell-1$. Then $p_1 \in B^{i-1}[v]$ and $u$ is a neighbor of $p_1$. Since the vertices in $B^{i-1}[v]$ have in total at most $(k+\ell-1)\cdot|B^{i-1}[v]|$ neighbors and $|B^{i-1}[v]| \le (k+\ell)^{i-1}$ by induction hypothesis, we have $|B^i[v]| \le |B^{i-1}[v]| +(k+\ell-1)\cdot|B^{i-1}[v]| = (k+\ell)\cdot|B^{i-1}[v]| \le (k+\ell) \cdot (k+\ell)^{i-1} =(k+\ell)^i$. This gives the induction step and finishes the proof of the claim.

  Let $Y = \{y \mid |N^{p}[y]| \le k+\ell\}$. Obviously, we have $S \subseteq Y$, since $|N^p[S]| \le k+\ell$. If $y \in Y \setminus S$, then there is $x \in S$ such that $x$ and $y$ are linked. It follows that $y \in B[x]$ and, thus, $Y \subseteq \bigcup_{x \in S}B[x]$. Hence,~$|Y| \le k \cdot (k+\ell)^q \le (k+\ell)^{q+1}$. 
  
  This suggests the following algorithm for \textsc{Small $p$-Secluded $q$-Dominating Set}: 
  Find the set $Y$. If~$|Y| > k \cdot (k+\ell)^q$, then answer ``no''. 
  Otherwise, for each $k' \le k$ and each size-$k'$ subset~$S'$ of~$Y$, check whether $S'$~is a $p$-secluded $q$-dominating set in~$G$.  
  If any such set is found, return it.   
  Otherwise, answer ``no''. 
  Since $S \subseteq Y$, this check is exhaustive.%
  
  As to the running time, the set~$Y$ can be determined in $O(n (k+\ell))$ time by running a BFS from each vertex and stopping it after it discovers $k+\ell$ vertices or all vertices in distance at most $p$, whichever occurs earlier. Then, there are $k \cdot \binom{k \cdot (k+\ell)^q}{k} \le k^{k+1}(k+\ell)^{qk}$ candidate subsets of $Y$. 
  For each such set~$S'$ we can check whether it is a $p$-secluded $q$-dominating set in $G$ by running a BFS from each vertex of $S'$ and marking the vertices which are in distance at most $p$ and at most $q$, respectively. %
  This takes $O(m k)$ time. Hence, in total, the algorithm runs in $O(m k^{k+2}(k+\ell)^{qk})$ time. 
  \end{proof}

\noindent
By~\Cref{obs:reducibility}, the previous result 
transfers to \textsc{$p$-Secluded $q$-Dominating Set} parameterized by $k$. 

\begin{corollary}
  \label{cor:q-dom-FPT}
 For any $p >\frac12 q$, \textsc{$p$-Secluded $q$-Dominating Set} %
 is fixed-parameter tractable with respect to $k$.
\end{corollary}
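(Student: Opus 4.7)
The plan is to derive \cref{cor:q-dom-FPT} directly from \cref{thm:q-dom-FPT} following the recipe of \cref{obs:reducibility}, adapted to the $p$-secluded setting. The essential observation is the identity $|N^p_G[S]| = |S| + |N^p_G(S)|$, which tells us that a $q$-dominating set $S$ satisfies $|N^p_G[S]| \le k$ if and only if there exist nonnegative integers $k', \ell'$ with $k' + \ell' \le k$ such that $|S| \le k'$ and $|N^p_G(S)| \le \ell'$. So a single \textsc{$p$-Secluded $q$-Dominating Set} instance decomposes naturally into polynomially many \textsc{Small $p$-Secluded $q$-Dominating Set} instances.

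Concretely, I would enumerate every pair $(k', \ell')$ of nonnegative integers with $k' + \ell' \le k$ (there are only $\binom{k+2}{2} = O(k^2)$ such pairs) and, for each of them, invoke the fixed-parameter algorithm from \cref{thm:q-dom-FPT} on the instance $(G, k', \ell')$. The overall algorithm accepts $(G, k)$ if and only if at least one of these invocations accepts.

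Correctness follows from the identity above. If some invocation accepts with witness $S$, then $|N^p_G[S]| = |S| + |N^p_G(S)| \le k' + \ell' \le k$, so $S$ is a valid solution for \textsc{$p$-Secluded $q$-Dominating Set}. Conversely, any solution $S$ to $(G, k)$ certifies a yes-answer for the invocation with $k' := |S|$ and $\ell' := |N^p_G(S)|$, since then $k' + \ell' \le k$. For the running time, each invocation runs in $O(m (k')^{k'+2}(k'+\ell')^{q k'})$ time by \cref{thm:q-dom-FPT}, which is bounded by $O(m\, k^{k+2} k^{q k})$ because $k', \ell', k'+\ell' \le k$ and $q$ is a fixed constant. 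Multiplying by the $O(k^2)$ choices of $(k', \ell')$ still gives a bound of the form $f(k) \cdot m$ for a computable function $f$, establishing fixed-parameter tractability with respect to $k$. No real obstacle arises; the corollary is essentially a bookkeeping consequence of \cref{thm:q-dom-FPT} and the decomposition $|N^p_G[S]| = |S| + |N^p_G(S)|$.
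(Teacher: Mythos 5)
Your argument is correct and coincides with the paper's: the paper derives \cref{cor:q-dom-FPT} from \cref{thm:q-dom-FPT} via \cref{obs:reducibility}, whose underlying Turing reduction is exactly your enumeration of the pairs $(k',\ell')$ with $k'+\ell'\leq k$ together with the disjoint decomposition $N^p_G[S]=S\cup N^p_G(S)$. You have merely spelled out in the $p$-neighborhood setting what the paper delegates to that observation, so there is nothing to add.
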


\section{\boldmath\(\F\)-free Vertex Deletion}\label{sec:ffvd}
\noindent In this section, we study the \ffvd{} ($\F$-FVD) problem for families~\(\F\) of graphs with at most a constant number~\(c\) of vertices, that is, the problem of destroying all induced subgraphs isomorphic to graphs in~\(\F\) by at most \(k\)~vertex deletions.  The problem can, in particular, model various graph clustering tasks \cite{BMN12,HKMN10}, where the secluded variants can be naturally interpreted as removing a small set of outliers that are weakly connected to the clusters.

\subsection{Secluded \(\F\)-free Vertex Deletion}
In this section, we prove a polynomial-size problem kernel for \sffvd{}, where \(\F\)~is a family of graphs with at most a constant number~\(c\) of vertices:
\decprob{Secluded $\F$-free Vertex Deletion}
	{A graph~$G=(V,E)$ and an integer $k$.}
	{Is there a set~$S\subseteq V$ such that $G-S$ is $\F$-free and $|N_G[S]|\leq k$?}

\noindent Henceforth, we call a set $S\subseteq V$ such that \(G-S\)~is \(\F\)-free an \emph{\(\F\)-free vertex deletion set}.

Note that \sffvd{} can be polynomial-time solvable for some families~\(\F\) for which $\F$-FVD~is NP-hard: \textsc{Vertex Cover} (where \(\F\)~contains only the graph consisting of a single edge) is NP-hard,  yet any vertex cover~\(S\) satisfies \(N[S]=V\).  Therefore, an instance to \textsc{Secluded Vertex Cover} is a yes-instance if and only if~\(k\geq n\). In general, however, one can show that \textsc{Secluded $\F$-Free Vertex Deletion} is NP-complete for every family~$\F$ that includes only graphs of minimum vertex degree two (\cref{thm:sffcdnphard}). We mention in passing that, from this peculiar difference of the complexity of \textsc{Vertex Cover} and \textsc{Secluded Vertex Cover}, it would be
interesting to find properties of $\F$ which govern whether \sffvd{} is NP-hard or polynomial-time solvable along the lines of the well-known dichotomy results~\citep{FGMN11,LY80}.

  \begin{theorem}\label{thm:sffcdnphard}
    For each family~$\F$ containing only graphs of minimum
    vertex degree two, \textsc{Secluded $\F$-Free
      Vertex Deletion} is NP-complete.
  \end{theorem}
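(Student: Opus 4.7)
The plan is to establish membership in NP and then reduce from $\F$-free Vertex Deletion ($\F$-FVD) itself, which is NP-hard for every non-trivial finite family $\F$ by the Lewis-Yannakakis theorem (the property of being $\F$-free is hereditary and, provided $\F \neq \emptyset$, non-trivial). Membership in NP is straightforward: given a candidate $S$, one verifies $|N_G[S]| \leq k$ in polynomial time and checks $\F$-freeness of $G - S$ by enumerating $O(n^c)$ vertex subsets of size at most $c$, where $c$ bounds the size of graphs in $\F$. The crucial observation behind the reduction is that since every $H \in \F$ has minimum degree at least~$2$, no vertex of degree at most~$1$ can appear in any induced copy of~$H$; in particular pendant vertices are \emph{inert}, as they can neither create nor destroy induced copies of graphs from $\F$.

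Given an instance $(G, k)$ of $\F$-FVD with $n := |V(G)|$, I would construct $G'$ by attaching $N := n$ fresh pendant vertices to each vertex of $G$, and set $k' := n(k+1) = n + Nk$. The transformation is polynomial. For the forward direction, if $S \subseteq V(G)$ satisfies $|S| \leq k$ and $G - S$ is $\F$-free, then $G' - S$ is also $\F$-free because all newly added vertices have degree at most~$1$ in $G'$ and hence cannot appear in any induced copy of a graph from $\F$. Moreover, the $N|S|$ new pendants attached to vertices in $S$ are disjoint from $N_G[S]$, so $|N_{G'}[S]| = |N_G[S]| + N|S| \leq n + Nk = k'$.

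For the backward direction, suppose $G' - S'$ is $\F$-free and $|N_{G'}[S']| \leq k'$. First I would argue that we may assume $S'$ contains no pendants: removing any pendant from $S'$ preserves $\F$-freeness (pendants cannot lie in any induced $H \in \F$) and can only shrink $N_{G'}[S']$. With $S' \subseteq V(G)$ we then have $|N_{G'}[S']| = |N_G[S']| + N|S'| \geq (N+1)|S'|$, using $|N_G[S']| \geq |S'|$. Hence $(N+1)|S'| \leq n(k+1)$, which gives $|S'| \leq n(k+1)/(n+1) < k+1$ and therefore $|S'| \leq k$; finally $G - S'$ inherits $\F$-freeness from $G' - S'$. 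The main subtlety is calibrating the number of pendants so that the strict inequality $n(k+1)/(n+1) < k+1$ forces $|S'| \leq k$ exactly, which is why choosing $N = n$ is enough.
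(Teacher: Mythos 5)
Your proposal is correct and follows essentially the same route as the paper's proof: reduce from \textsc{$\F$-free Vertex Deletion} (NP-hard by Lewis--Yannakakis), attach a large number of pendant vertices to every original vertex so that they are inert with respect to the minimum-degree-two family~$\F$, and calibrate the closed-neighborhood budget so that it forces the deletion set to have size at most~$k$; the only differences are cosmetic (you use $n$ pendants and $k'=n(k+1)$ where the paper uses $n+1$ pendants and $k'=s(n+1)+n$, and you spell out NP membership explicitly).
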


{
\begin{proof}
    We reduce from \textsc{$\F$-free Vertex Deletion}, which is
    NP-complete for all $\F$ that contain only graphs of minimum
    vertex degree two~\citep{LY80}. Given an instance~$(G, s)$ of
    \textsc{$\F$-free Vertex Deletion} where $G$ contains $n$
    vertices, we add $n+ 1$ new degree-one neighbors to each vertex in
    $G$. In this way, we obtain an instance $(G', k)$ of
    \textsc{Secluded $\F$-free Vertex Deletion} by setting
    $k = s\cdot (n + 1) + n$.
    
    Clearly, each {\(\F\)-free
      vertex deletion set} $S$ of size at most~$s$ for $G$ is a {\(\F\)-free
      vertex deletion set} for $G'$ as each graph in $\F$ contains no
    degree-one vertices. Furthermore, clearly,
    $|N_{G'}[S]| \leq s \cdot (n + 1) + n = k$.

    In the other direction, for each {\(\F\)-free
      vertex deletion set}~$S$ for $G'$ we may assume that no
    degree-one vertex is contained in~$S$, hence, $S$ is a
    {\(\F\)-free
      vertex deletion set} for~$G$. Furthermore, as each vertex
    in~$V(G)$ incurs at least $n+1$ vertices in the closed
    neighborhood~$N_{G'}(S)$, if this neighborhood has size at
    most~$k$, then there are at most~$s$ vertices in~$S$.
  \end{proof}
}

\noindent
It is easy to see that \sffvd{} is fixed-parameter tractable. More specifically, it is solvable in \(c^k\cdot\poly(n)\)~time: simply enumerate all inclusion-minimal \(\F\)-free vertex deletion sets~\(S\) of size at most~\(k\) using the standard search tree algorithm described by~\citet{Cai96} and check~\(|N[S]|\leq k\) for each of them.  This works because, for any \(\F\)-free vertex deletion set~\(S\) with \(|N[S]|\leq k\), we can assume that \(S\)~is an inclusion-minimal \(\F\)-free vertex deletion set since \(|N[S']|\leq |N[S]|\) for every~\(S'\subsetneq S\).  

We complement this observation of fixed-parameter tractability by the following kernelization result.

\begin{theorem}\label{thm:sffvdpolyker}
\textsc{Secluded $\F$-free Vertex Deletion} has a problem kernel comprising \(O(k^{c+1})\)~vertices, where \(c\)~is the maximum number of vertices in any graph of~\(\F\).
\end{theorem}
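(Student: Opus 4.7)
The plan is to extend the classical $O(k^c)$-vertex kernel for $\F$-free Vertex Deletion to an $O(k^{c+1})$-vertex kernel for \sffvd{} by paying one extra factor of~$k$ to accommodate the closed neighborhood budget.

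First, observe that every solution~$S$ consists of vertices of degree at most~$k-1$: if $v\in S$ then $|N_G[v]|\le|N_G[S]|\le k$, so $\deg_G(v)\le k-1$. I will call vertices of degree at least~$k$ \emph{heavy}; none of them can belong to a solution. As a cheap reduction rule, if some induced copy of a graph in~$\F$ consists solely of heavy vertices, the instance is trivially a no-instance and we output a canonical one. In particular, $|S|\le k$ and $S$ is contained in the set~$V_{<k}$ of non-heavy vertices.

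Next, for each~$F\in\F$ I apply the sunflower lemma to the hypergraph~$\mathcal{H}_F$ whose hyperedges are the vertex sets of induced $F$-copies of~$G$; each hyperedge has size at most~$c$. As long as $|\mathcal{H}_F|>c!(k+1)^c$, the sunflower lemma yields a sunflower of $k+2$ induced $F$-copies with core~$C$; since~$|S|\le k$ and the petals are pairwise disjoint, any valid solution must intersect~$C$. Using this forced hit, I perform a graph-level reduction rule (e.g., deleting a carefully chosen petal vertex, or, when the core is a singleton, forcing the unique core vertex into~$S$ together with the corresponding decrement of~$k$ and bookkeeping of the seclusion budget). After exhaustive application, $|\mathcal{H}_F|\in O(k^c)$ for every~$F\in\F$, and the set~$U$ of vertices appearing in any remaining induced $F$-copy satisfies~$|U|\in O(k^c)$.

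Finally, define $W:=U\cup N_G[U\cap V_{<k}]$. Since each $v\in U\cap V_{<k}$ contributes at most~$k$ vertices to its closed neighborhood, $|W|\le|U|+k\cdot|U|\in O(k^{c+1})$. I claim the induced subgraph~$G[W]$ is a kernel. On one hand, every solution~$S$ lies in~$U\cap V_{<k}\subseteq W$ and satisfies $N_G[S]\subseteq N_G[U\cap V_{<k}]\subseteq W$, so any solution for~$(G,k)$ remains a solution for~$(G[W],k)$. Conversely, vertices outside~$W$ lie in no induced $F$-copy (so their removal introduces no new forbidden subgraphs and destroys no hitting constraint) and are not adjacent to any candidate solution vertex (so they cannot contribute to~$|N_G[S]|$ for any solution). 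Hence $(G,k)$ and $(G[W],k)$ are equivalent, producing the desired $O(k^{c+1})$-vertex kernel.

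I expect the main obstacle to be the graph-level implementation of the sunflower reduction. Unlike in abstract Hitting Set, where a sunflower of $k+2$ sets may simply be replaced by its core in the set system, here the hyperedges are dictated by the induced subgraphs of the host graph~$G$, and any reduction must delete actual vertices while simultaneously preserving both the $\F$-freeness requirement on~$G-S$ and the seclusion bound~$|N_G[S]|\le k$. Care is needed to show that deleting a chosen petal vertex (or forcing a singleton-core vertex into~$S$ with the right budget updates) never creates new forbidden induced subgraphs and never renders a previously feasible instance infeasible.
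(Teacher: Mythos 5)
There are two genuine gaps. First, the heart of your kernelization---the ``graph-level implementation of the sunflower reduction''---is exactly the step you leave open, and the instantiations you sketch do not work as stated: deleting a petal vertex from \(G\) can destroy other induced \(\F\)-copies and shrink neighborhoods (so a no-instance can become a yes-instance), and ``forcing the unique core vertex into \(S\) with budget bookkeeping'' produces an annotated instance that is no longer an instance of \sffvd{}, which a kernelization must output. The paper avoids any graph modification here: it translates \((G,k)\) into a \hs{c} instance whose hyperedges are the vertex sets of induced \(\F\)-copies and invokes the known \emph{expressive} kernels for \hs{d} (van Bevern; Fafianie--Kratsch), which return a \emph{subhypergraph}~\(H'\) on \(O(k^c)\) vertices such that the inclusion-minimal hitting sets of size at most \(k\) of \(H\) and \(H'\) coincide. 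No sunflower-replacement or vertex deletion in \(G\) is performed; the graph kernel is simply an induced subgraph of \(G\) determined by \(H'\). If you want a self-contained proof you must either reprove such an expressiveness property or design and verify a concrete graph-level rule, neither of which your proposal does.

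Second, your final construction \(W=U\cup N_G[U\cap V_{<k}]\) is not equivalent to \((G,k)\) in the backward direction. A vertex \(u\in U\) that is heavy in \(G\) may keep only a few of its neighbors in \(G[W]\) (only those lying in \(U\) or adjacent to non-heavy \(U\)-vertices survive), so \(u\) can appear in a solution of \((G[W],k)\) with small closed neighborhood there, even though \(|N_G[u]|>k\) forbids it in \(G\); your ``heavy \(\F\)-copy'' rule does not exclude this, since such copies may mix heavy and non-heavy vertices. Hence a no-instance can be mapped to a yes-instance, and the claim that kernel solutions are ``not adjacent to vertices outside \(W\)'' silently assumes they live in \(U\cap V_{<k}\), which is unjustified. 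The paper's construction fixes precisely this by retaining, for \emph{every} retained vertex, \(k+1\) arbitrary neighbors (\(N_{k+1}(U')\)): then any vertex whose closed neighborhood in the kernel has size at most \(k\) has degree at most \(k\) in \(G\) as well, and its kernel neighborhood equals its \(G\)-neighborhood (the paper's \cref{lem:eqneigh}), which is what makes the equivalence (\cref{lem:sffvdgprime}) go through while still giving \(O(k^{c+1})\) vertices. Your plan would become correct if you replaced \(N_G[U\cap V_{<k}]\) by such a truncated-neighborhood padding of all of \(U\) and supplied a sound hitting-set-level reduction with the expressiveness property.
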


\noindent Our proof of \cref{thm:sffvdpolyker} exploits \emph{expressive} kernelization algorithms for \hs d~\citep{Bev14,Bev14b,FK15}, %
which preserve inclusion-minimal solutions and that return subgraphs of the input hypergraph as kernels:
Herein, given a hypergraph~\(H=(U,\C)\) with \(|C|\leq d\) for each~\(C\in \C\), and an integer~\(k\), \hs d asks  whether there is a \emph{hitting set}~\(S\subseteq U\) with \(|S|\leq k\), that is, \(C\cap S\ne\emptyset\) for each \(C\in\C\).
\noindent Our kernelization for \sffvd{} is based on transforming the input instance~\((G,k)\) to a \hs d instance~\((H,k)\), computing an expressive \hs d problem kernel~\((H',k)\), and outputting a \sffvd{} instance~\((G',k)\), where \(G'\)~is the graph induced by the vertices remaining in~\(H'\) together with at most \(k+1\)~additional neighbors for each vertex in~\(G\).

\begin{definition}\label{def:kernels}
  Let \((G=(V,E),k)\) be an instance of \sffvd{}.  For a vertex~\(v\in V\), let \(N_j(v)\subseteq N_G(v)\)~be a set of \(j\)~arbitrary neighbors of~\(v\), or \(N_j(v):=N_G(v)\) if \(v\)~has degree less than~\(j\).  For a subset~\(S\subseteq V\), let \(N_j(S):=\bigcup_{v\in S}N_j(v)\).  Moreover, let
  \begin{itemize}[\(H'={}\)]
  \item[\(c:={}\)]\(\max_{F\in\F}|V(F)|\) be the maximum number of vertices in any graph in~\(\F\),
  \item[\(H={}\)]\((U,\C)\) be the hypergraph with \(U:=V\) and \(\C:=\{S\subseteq V\mid G[S]\in\F\}\),
  \item[\(H'={}\)]\((U',\C')\) be a subgraph of~\(H\) with \(|U'|\in O(k^c)\) such that each set~\(S \subseteq U\) with \(|S|\leq k\)~is an inclusion-minimal hitting set for~\(H\) if and only if it is for~\(H'\), and
  \item[\(G'={}\)]\((V',E')\) be the subgraph of~\(G\) induced by~\(U'\cup N_{k+1}(U')\).
  \end{itemize}
\end{definition}

\noindent 
To prove \cref{thm:sffvdpolyker}, we show that \((G',k)\)~is a problem kernel for the input instance~\((G,k)\).  The subgraph~\(H'\) exists and is computable in linear time from~\(H\)~\citep{Bev14b,FK15}.  Moreover, for constant~\(c\), one can compute \(H\) from~\(G\) and \(G'\)~from~\(H'\) in polynomial time.  It is obvious that the number of vertices of~\(G'\) is \(O(k^{c+1})\).  Hence, it remains to show that \((G',k)\)~is a yes-instance if and only if~\((G,k)\)~is.  This is achieved by the following two lemmas.

\begin{lemma}\label[lemma]{lem:eqneigh}%
  For any \(S\subseteq U'\) with \(|N_{G'}[S]|\leq k\), it holds that \(N_G[S]=N_{G'}[S]\).
\end{lemma}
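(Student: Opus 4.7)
The plan is to prove both inclusions of $N_G[S] = N_{G'}[S]$. The inclusion $N_{G'}[S] \subseteq N_G[S]$ is immediate because $G'$ is by definition an induced subgraph of~$G$ and $S \subseteq V(G')$. The substantive direction is $N_G[S] \subseteq N_{G'}[S]$, for which I would argue by contradiction: assume there exists some $u \in N_G[S] \setminus N_{G'}[S]$. Since $S \subseteq U' \subseteq V(G')$, we have $S \subseteq N_{G'}[S]$, so $u \notin S$ and hence $u$ is a $G$-neighbor of some $v \in S$.

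The key observation is that if $u \in V(G')$, then because $u \in N_G(v)$, $v \in V(G')$, and $G'$ is the subgraph of~$G$ \emph{induced} by $U' \cup N_{k+1}(U')$, the edge $\{u,v\}$ belongs to $E(G')$, placing $u$ in $N_{G'}(v) \subseteq N_{G'}[S]$ and yielding a contradiction. So we may assume $u \notin V(G')$, i.e., $u \notin U' \cup N_{k+1}(U')$.

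Now I would split on the degree of~$v$ in~$G$. If $\deg_G(v) < k+1$, then by definition $N_{k+1}(v) = N_G(v)$, so $u \in N_G(v) = N_{k+1}(v) \subseteq V(G')$, contradicting $u \notin V(G')$. Otherwise $\deg_G(v) \geq k+1$, and then $N_{k+1}(v)$ consists of $k+1$ distinct neighbors of~$v$, all lying in $V(G')$ by construction. Since $v \in S$, each of these $k+1$ vertices lies in $N_{G'}(v) \subseteq N_{G'}[S]$; together with $v$ itself they give $|N_{G'}[S]| \geq (k+1) + 1 = k+2 > k$, contradicting the hypothesis $|N_{G'}[S]| \leq k$.

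I do not expect any real obstacle here: the whole argument is a direct unpacking of \cref{def:kernels}. The only point that needs a moment of care is the case distinction on $\deg_G(v)$, which is essential because $N_{k+1}(v)$ is defined differently in the two regimes; this is exactly where the choice of $k+1$ (rather than, say, $k$) neighbors matters, since it forces the degree-large case to violate the budget $|N_{G'}[S]| \leq k$.
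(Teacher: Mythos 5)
Your proof is correct and is essentially the paper's argument: both hinge on the observation that a solution vertex~$v$ with $|N_{G'}[S]|\leq k$ cannot have $k+1$ of its $G$-neighbors in~$G'$, so $\deg_G(v)\leq k$ and hence $N_{k+1}(v)=N_G(v)\subseteq V(G')$, which (by induced-ness) puts $N_G(v)$ inside $N_{G'}(v)$. The only difference is presentational—you phrase it as a contradiction with a case split on $\deg_G(v)$, whereas the paper argues directly—so no further comment is needed.
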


{
  \begin{proof}
    Since \(S\subseteq U'\subseteq V'\cap V\) and since \(G'\)~is a subgraph of~\(G\), it is clear that \(N_G[S]\supseteq N_{G'}[S]\).  For the opposite direction, observe that each \(v\in S\)~has degree at most~\(k\) in~\(G'\).  Thus, \(v\)~has degree at most~\(k\) in~\(G\) since, otherwise, \(k+1\)~of its neighbors would be in~\(G'\) by construction.  Thus, \(N_{G'}(v)\supseteq N_{k+1}(v)=N_{G}(v)\) for all~\(v\in S\) and, thus, \(N_{G'}[S]\supseteq N_G[S]\).
  \end{proof}
}

\begin{lemma}\label[lemma]{lem:sffvdgprime}
  Graph~\(G\) allows for an \(\F\)-free vertex deletion set~\(S\) with \(|N_G[S]|\leq k\) if and only if \(G'\) allows for an \(\F\)-free vertex deletion set~\(S\) with \(|N_{G'}[S]|\leq k\).
\end{lemma}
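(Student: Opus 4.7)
My plan is to prove both directions by translating between the graph formulation (an $\F$-free vertex deletion set in $G$ or $G'$) and the hypergraph formulation (a hitting set in $H$ or $H'$), leveraging the expressive kernel property relating $H$ and $H'$ together with \cref{lem:eqneigh}. The two key facts I lean on are: an $\F$-free vertex deletion set in $G$ is exactly a hitting set for $H$ (and likewise the $F$-copies contained inside $U'$ are precisely the hyperedges of $\C'$), and because $G'$ is an induced subgraph of $G$, every $F$-copy in $G'$ is also an $F$-copy in $G$.

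For the forward direction, I take an $\F$-free vertex deletion set $S$ for $G$ with $|N_G[S]| \leq k$, so in particular $|S| \leq k$. Without loss of generality I assume $S$ is inclusion-minimal, since shrinking $S$ only shrinks $N_G[S]$. Then $S$ is an inclusion-minimal hitting set for $H$ of size at most $k$, so the expressive kernel property hands back that $S$ is also an inclusion-minimal hitting set for $H'$; in particular $S \subseteq U' \subseteq V(G')$. Since $G'$ is an induced subgraph of $G$, every $F$-copy in $G'$ is an $F$-copy in $G$ and is therefore hit by $S$, so $G' - S$ is $\F$-free; and $N_{G'}[S] \subseteq N_G[S]$ immediately gives $|N_{G'}[S]| \leq k$.

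For the backward direction---which I expect to be the main obstacle, because an arbitrary deletion set $S'$ for $G'$ may contain vertices from $N_{k+1}(U') \setminus U'$ and need not touch $F$-copies of $G$ whose vertex sets leave $V(G')$---I push the solution back into $U'$. Let $S'$ be an $\F$-free vertex deletion set for $G'$ with $|N_{G'}[S']| \leq k$, so $|S'| \leq k$, and set $T := S' \cap U'$. Every $C \in \C'$ satisfies $C \subseteq U'$ and induces a member of $\F$ in $G$, hence also in $G' \supseteq G[U']$, so $S'$ hits $C$; any hitting vertex lies in $C \subseteq U'$ and thus in $T$, so $T$ hits $\C'$. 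Shrink $T$ to an inclusion-minimal subset $S \subseteq T$ still hitting $\C'$; then $S$ has size at most $k$ and is an inclusion-minimal hitting set for $H'$, and by the kernel property $S$ is also an inclusion-minimal hitting set for $H$, i.e., an $\F$-free vertex deletion set for $G$.

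To close, the neighborhood bound follows from $S \subseteq T \subseteq S'$, which gives $|N_{G'}[S]| \leq |N_{G'}[S']| \leq k$; combined with $S \subseteq U'$ this meets the hypothesis of \cref{lem:eqneigh}, yielding $N_G[S] = N_{G'}[S]$ and hence $|N_G[S]| \leq k$. The one implicit point worth stating explicitly in the writeup is that $\C'$ contains every hyperedge of $H$ supported inside $U'$, which is the standard convention for the expressive \hs{d} kernels cited and is precisely what guarantees that $T$ hits $\C'$.
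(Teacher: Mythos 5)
Your proposal is correct and follows essentially the same route as the paper's proof: forward, pass to an inclusion-minimal deletion set, use the equivalence of minimal hitting sets for $H$ and $H'$ to conclude $S\subseteq U'$, and use that $G'$ is an induced subgraph of $G$; backward, intersect with $U'$, shrink to a minimal hitting set of $H'$, lift it to $H$ (hence to $G$), and bound the neighborhood via \cref{lem:eqneigh}. Your closing caveat about $\C'$ containing all hyperedges supported in $U'$ is not actually needed for your argument that $T$ hits $\C'$ (that only uses $\C'\subseteq\C$ and $C\subseteq U'$), so no gap arises.
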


{
  \begin{proof}
    Let \(S\)~be an inclusion-minimal \(\F\)-free vertex deletion set with \(|N_G[S]|\leq k\) for~\(G\).  Then \(S\)~is an inclusion-minimal hitting set for~\(H\) and, by construction, also for \(H'\).  Thus, \(S\)~consists only of vertices of~\(G'\).
  Since \(G'\)~is an induced subgraph of~\(G\), it holds that \(G'-S\) is an induced subgraph of~\(G-S\), which is \(\F\)-free.  Thus, \(G'-S\) is also \(\F\)-free and \(S\)~is an \(\F\)-free vertex deletion set for~\(G'\).  Moreover, \(|N_{G'}[S]|\leq |N_{G}[S]|\leq k\) follows since \(G'\)~is a subgraph of~\(G\).

  Now, let \(S\)~be an \(\F\)-free vertex deletion set with \(|N_{G'}[S]|\leq k\) for~\(G'\).  Then \(S\cap U'\)~is a hitting set for~\(H'\): if there was a set \(C\in\C'\) with \(C\cap S=\emptyset\), then, by construction of~\(H'\) and~\(G'\), \(G'[C]=G[C]\in\F\) would remain a forbidden induced subgraph in~\(G'-S\). Thus, \(S\)~contains an inclusion-minimal hitting set~\(S'\subseteq U'\) for~\(H'\).  Since \(|S'|\leq k\), it is, by construction of~\(H'\), also a hitting set for~\(H\).  Now, by construction of \(H\) from~\(G\), \(S'\) is a \(\F\)-free vertex deletion set for~\(G\).  Finally, since \(|N_{G'}[S']|\leq|N_{G'}[S]|\leq k\), we also have \(|N_{G}[S']|=|N_{G'}[S']|\leq k\) by \cref{lem:eqneigh}.
  \end{proof}
}

\subsection{Small Secluded \(\F\)-free Vertex Deletion}
In this subsection, we present a fixed-parameter algorithm for the following problem parameterized by~\(\ell+k\).
\decprob{\ssffvd}
	{A graph~$G=(V,E)$ and two integers $k, \ell$.}
	{Is there a subset $S\subseteq V$ such that \(G-S\)~is \(\F\)-free, \(|S|\leq k\), and $|N_G(S)|\leq \ell$?}

\noindent As before, we call a set $S\subseteq V$ such that \(G-S\)~is \(\F\)-free an \emph{\(\F\)-free vertex deletion set}.

In the previous section, we discussed a simple search tree algorithm for \sffvd{} that was based on the fact that we could assume that our solution is an inclusion-minimal \(\F\)-free vertex deletion set.  However, an \(\F\)-free vertex deletion set~\(S\) with \(|S|\leq k\) and \(|N_G(S)|\leq\ell\)~is not necessarily inclusion-minimal: some vertices may have been added to~\(S\) just in order to shrink its open neighborhood.  However, the following simple lemma limits the number of possible candidate vertices that can be used to enlarge~\(S\) in order to shrink~\(N(S)\), which we will use in a branching algorithm.

\begin{lemma}\label[lemma]{lem:nebobo}
  Let \(S\)~be an \(\F\)-free vertex deletion set and \(S'\supseteq S\) such that \(|S'|\leq k\) and \(|N_G(S')|\leq\ell\), then
\(
|N_G(S)|\leq \ell+k
\).
\end{lemma}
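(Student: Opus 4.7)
The plan is to bound $N_G(S)$ by partitioning it according to membership in $S'$. The key observation is that any vertex $v \in N_G(S)$ has a neighbor in $S$, and since $S \subseteq S'$, that neighbor is also in $S'$. Hence $v$ is either itself an element of $S'$ (specifically of $S' \setminus S$, since $v \notin S$ by definition of $N_G(S)$), or $v$ lies outside $S'$ and thus in $N_G(S')$.

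Concretely, I would first argue the set inclusion
\[
N_G(S) \subseteq (S' \setminus S) \cup N_G(S'),
\]
which is a one-line case distinction on whether $v \in S'$ or not. Taking cardinalities and using $|S' \setminus S| \leq |S'| \leq k$ together with the hypothesis $|N_G(S')| \leq \ell$ then gives
\[
|N_G(S)| \leq |S' \setminus S| + |N_G(S')| \leq k + \ell,
\]
as required. The two sets on the right-hand side are disjoint, but we do not even need disjointness for the bound.

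There is no real obstacle here; the statement is a straightforward counting argument whose only content is recognizing the correct decomposition of $N_G(S)$. The lemma's purpose in the surrounding algorithm is precisely to say that, although a solution $S'$ need not be inclusion-minimal, any inclusion-minimal $\F$-free vertex deletion set $S$ it contains has an open neighborhood that is only moderately larger than $\ell$, so that a branching algorithm searching for $S$ can afford to track at most $k + \ell$ neighbor candidates while later extending $S$ to $S'$.
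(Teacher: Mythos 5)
Your proof is correct and is essentially the paper's argument: the inclusion $N_G(S)\subseteq (S'\setminus S)\cup N_G(S')$ is just an explicit rewriting of the paper's chain $|N_G(S)|=|N_G[S]\setminus S|\leq |N_G[S']\setminus S|\leq |N_G(S')\cup S'|\leq \ell+k$, using $S\subseteq S'$ and $N_G[S']=S'\cup N_G(S')$. Nothing is missing.
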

\begin{proof}
\(|N_G(S)|=|N_G[S]\setminus S|\leq|N_G[S']\setminus S|\leq|N_G[S']|\leq|N_G(S')\cup S'|\leq\ell+k\).
\end{proof}

\begin{theorem}\label[theorem]{thm:ssffvd}%
  \ssffvd{} can be solved in \(\max\{c,k+\ell\}^{k}\cdot\poly(n)\)-time,  where \(c\)~is the maximum number of vertices in any graph of~\(\F\).
\end{theorem}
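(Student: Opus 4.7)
The plan is to design a recursive branching algorithm that maintains a partial solution~$S$, initially empty, and grows $S$ one vertex at a time while preserving the invariant $S\subseteq S^*$ for some target solution~$S^*$ (if any exists). At each recursive call we inspect $G-S$ and distinguish two regimes. In the \emph{destruction regime}, $G-S$ still contains some induced copy $F\in\F$ on at most $c$ vertices; since any solution $S^*\supseteq S$ must intersect $V(F)$, we branch on the $\le c$ choices of a vertex of $V(F)$ to add to~$S$. In the \emph{shrinking regime}, $G-S$ is already \(\F\)-free, so $S$~is an $\F$-free vertex deletion set; here we test whether $|N_G(S)|\leq\ell$ (accept), whether $|N_G(S)|>k+\ell$ (reject), or otherwise branch on which vertex of $N_G(S)$ to add to~$S$.

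First I would verify that the shrinking regime is correctly handled. Rejection when $|N_G(S)|>k+\ell$ is justified by \cref{lem:nebobo}: if $S\subseteq S^*$ for some solution $S^*$ with $|S^*|\leq k$ and $|N_G(S^*)|\leq\ell$, then necessarily $|N_G(S)|\leq k+\ell$. Branching on $N_G(S)$ is justified by the following observation: if $\ell<|N_G(S)|$ and $S^*\supseteq S$ is a solution, then $(S^*\setminus S)\cap N_G(S)\neq\emptyset$. Otherwise every vertex added to $S$ lies outside $N_G[S]$, which means $N_G(S^*)\supseteq N_G(S)$ and hence $|N_G(S^*)|>\ell$, contradicting $S^*$ being a solution. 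Correctness of the destruction regime is standard: if $S^*\supseteq S$ is a solution and $F$ is an induced copy of a graph in $\F$ in $G-S$, then $V(F)\cap S^*\neq\emptyset$, so at least one of the branches preserves the invariant.

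The running-time analysis is the heart of the bound. Each recursive call strictly increases $|S|$, so the depth of the search tree is at most $k$. In a node of the destruction regime the branching factor is at most $c$, and in a node of the shrinking regime it is at most $|N_G(S)|\leq k+\ell$ (since we only branch if rejection did not trigger). Hence the total number of leaves is bounded by $\max\{c,k+\ell\}^{k}$. Per node the work is polynomial: detecting a forbidden induced subgraph on $c$ vertices takes $O(n^{c})$ time, and computing $N_G(S)$ together with a comparison against $\ell$ and $k+\ell$ is polynomial.

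The only subtlety I expect is the justification that in the shrinking regime one may restrict the branching to vertices of $N_G(S)$ rather than all of $V\setminus S$; the argument above using $N_G(S^*)\supseteq N_G(S)$ handles this cleanly. Combining the correctness of both regimes with the $k$-depth, $\max\{c,k+\ell\}$-way branching yields the stated $\max\{c,k+\ell\}^{k}\cdot\poly(n)$ running time.
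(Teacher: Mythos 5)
Your proposal is correct and follows essentially the same route as the paper: the paper first enumerates inclusion-minimal \(\F\)-free vertex deletion sets via Cai's search tree (your ``destruction regime'') and then branches on vertices of \(N_G(S)\), rejecting when \(|N_G(S)|>k+\ell\) via \cref{lem:nebobo} (your ``shrinking regime''), with the identical \(\sum_{k'}c^{k'}(k+\ell)^{k-k'}\le\max\{c,k+\ell\}^{k}\) leaf count. Your interleaved single-recursion presentation and the explicit argument that some solution vertex must lie in \(N_G(S)\) are just a repackaging of the same analysis, so no gap remains.
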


{
  \begin{proof}
    First, enumerate all inclusion-minimal \(\F\)-free vertex deletion sets~\(S\) with \(|S|\leq k\).  This is possible in \(c^k\cdot\poly(n)\)~time using the generic search tree algorithm described by~\citet{Cai96}.  For each \(k'\leq k\), this search tree algorithm generates at most \(c^{k'}\) sets of size~\(k'\).  For each enumerated set~\(S\) of \(k'\)~elements, do the following:
  \begin{enumerate}
  \item If \(|N_G(S)|\leq\ell\), then output~\(S\) as our solution.
  \item If \(|N_G(S)|>\ell+k\), then \(S\) cannot be part of a solution~\(S'\) with \(N_G(S')\leq\ell\) by \cref{lem:nebobo}, we proceed with the next set.
  \item Otherwise, initiate a recursive branching: recursively branch into at most \(\ell+k\)~possibilities of adding a vertex from~\(N_G(S)\) to~\(S\) as long as \(|S|\leq k\).
  \end{enumerate}
  The recursive branching initiated at step~3 stops at depth~\(k-k'\) since, after adding \(k-k'\)~vertices to~\(S\), one obtains a set of size~\(k\).  Hence, the total running time of our algorithm is
  \[  \poly(n)\cdot\sum_{k'=1}^kc^{k'}(\ell+k)^{k-k'}=  \poly(n)\cdot\sum_{k'=1}^k\max\{c,\ell+k\}^k=\poly(n)\cdot\max\{c,\ell+k\}^k.\qquad\qedhere
  \]
  \end{proof}
}

\noindent Given \cref{thm:ssffvd}, a natural question is whether the problem allows for a polynomial kernel.

\section{Feedback Vertex Set}\label{sec:fvs}

In this section, we study secluded versions of the \textsc{Feedback Vertex Set (FVS)} problem, which asks, given a graph~$G$ and an integer $k$, whether there is a set~$W\subseteq V(G)$, $|W|\leq k$, such that $G-W$ is cycle-free.

\subsection{Secluded Feedback Vertex Set}\label{ssec:sfvs}

\newcommand{\cc}{{\mathcal{C}}}
\newcommand{\lcac}{{\rm{lcac}}}

We show in this subsection that the problem below is NP-hard and admits a polynomial~kernel.

\decprob{Secluded Feedback Vertex Set (SFVS)}
	{A graph~$G=(V,E)$ and an integer $k$.}
	{Is there a set~$S\subseteq V$ such that $G-S$ is cycle-free and $|N_G[S]|\leq k$?}

\begin{theorem}
  \label{thm:sfvsisnphard}
\textsc{Secluded Feedback Vertex Set} is \NP-hard.
\end{theorem}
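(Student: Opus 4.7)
The plan is to give a polynomial-time many-one reduction from the classical NP-hard \textsc{Feedback Vertex Set} problem to \textsc{Secluded Feedback Vertex Set}, following the template already used in the proof of \cref{thm:sffcdnphard}. The key observation enabling this is that cycles are graphs of minimum degree two, so pendant (degree-one) vertices can never lie on a cycle and hence never need to be part of a feedback vertex set. This lets us attach many pendants to each vertex, which in turn makes the closed neighborhood size essentially track the solution size.

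\textbf{Construction.} Given an instance $(G,s)$ of \textsc{Feedback Vertex Set} with $n:=|V(G)|$, build $G'$ from $G$ by attaching $n+1$ fresh degree-one neighbors to every vertex $v \in V(G)$. Set $k := s(n+1) + n$ and output $(G',k)$.

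\textbf{Correctness ($\Rightarrow$).} If $S \subseteq V(G)$ is an FVS of $G$ with $|S|\leq s$, then $S$ is still an FVS of $G'$, because the added pendants cannot participate in any cycle. The closed neighborhood of $S$ in $G'$ consists of $S$ itself, the original neighbors of $S$ in $G$ (at most $n-|S|$ vertices outside $S$), and the $(n+1)|S|$ new pendants attached to $S$. Hence $|N_{G'}[S]| \leq |S| + (n-|S|) + (n+1)|S| = n + s(n+1) = k$.

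\textbf{Correctness ($\Leftarrow$).} Let $S'$ be an FVS of $G'$ with $|N_{G'}[S']| \leq k$; we may assume $S'$ is inclusion-minimal, so $S'$ contains no pendant (each pendant lies on no cycle and can be removed from any FVS). Thus $S' \subseteq V(G)$ and, since $G$ is an induced subgraph of $G'$, $S'$ is an FVS of $G$. Moreover, every $v \in S'$ contributes $n+1$ private pendants to $N_{G'}[S']$, so $(n+1)|S'| \leq |N_{G'}[S']| \leq s(n+1)+n$, which gives $|S'| \leq s + \tfrac{n}{n+1} < s+1$, i.e.\ $|S'| \leq s$. Hence $(G,s)$ is a yes-instance of \textsc{FVS}.

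The construction is clearly polynomial, establishing NP-hardness of \textsc{Secluded Feedback Vertex Set}. I do not anticipate a real obstacle: the only care needed is the double-counting in the neighborhood-size bound and the argument that a minimum FVS in $G'$ avoids pendants, both of which are straightforward given the minimum-degree-two nature of cycles.
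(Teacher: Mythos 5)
Your proof is correct and follows essentially the same route as the paper: a polynomial-time reduction from \textsc{Feedback Vertex Set} that attaches many degree-one neighbors to every vertex so that the closed-neighborhood budget effectively counts solution vertices, with minimality ensuring pendants are never selected. The only difference is cosmetic — the paper attaches $n^2$ pendants per vertex and sets $k'=k(n^2+n)$, while you attach $n+1$ and do a slightly tighter count ($(n+1)|S'|\leq s(n+1)+n$ forces $|S'|\leq s$) — both arguments are sound.
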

The proof is by a reduction from the \textsc{FVS} problem and works by attaching to each vertex in the original graph a large set of new degree-one neighbors.
{
\begin{proof}
    We provide a polynomial time many-one reduction from \textsc{Feedback Vertex Set}.
    Let $(G=(V,E),k)$ be an instance of \textsc{Feedback Vertex Set}. 
    We construct an equivalent instance $(G'=(V',E'),k')$ of SFVS as follows.
    To obtain $G'$, for each vertex $v\in V(G)$ add $n^2$ vertices and connect them to $v$.
    Observe that the \emph{added vertices} have degree one and thus are never part of a cycle in $G'$.
    Further, set $k'=k\cdot (n^2+n)$.
    We claim that $(G,k)$ is a yes-instance of FVS if and only $(G',k')$ is a yes-instance of SFVS.
    
    \raproof{}
    Let $S\subseteq V(G)$ be a feedback vertex set in~$G$. 
    Then the corresponding vertices in $G'$ form a feedback vertex set in~$G'$.
    Moreover, we have $k$ vertices, each having at most~$n^2+n$ neighbors.
    Thus, $|N_{G'}[S]|\leq k\cdot (n^2+n)=k'$.
    It follows that $(G',k')$ is a yes-instance of SFVS.
    
    \laproof{}
    Conversely, let $S$ be a minimal solution to $(G',k')$, that is, $S$ is a feedback vertex set in~$G'$ such that $|N_{G'}[S]|\leq k'$ and $S\setminus \{v\}$ is not a feedback vertex set in $G'$ for every $v\in S$.
    By minimality of $S$, and since the added vertices do not appear in any cycle in $G'$, $S$ does not contain any of the added vertices.
    Hence $S \subseteq V$ and, thus $|S|\le k$ as each vertex in $V$ has at least $n^2$ private neighbors in $G'$.
    Thus, since $S$ forms a feedback vertex set in $G'$, $S$ also forms a feedback vertex set in~$G$.
    It follows that $(G,k)$ is a yes-instance of FVS.
\end{proof}%
}%
On the positive side, \textsc{SFVS} remains fixed-parameter tractable with respect to~$k$:
\begin{theorem}
  \label{thm:sfvspolyker}
\textsc{Secluded Feedback Vertex Set} admits a kernel with $O(k^5)$ vertices.
\end{theorem}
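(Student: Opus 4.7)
The plan is to adapt the classical kernelization approach for \textsc{Feedback Vertex Set} while carefully preserving the closed-neighborhood size of any potential solution. First I would apply routine reductions: iteratively delete isolated and degree-one vertices, since they never belong to an inclusion-minimal feedback vertex set; as in \cref{sec:ffvd} we may assume our solution $S$ is inclusion-minimal in the SFVS sense, so this is safe. I would also discard acyclic connected components, after which every remaining component contains a cycle and every vertex has degree at least two.

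Next, I would run a polynomial-time constant-factor approximation for FVS on the reduced graph to obtain a set $X$ with $G - X$ a forest. If $|X| > 2k$, then $G$ has no FVS of size at most $k$; since any SFVS solution $S$ satisfies $|S| \le |N_G[S]| \le k$, I could safely reject. Otherwise $|X| = O(k)$ and $F := G - X$ is a forest, and it remains to bound $|V(F)|$ by $O(k^4)$. For this, I would classify vertices of $F$ according to their neighborhoods in $X$ and their degrees inside $F$. A Thomassé-style flower/expansion argument around each vertex of $X$, combined with the observation that long induced paths of degree-two vertices can be shortened in a controlled way, should then bound the number of ``relevant'' vertices per vertex of $X$ by $O(k^3)$, yielding the claimed $O(k^5)$ total.

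The main obstacle is that the standard FVS kernelization contracts degree-two vertices freely along induced paths, whereas for SFVS such a contraction strictly changes $|N_G[S]|$ for any solution $S$ that contains or is adjacent to an endpoint of the contracted path, making these contractions unsafe in general. I expect to overcome this by restricting short-cuts to paths whose internal vertices provably cannot appear in $N[S]$ for any inclusion-minimal SFVS solution of size at most $k$, or by using a weighted-kernel formulation in which suppressed paths carry a neighborhood budget that is reimbursed during the final decision step. Combined with the expansion-based bound on branching vertices around $X$, this should suffice to complete the kernelization within the claimed $O(k^5)$ vertex bound.
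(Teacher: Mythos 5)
Your very first reduction step is unsound, and it erases exactly the structure that makes \textsc{Secluded Feedback Vertex Set} differ from plain \textsc{FVS}. Deleting degree-one vertices is \emph{not} safe here: although such vertices never lie in an inclusion-minimal solution~$S$, they do lie in $N_G[S]$ whenever their unique neighbor is in~$S$, and the problem bounds $|N_G[S]|$, not $|S|$. Concretely, take a triangle and attach $k$ pendant neighbors to each triangle vertex: any feedback vertex set must contain a triangle vertex, so $|N_G[S]|\geq k+3$ and the instance is a no-instance, yet after your rule only the bare triangle remains, which is a yes-instance for every $k\geq 3$. Indeed, the paper's NP-hardness proof for this problem (\cref{thm:sfvsisnphard}) consists of nothing but adding pendant vertices, so any rule that discards them wholesale collapses the secluded problem back to ordinary \textsc{FVS}. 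The paper's corresponding rules are much more careful: a vertex $u$ is deleted only when its entire closed neighborhood lies outside the 2-core, so that $u$ can never occur in $N[F]$ for a minimal solution (\cref{rr:outside}), and a pendant neighbor of $v$ is removed only when $\deg_G(v)>k$, so that $v$ can never be a solution vertex before or after the removal (\cref{rr:bigdeg}). The pendant vertices attached to 2-core vertices of degree at most $k$ must be kept (up to $k+1$ per 2-core vertex), and they are responsible for a $(k+1)$ factor in the final size bound; your proposal has no mechanism for retaining or counting them.

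You correctly flag the degree-two contraction problem, but neither proposed fix is worked out, and the ``weighted-kernel formulation'' is not a kernelization at all, since a kernel must be an instance of \textsc{Secluded Feedback Vertex Set} itself, not of an annotated variant whose budget is ``reimbursed during the final decision step.'' The paper's actual solution to this obstacle is concrete: a long path of 2-core-degree-two vertices is replaced by two new internal vertices, each of which receives $r$ new pendant neighbors, where $r=\min_i \deg_G(v_i)-2$ over the internal path vertices (\cref{rr:path}); combined with \cref{rr:bigdeg} this value is bounded and the exchange argument preserves $|N_G[\cdot]|$ exactly in both directions. Your high-level plan for the second phase (2-approximate FVS $X$ of size $O(k)$, a Thomass\'e-style petal argument, bounding vertices per vertex of $X$) is in the same spirit as the paper's decomposition via $B=F\cup M'$ with $|B|=O(k^2)$, component size $O(k^2)$, and $O(k^3)$ components, but the step you leave as ``should then bound\dots by $O(k^3)$'' is precisely where the paper needs the flower rule, the parallel-paths rule, and the retained pendant structure; without a sound treatment of the degree-one and degree-two vertices the counting cannot even be set up, so the proposal as written does not establish the $O(k^5)$ kernel.
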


\noindent
In the remainder of this section, we describe the data reduction rules that yield the polynomial-size problem kernel. 
The reduction rules are inspired by the kernelization algorithm for the \textsc{Tree Deletion Set} problem given by~\citet{GiannopoulouLSS16}.

We start by introducing the following notation.
A \emph{2-core}~\citep{SEIDMAN1983269} of a graph $G$ is a maximum subgraph $H$ of $G$ such that, for each $v \in V(H)$, we have $\deg_H(v) \ge 2$.
Note that a 2-core~$H$ of a given graph~$G$ is unique and can be found in polynomial time~\citep{SEIDMAN1983269}.
If $H$~is a 2-core of~$G$, then we use $\deg_{H|0}(v)$ to denote $\deg_H(v) $ if $v \in V(H)$ and $\deg_{H|0}(v)=0$ if $v \notin V(H)$.

\begin{observation}%
\label{obs:2core}
  Let $G$ be a graph, $H$ its 2-core, and $C$ a connected component of~$G - V(H)$. Then $|N(C)\cap V(H)|\leq 1$ and $|N(H)\cap V(C)|\leq 1$.
\end{observation}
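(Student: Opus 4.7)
The plan is to prove both inequalities by contradiction, in each case exploiting the maximality of the 2-core $H$: if the inequality failed, I would exhibit a strictly larger subgraph of $G$ in which every vertex still has degree at least~$2$, contradicting that $H$ is the (unique) maximum such subgraph.

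For the first claim, suppose $|N(C)\cap V(H)|\ge 2$, so there exist distinct $h_1,h_2\in V(H)$ each having a neighbor in $C$. Pick $c_1\in N(h_1)\cap V(C)$ and $c_2\in N(h_2)\cap V(C)$; since $C$ is connected, there is a $c_1$-$c_2$ path $P_C$ inside $C$. Let $P$ be the $h_1$-$h_2$ walk obtained by prepending $h_1$ and appending $h_2$ to $P_C$; by shortcutting repeated vertices I may assume $P$ is a simple path with internal vertices $c_1,\dots,c_k\in V(C)$ (the case $c_1=c_2$, giving $P=h_1,c_1,h_2$, is handled uniformly). Consider $H':=G[V(H)\cup\{c_1,\dots,c_k\}]$. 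Every vertex of $H$ retains its two neighbors in $H$, so still has degree~$\ge 2$ in $H'$. Each internal vertex of $P$ has its two neighbors on $P$ present in $H'$ (note that $h_1,h_2\in V(H)\subseteq V(H')$), so also has degree~$\ge 2$. Thus $H'$ is a subgraph in which every vertex has degree $\ge 2$, and $V(H')\supsetneq V(H)$ since $c_1\in V(C)$ is disjoint from $V(H)$. This contradicts the maximality of the 2-core.

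For the second claim, suppose $|N(H)\cap V(C)|\ge 2$, so there are distinct $c_1,c_2\in V(C)$ with neighbors $h_1,h_2\in V(H)$ (possibly $h_1=h_2$). Take a simple $c_1$-$c_2$ path $Q$ inside $C$ (which exists and has length $\ge 1$ because $c_1\ne c_2$), and set $H':=G[V(H)\cup V(Q)]$. Vertices of $H$ again keep their two neighbors in $H$. The endpoint $c_1$ has the path-neighbor on $Q$ as well as $h_1\in V(H')$, hence degree $\ge 2$ in $H'$; symmetrically for $c_2$. Any interior vertex of $Q$ has its two $Q$-neighbors available. So $H'$ is again a proper extension of $H$ with minimum degree~$2$, contradicting maximality.

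The only subtlety worth flagging is the handling of the degenerate cases (namely $c_1=c_2$ in the first argument, and $h_1=h_2$ in the second), but in both situations the argument goes through unchanged because the required two neighbors of each newly added vertex are still distinct elements of $V(H')$; no separate case analysis is needed. No further ingredients beyond connectivity of $C$ and the definition of the 2-core are required.
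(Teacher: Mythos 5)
Your proof is correct and follows essentially the same route as the paper: assume the bound fails, take a path through the connected component $C$ joining the two relevant attachment points, and observe that adding its vertices to $V(H)$ yields a strictly larger subgraph of minimum degree two, contradicting the maximality of the 2-core (your explicit treatment of the second inequality merely fills in what the paper dismisses as ``analogous''). No gaps.
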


  \begin{proof}
    We only show the first statement.  The second statement follows analogously.
  Towards a contradiction, assume that $|N(C)\cap V(H)|\geq 2$.
  Then, there are vertices~$x,y\in V(H)$ with~$x\neq y$ such that $x$ and $y$ have neighbors~$a, b\in V(C)$.
  If $a=b$, then $G'=G[V(H)\cup \{a\}]$ is a subgraph of $G$ such that $\deg_{G'} (v)\ge 2$ for every $v \in V(G')$, contradicting the choice of~$H$ as the 2-core of $G$.
  If $a\neq b$, then, since $C$ is connected, there is a path~$P_C$ in $C$ connecting $a$ and $b$. 
  Thus, $G'=G[V(H)\cup V(P_C)]$ is a subgraph of~$G$ such that $\deg_{G'} (v)\ge 2$ for every~$v \in V(G')$, again contradicting the choice of~$H$  as the 2-core of $G$.
  \end{proof}

\noindent Note that only the vertices in the 2-core are involved in cycles of $G$. 
However, the vertices outside the 2-core can influence the size of the closed neighborhood of the feedback vertex set.
Next, we apply the following reduction rules to our input instance with~$G$ given its 2-core~$H$.

We say that a feedback vertex set~$F$ in~$G$ is secluded if $|N[F]| \le k$.
Further, we say that a secluded feedback vertex set~$F$ in~$G$ is \emph{minimal}, if $F\setminus \{v\}$ is not a secluded feedback vertex set in~$G$ for all~$v\in F$.

\begin{rrule}\label{rr:outside}
If~$\deg_{H|0}(v)=0$ for every $v \in N[u]$, then delete~$u$.
\end{rrule}

\begin{proof}[Proof of correctness]
  Let $F$ be a minimal secluded feedback vertex set in $G$. 
  Since $\deg_{H|0}(v)=0$ for all $v\in N[u]$, none of them is involved in a cycle.
  Hence, $N[u]\cap F=\emptyset$.
  In particular, it follows from $N(u)\cap F=\emptyset$ that $u\not\in N[F]$.
  Hence, $F$ is a secluded feedback vertex set in $G-\{u\}$ as well.

  Conversely, let $F$ be a minimal secluded feedback vertex set in~$G_u:=G-\{u\}$. 
  We have to show that $F$ is a secluded feedback vertex set in~$G$ as well.
  First observe that since $\deg_{H|0}(v)=0$ for all $v\in N_G[u]$, $H$ is also the 2-core of $G_u$.
  As only vertices in $H$ participate in cycles of $G_u$ and $F$ is chosen as minimal, none of the vertices $N_G(u)\subseteq V(G_u)$ is contained in $F$.
  If follows that $|N_{G}[F]|=|N_{G_u}[F]|\leq k$, and thus $F$ is a secluded feedback vertex set in~$G$ as well.
\end{proof}

\noindent
Note that, if \cref{rr:outside} has been exhaustively applied, then $\deg_{H|0}(v)=0$ implies that $v$~has exactly one neighbor, which is in the 2-core of the graph.

\begin{rrule}\label{rr:path}
If $v_{0},v_{1},\dots, v_{\ell},v_{\ell+1}$ is a path in the input graph such that $\ell\geq 3$, $\deg_{H|0}(v_{i})=2$ for every $i\in \{1, \ldots, \ell\}$, $\deg_{H|0}(v_0) \ge 2$, and $\deg_{H|0}(v_{\ell+1}) \ge 2$, then %
let $r = \min\{ \deg_G(v_i) \mid i \in \{1, \ldots, \ell\}\} - 2$ and remove vertices $v_1, \ldots, v_{\ell}$ and their neighbors not in the 2-core. 
Then introduce two new vertices~$u_{1}$ and~$u_{2}$ with edges $\{v_{0},u_{1}\}$, $\{u_{1},u_{2}\}$, and $\{u_{2},v_{l+1}\}$ and $2r$ further new vertices and connect $u_1$ with $r$ of them and $u_2$ with another $r$ of them. 
\end{rrule}

\begin{proof}[Proof of correctness]
 Let $F$ be a minimal secluded feedback vertex set in~$G$, and let $G'$ be the graph obtained from $G$ by applying~\Cref{rr:path}.
  Suppose $F\cap \{v_1,\ldots,v_\ell\}\neq \emptyset$.
  Since $\deg_{H|0}(v_i)=2$ for all $i\in[\ell]$, each of the vertices $v_1,\ldots,v_\ell$ participates in the same set of cycles of~$G$.
  Hence, it follows that $F\cap \{v_1,\ldots,v_\ell\}=\{v_q\}$ for some $q\in[\ell]$.
  Moreover, the set of cycles where $v_1,\ldots,v_\ell$ appear in is a subset of the set of cycles where $v_0$ appears in and a subset of the set of cycles where $v_{\ell+1}$ appears in.
  Hence, due to minimality of $F$ we have that $v_q \in F$ implies $v_0\not\in F$ and $v_{\ell+1}\not\in F$.
  Due to the definition of $r$ the number of neighbors of $v_q$ not in the 2-core is at least $r$.
  Then $F'=(F\setminus \{v_q\})\cup \{u_1\}$ is a secluded feedback vertex set of $G'$ with $|F'|=|F|$ and $|N_G[F]|\ge|N_{G'}(F')|$.
  
  Suppose $F\cap \{v_1,\ldots,v_\ell\}= \emptyset$ but $F\cap \{v_0,v_{\ell+1}\}\neq \emptyset$.
  Then $|F\cap \{v_0,v_{\ell+1}\}|=|N_G[F] \cap \{v_1,v_\ell\}|=|N_{G'}[F] \cap \{u_1,u_2\}|$.
  It follows that $F$ is a secluded feedback vertex set in $G'$ with $|N_{G'}[F]|=|N_G[F]|$.
  
  The case where $F\cap \{v_0,\ldots,v_{\ell+1}\}= \emptyset$ is trivial.
  
  Conversely, let $F$ be a minimal secluded feedback vertex set in~$G'$.
  Suppose that $F\cap \{u_1,u_2\}\neq \emptyset$.
  Since $F$ is minimal, either $u_1$ or $u_2$ is contained in $F$, since both vertices participate in the same set of cycles in $G'$.
  Without loss of generality, let $u_1\in F$.
  Moreover, $F\cap \{v_0,v_\ell\}= \emptyset$, as otherwise $F\setminus \{u_1\}$ is a smaller secluded feedback vertex set in~$G'$, contradicting the minimality of~$F$.
  By the choice of $r$, there exists $q\in [\ell]$ such that $\deg_G(v_q)- 2=r$.
  Then $F':=(F\setminus \{u_1\})\cup \{v_q\}$ is a feedback vertex set in $G$ with $|N_G[F']|=|N_{G'}[F]|$.

  Suppose that $F\cap \{v_0,v_{\ell+1}\}\neq \emptyset$.
  Since $F$ is minimal, it follows that $F\cap \{u_1,u_2\}=\emptyset$.
  Observe that $F$ is also a feedback vertex set in $G$, as $v_0$ and $v_{\ell+1}$ participate in each cycle containing any vertex in $\{v_1,\ldots,v_\ell\}$.
  Since $|F\cap \{v_0,v_{\ell+1}\}|=|N_{G'}[F]\cap\{u_1,u_2\}|=|N_{G}[F]\cap\{v_1,v_\ell\}|$, it follows that $|N_G[F]|=|N_{G'}[F]|$.
  Hence, $F$ is a secluded feedback vertex set in $G$.
  
  The case where $F\cap \{v_0,u_1,u_2,v_{\ell+1}\}= \emptyset$ is trivial.
\end{proof}

\noindent For $x\in V(G)$, we denote by $\petal(x)$ the maximum number of cycles only intersecting in~$x$.

\begin{rrule}\label{rr:flower}
If there is a vertex $x\in V(G)$ such that $\petal(x)\geq \lceil\frac{k}{2}\rceil$, then output that $(G,k)$ is a no-instance of SFVS.
\end{rrule}

\begin{proof}[Proof of correctness]
There are at least $\lceil\frac{k}{2}\rceil$ cycles in $G$, which 
are vertex-disjoint except for~$x$.
  Assume that $G$ allows a feedback vertex set $F$ with $|N_G[F]|\leq k$.
  Clearly, $F$ must contain at least one vertex in each of the cycles.
  Therefore $N[F]$ must contain at least three vertices of each cycle.
  As only $x$ can be shared among these triples, we get $|N_G[F]|\geq 2\cdot \lceil\frac{k}{2}\rceil + 1>k$.
  It follows that $G$ does not admit a secluded feedback vertex set.
\end{proof}

\begin{rrule}\label{rr:bigdeg}
If~$v\in V(G)$ is a vertex such that $\deg_G(v) >k$, but $\deg_{H|0}( v) < \deg_G( v)$, then remove one of its neighbors not in the 2-core.
\end{rrule}

\begin{proof}[Proof of correctness]
 First observe that, as $\deg_G(v) >k$, vertex $v$ cannot be contained in any secluded feedback vertex set.
  As additionally $\deg_{H|0}( v) < \deg_G( v)$, we know that there is a vertex $w\in N(v)\setminus V(H)$. 
  Since $w$ is not in the 2-core, it is not involved in the cycles of~$G$.
  Since $\deg_G(v) >k$, removing $w$ from~$G$ results in $\deg_{G-\{w\}}(v)\geq k$ and hence, $v$ cannot be contained in any secluded feedback vertex set of~$G-\{w\}$.
  Altogether, $G$ has a feedback vertex set $F$ with $|N_G[F]|\leq k$ if and only if $G-\{w\}$ has a feedback vertex set $F'$ with $|N_{G-\{w\}}[F']|\leq k$.
\end{proof}

\begin{rrule}\label{rr:parallel}
 Let $x,y$ be two vertices of $G$. If there are at least $k$ internally vertex disjoint paths of length at least 2 between $x$ and $y$ in $G$, then output that $(G,k)$ is a no-instance of SFVS. 
\end{rrule}

\begin{proof}[Proof of correctness]
 Observe then that if neither~$x$ nor~$y$ belong to a feedback vertex set~$D$ of~$G$ we need at least~$k-1$ vertices to hit all the cycles, since otherwise there are at least two distinct paths $P_1,P_2$ of length at least 2 between $x$ and $y$ with $(V(P_{1})\cup V(P_{2}))\cap D=\emptyset$ and thus the graph induced by $V(P_{1})\cup V(P_{2})\cup \{x,y\}$ contains a cycle.
    Since each of the $k-1$ vertices has at least two vertices in its open neighborhood and only the vertices $x$ and~$y$ can be shared among these, the closed neighborhood contains at least $k+1$ vertices.
    On the other hand, the open neighborhood of both $x$ and $y$ contains one vertex from each of the $k$ paths. 
    Therefore, their closed neighborhood is of size at least $k+1$ and they cannot be included in the solution.
\end{proof}

\noindent
Note that~\Cref{rr:outside,rr:path,rr:bigdeg,rr:parallel} can be applied trivially in polynomial time.
\Cref{rr:flower} can be applied exhaustively in polynomial time due to the following.
{
\begin{proposition}[\cite{Thomasse10}]\label{algosepvrt}
Let~$G$ be a graph and~$x$ be a vertex of~$G$. 
In polynomial time we can either find a set of $\ell+1$ cycles only intersecting in $x$ (proving that $\petal(x)\geq \ell+1$) or a set of vertices $Z\subseteq V(G)\setminus \{x\}$ of size at most $2\ell$ intersecting every cycle containing~$x$.
\end{proposition}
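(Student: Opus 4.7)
The plan is to reduce the question to a standard vertex-disjoint paths / minimum vertex cut problem, which is solvable in polynomial time by Menger's theorem and max-flow techniques. Concretely, I would construct an auxiliary graph~$G^*$ from~$G$ by deleting~$x$ and introducing two new non-adjacent vertices~$x_1, x_2$, each made adjacent to every vertex of~$N_G(x)$. A cycle $x, v_1, v_2, \dots, v_p, x$ through~$x$ in~$G$ (with $v_1 \neq v_p$, both in $N_G(x)$) then corresponds to the $x_1$-$x_2$ path $x_1, v_1, v_2, \dots, v_p, x_2$ in~$G^*$ with internal vertex set $\{v_1, \dots, v_p\}$, and this correspondence is bijective. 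Moreover, two cycles through~$x$ intersect only in~$x$ if and only if the corresponding $x_1$-$x_2$ paths are internally vertex-disjoint.

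Next, I would run a standard polynomial-time max-flow algorithm with unit vertex capacities on $V(G^*) \setminus \{x_1, x_2\}$ to compute the maximum number~$\mu$ of pairwise internally vertex-disjoint $x_1$-$x_2$ paths in~$G^*$, together with a witnessing minimum $x_1$-$x_2$ vertex cut $Z \subseteq V(G^*) \setminus \{x_1, x_2\} = V(G) \setminus \{x\}$ of size~$\mu$; the existence of such a cut of matching size is guaranteed by Menger's theorem since $x_1$ and $x_2$ are non-adjacent in~$G^*$. If $\mu \ge \ell + 1$, translating the disjoint paths back yields $\ell + 1$ cycles through~$x$ intersecting only in~$x$, certifying $\petal(x) \ge \ell + 1$. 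Otherwise $|Z| = \mu \le \ell \le 2\ell$, and since $G^* - Z$ contains no $x_1$-$x_2$ path, $G - Z$ contains no cycle through~$x$, so $Z$ meets every cycle through~$x$ in~$G$, as required.

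The main step requiring care is the correspondence between cycles through~$x$ in~$G$ and $x_1$-$x_2$ paths in~$G^*$: one must verify that (i)~the internal vertices of an $x_1$-$x_2$ path in~$G^*$ genuinely form a simple cycle through~$x$ in~$G$ (which follows from $x_1, x_2$ being non-adjacent, ensuring the path has at least one internal vertex with first and last internal vertices in $N_G(x)$, and from simple-path distinctness), and (ii)~vertex-disjointness transfers in both directions. Both checks are routine. Note that this approach actually yields the sharper bound $|Z| \le \ell$, so the stated $2\ell$ bound follows a fortiori; the same slack is what one would obtain by invoking Gallai's theorem on $T$-paths with $T = \{x_1, x_2\}$ instead of Menger's theorem.
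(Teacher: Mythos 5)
There is a genuine gap, and it lies exactly in the step you flagged as ``routine''. In your auxiliary graph~$G^*$, every single neighbor $u \in N_G(x)$ already yields the $x_1$-$x_2$ path $x_1, u, x_2$ with just one internal vertex, and such a path does \emph{not} correspond to any cycle through~$x$ in a simple graph. Consequently the maximum number~$\mu$ of internally vertex-disjoint $x_1$-$x_2$ paths in~$G^*$ is exactly $|N_G(x)|$ (the set $N_G(x)$ is itself a minimum cut, and the length-two paths through its members are pairwise internally disjoint), which has nothing to do with $\petal(x)$: for a star with center~$x$ we have $\petal(x)=0$ while $\mu=\deg(x)$ is unbounded, so your first branch would fire and then be unable to translate the paths back into cycles. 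The deeper problem is that your claimed sharper bound $|Z|\le\ell$ cannot hold by \emph{any} argument: in $K_4$ with $x$ one of its vertices we have $\petal(x)=1$ (two cycles meeting only in~$x$ would need four other vertices), yet no single vertex of $V(G)\setminus\{x\}$ hits all triangles through~$x$, so for $\ell=1$ neither alternative of your strengthened dichotomy is available. This shows the failure is not a fixable detail of the gadget but a sign that a Menger-type exact duality is the wrong tool here.

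The correct route (the one behind the cited result and standard in the kernelization literature) is to delete~$x$ and view cycles through~$x$ as paths in $G-x$ whose two endpoints are \emph{distinct} vertices of $T:=N_G(x)$; cycles pairwise intersecting only in~$x$ then correspond to \emph{fully} vertex-disjoint such $T$-paths, disjoint including their endpoints, which is not a two-terminal flow problem. Gallai's min-max theorem on vertex-disjoint $T$-paths (algorithmically available in polynomial time, e.g.\ via the reduction to matching) yields precisely the stated dichotomy: either $\ell+1$ pairwise vertex-disjoint $T$-paths, giving $\ell+1$ cycles meeting only in~$x$, or a set of at most $2\ell$ vertices meeting every $T$-path, hence every cycle through~$x$. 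The factor~$2$ is intrinsic to this $T$-path duality (as the $K_4$ example shows), and your closing remark about applying Gallai with $T=\{x_1,x_2\}$ does not help, since $T$-paths for a two-element~$T$ are again just $x_1$-$x_2$ paths and inherit the same defect.
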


\noindent An instance $(G,k)$ of SFVS is called {\em reduced} if none of the \Cref{rr:outside,rr:path,rr:flower,rr:bigdeg,rr:parallel} can be applied.
Following the proof by~\citet{GiannopoulouLSS16}, we first give structural decomposition lemma, then bound the size of components of the decomposition, and finally bound the number of components in the decomposition to obtain the polynomial kernel for SFVS parameterized by~$k$.
We start with the following structural decomposition lemma, which identifies the set~$B$.

\begin{lemma}\label[lemma]{lem:setCm}
There is a polynomial time algorithm that given a reduced instance $(G,k)$ of SFVS 
either correctly decides that $(G,k)$ is a no-instance or
finds two sets~$F$ and~$M'$ such that, denoting $B = F \cup M'$, the following holds.
\begin{enumerate}[(i)]
\item $F$ is a feedback vertex set of~$G$.
\item Each connected component of $G - B$ has at most 2 neighbors in~$M'$.
\item For every connected component $C$ in $G- B$ and $x\in B$,  $|N_{G}(x)\cap C|\leq 1$, that is, 
every vertex~$y$ of~$F$ and every vertex~$x$ of~$M'$ have at most one neighbor in every connected component~$C$ of~$G- B$.
\item $|B| \le 4k^2+2k$.
\end{enumerate}
\end{lemma}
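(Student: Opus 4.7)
The plan is to construct $B = F \cup M'$ in two stages: first computing a small feedback vertex set $F$, then adjoining a separator set $M'$ that refines the forest $G - F$ into pieces interacting simply with $B$. Both stages should run in polynomial time and together deliver (i) with the $2k$ summand of (iv) from the first stage, and (ii), (iii) with the $4k^2$ summand from the second.

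For the first stage I would run a polynomial-time $2$-approximation algorithm for \textsc{Feedback Vertex Set} (e.g., the one by Bafna, Berman, and Fujito) on $G$. If its output has more than $2k$ vertices, the minimum FVS of $G$ exceeds $k$; since any secluded FVS $S$ satisfies $|S| \le |N[S]| \le k$, the instance must be a no-instance and the algorithm correctly terminates. Otherwise the returned set $F$ satisfies $|F| \le 2k$, which gives (i). For the second stage, note that $G - F$ is a forest and that \cref{rr:flower} being non-applicable guarantees $\petal_G(x) < \lceil k/2 \rceil$ for every vertex $x$. For each $x \in F$ I would invoke \cref{algosepvrt} with $\ell = \lceil k/2 \rceil - 1$, obtaining in polynomial time a set $Z_x$ of at most $k - 2$ vertices that hits every cycle through $x$ in $G$. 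I then set $M' := \bigl(\bigcup_{x \in F} Z_x\bigr) \setminus F$, contributing at most $2k \cdot (k - 2) \le 2k^2$ vertices. Extending the same construction once more to vertices of $M'$ and using \cref{rr:parallel} to control the iteration should keep the total at most $4k^2$, yielding the bound in (iv).

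The verification of (iii) is then immediate: if some $y \in B$ had two neighbors $a, b$ in a component $C$ of $G - B$, the unique $a$-$b$ path in the tree $C$ together with the edges $ya$ and $yb$ would form a cycle through $y$ avoiding $B \supseteq Z_y$, contradicting $Z_y$ hitting every cycle through $y$. Property (ii) follows by a dual argument: three distinct $M'$-neighbors of one component $C$ yield cycles through the ``parent'' $F$-vertices from which the corresponding $Z_x$ were computed, cut by the same family of $Z_x$. The main obstacle I expect is bounding $|M'|$ tightly at $4k^2$: a naive iteration of the cycle-cut step may blow up to $\Theta(k^3)$, so one must argue that \cref{rr:parallel} together with the minimality enforced by \cref{rr:outside,rr:path,rr:bigdeg} collapses the growth to a linear amortized contribution per $F$-vertex. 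A likely cleaner alternative is to take $M' \cap V(C)$, for each tree $C$ of $G - F$, to be the branching vertices of the minimal Steiner subtree of $C$ spanning $N_G(F) \cap V(C)$; each piece of $C - M'$ then has at most two $M'$-vertices on its boundary by tree structure, and a counting of Steiner branching points against the petal bound and \cref{rr:parallel} should deliver the required $O(k^2)$ estimate.
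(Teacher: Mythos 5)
Your first stage (the $2$-approximation for \textsc{Feedback Vertex Set}, rejecting if the output exceeds $2k$ and otherwise keeping $F$ with $|F|\le 2k$) and your use of \cref{algosepvrt} together with \cref{rr:flower} to obtain, for each $x\in F$, a set $Z_x$ of size at most $2(\lceil k/2\rceil-1)\le k$ hitting all cycles through $x$ are exactly the paper's first steps. The genuine gap is in how you turn $Z=\bigcup_{x\in F}Z_x$ into $M'$: setting $M'=Z\setminus F$ gives the size bound but provides no control whatsoever over property~(ii) --- a single component of $G-B$ may be adjacent to arbitrarily many vertices of $Z$. The missing idea is the LCA-closure of \cref{caclsbnd}: root each tree $T_i$ of the forest $G-F$, take $M_i'$ to be the LCA-closure of $Z\cap V(T_i)$ in $T_i$, and set $M'=\bigcup_i M_i'$. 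This at most doubles the size, so $|M'|\le 2|Z|\le 2\cdot 2k\cdot k=4k^2$ (giving~(iv)), and \cref{caclsbnd} directly guarantees that every component of $T_i-M_i'$, hence of $G-B$, has at most two neighbors in $M'$, which is exactly~(ii). Neither of your two substitutes fills this hole: ``extending the construction once more to $M'$'' adds new hitting-set vertices that would again have to be handled (the $\Theta(k^3)$ blow-up you anticipate) and still does not yield~(ii); and your Steiner-tree alternative uses the wrong terminal set, namely $N_G(F)\cap V(C)$, whose size is not bounded in $k$ by the reduction rules, so the claimed $O(k^2)$ count of branching vertices is unsupported. Had you taken the branching/LCA vertices with respect to the terminal set $Z\cap V(C)$ instead, you would essentially recover the paper's construction with the bound $|M_i'|\le 2|Z\cap V(T_i)|$.

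A smaller but related issue: your verification of~(iii) silently assumes a hitting set $Z_y$ is available for every $y\in B$, which is what pushed you toward iterating the construction on $M'$; this is unnecessary. For $y\in F$ the paper argues as you do: two neighbors of $y$ in a component $C$ of $G-B$ give a cycle in $G[V(C)\cup\{y\}]$ avoiding $Z_y\subseteq M'\cup(F\setminus\{y\})$ --- and note that this containment is precisely why the closure must contain $Z\cap V(T_i)$, another reason the terminal set matters. For $y\in M'$ no set $Z_y$ is needed at all: such a cycle would lie entirely in $G-F$ except for $y\notin F$, contradicting $F$ being a feedback vertex set. Finally, \cref{rr:parallel} plays no role in this lemma (it is used later, when bounding the number of components of $G-B$), so your plan to use it to ``control the iteration'' points in the wrong direction.
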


{
  Similar to~\citet{GiannopoulouLSS16}, we also make use of the following concept.
  For a rooted tree~$T$ and vertex set~$M$ in~$V(T)$ the lowest common ancestor-closure ({\em LCA-closure})~$\lcac(M)$ is obtained by the following process.
  Initially, set~$M'=M$. 
  Then, as long as there are vertices~$x$ and~$y$ in~$M'$ whose lowest common ancestor~$w$ is not in~$M'$, add~$w$ to~$M'$. 
  Finally, output $M'$ as the LCA-closure of~$M$.

  \begin{lemma}[\citet{FominLMS12}]\label{caclsbnd}
  Let $T$ be a tree and $M\subseteq V(T)$. If $M'=\lcac(M)$ then $|M'|\leq 2|M|$ and for every connected component~$C$ of $T- M'$, $|N_T(C)|\leq 2$.
  \end{lemma}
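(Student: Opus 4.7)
I will prove the two parts of the lemma separately, both by first rooting $T$ at an arbitrary vertex. After the iterative construction terminates, $M'$ is closed under pairwise lowest common ancestors, i.e., $\operatorname{LCA}_T(u,v) \in M'$ for every $u,v \in M'$.

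\textbf{Cardinality bound.} For $|M'| \le 2|M|$, I plan to introduce an auxiliary rooted tree $T'$ on vertex set $M'$ in which the parent of $v \in M'$ is the nearest proper ancestor of $v$ in $T$ that lies in $M'$ (if any); by rooting $T$ at some fixed vertex of $M$ one may assume $T'$ is a tree rather than a forest. The key structural observation is that every $w \in M' \setminus M$ has at least two children in $T'$: indeed, $w$ was added as $\operatorname{LCA}_T(u,v)$ for some $u,v \in M'$, so $u$ and $v$ lie in two distinct child-subtrees of $w$ in $T$, and the nearest $M'$-descendants of $w$ along these two directions give distinct children of $w$ in $T'$. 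Consequently, every leaf of $T'$ belongs to $M$. Applying the standard fact that, in a rooted tree, the number of branching vertices (those with at least two children) is at most the number of leaves minus one (proved by contracting degree-one internal chains and inducting on the number of leaves), I conclude $|M' \setminus M| \le |M| - 1$, so $|M'| \le 2|M| - 1 \le 2|M|$.

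\textbf{Neighborhood bound.} For $|N_T(C)| \le 2$ I plan a proof by contradiction. Assume some component $C$ of $T - M'$ has three distinct neighbors $x_1, x_2, x_3 \in M'$, and fix $y_i \in C$ adjacent to $x_i$. Let $m$ be the unique median of $\{x_1, x_2, x_3\}$ in $T$, i.e., the vertex lying on each pairwise $x_i$-$x_j$ path. Either $m \in \{x_1, x_2, x_3\} \subseteq M'$ directly, or $m$ is $\operatorname{LCA}_T(x_i, x_j)$ for some pair when $T$ is suitably rooted, so $m \in M'$ by closure. Now root $T$ at $m$: the three $x_i$ then lie in three distinct child-subtrees of $m$ (or, when $m = x_i$, the other two lie in distinct child-subtrees of $m$). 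A short case distinction on the position of each $y_i$ (either on the $x_i$-to-$m$ segment, or in a descendant subtree of $x_i$ in the rooted $T$) will show that every path in $T$ between two of the $y_i$'s must pass through $m$ or through some $x_j$, all of which lie in $M'$. Hence $y_1, y_2, y_3$ cannot all sit in the same component $C$ of $T - M'$, a contradiction.

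\textbf{Main obstacle.} The subtlest case will be $m \in \{x_1, x_2, x_3\}$ in the second part, say $m = x_1$: there the argument no longer routes all three pairwise paths through $m$ uniformly, and one must instead argue that $y_1$ is separated from at least one of $y_2, y_3$ by $x_1 = m$ itself, while if $y_2$ and $y_3$ happen to lie in the same child-subtree of $m$ they are separated from each other by $x_2$ or $x_3$. Handling these subcases cleanly via the $m$-rooted Steiner structure will require some care, but is essentially bookkeeping once the median vertex is identified and placed in $M'$.
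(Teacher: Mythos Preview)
The paper does not give its own proof of this lemma; it is quoted from \citet{FominLMS12} and used as a black box in the proof of \cref{lem:setCm}. So there is no in-paper argument to compare against, and I comment only on the correctness of your proposal.

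Your argument for the cardinality bound is the standard one and is correct. One remark: you propose to re-root $T$ at a vertex of $M$, but the rooting of $T$ is already fixed by the definition of $\lcac$, and $M'$ genuinely depends on that rooting (e.g.\ on a path $a\text{--}b\text{--}c\text{--}d\text{--}e$ with $M=\{a,e\}$, rooting at $a$ gives $M'=\{a,e\}$ while rooting at $c$ gives $M'=\{a,c,e\}$). Fortunately the re-rooting is unnecessary: with the given rooting, the auxiliary tree $T'$ is automatically connected, because for any $u,v\in M'$ their LCA lies again in $M'$ and is therefore a common $T'$-ancestor of both. The rest of your counting goes through verbatim.

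Your plan for the neighborhood bound is also correct but heavier than needed. Once you know the median $m$ of $x_1,x_2,x_3$ is in $M'$ (and here you should argue with respect to the \emph{given} rooting rather than a ``suitable'' one: the median always coincides with one of the three pairwise LCAs under any fixed rooting, hence lies in $M'$ by closure), the contradiction is immediate. For each pair $i\neq j$ the unique $x_i$--$x_j$ path in $T$ is $x_i,y_i,\ldots,y_j,x_j$ with all internal vertices in $C$, since $C$ is connected and $T$ has unique paths. The median $m$ lies on all three pairwise paths but $m\notin C$, so $m\in\{x_i,x_j\}$ for every pair $\{i,j\}$; intersecting over the three pairs forces $x_1=x_2=x_3$, a contradiction. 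This bypasses the rooted case analysis you flag as the main obstacle.
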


  We continue with proving our structural decomposition lemma.

  \begin{proof}[Proof of \Cref{lem:setCm}]
  Note that if there is a feedback vertex set of $G$ with closed neighborhood of size at most $k$, then it is also a feedback vertex set in~$G$ of size at most $k$.
  Thus, we can apply the 2-approximation algorithm for \textsc{Feedback Vertex Set} on $G$ due to~\citet{BafnaBF99} to find 
  in polynomial time a feedback vertex set~$F$ of~$G$. 
  If $|F|> 2k$, then we output that $(G,k)$ is a no-instance of SFVS. 
  Hence, we assume $|F|\leq 2k$ in the following.
  Since $F$ is a feedback vertex set in~$G$, property (i) is trivially fulfilled.
  Moreover, $G-F$ is a collection of trees $T_1,\ldots,T_\ell$. 
  We select for each of the trees~$T_i$ some root vertex $v_i\in V(T_i)$.
  It remains to construct the set $M'$ such that $F\cup M'$ fulfills conditions (ii)--(iv).

  Recall that the instance $(G,k)$ is reduced.
  Hence, \Cref{rr:flower} is not applicable, and hence $\petal(x)<\lceil\frac{k}{2}\rceil$ for all $x\in F$.
  We apply \Cref{algosepvrt} to each vertex in $v\in F$, obtaining a set $Z_v\subseteq V(G)\setminus \{v\}$ intersecting each cycle containing $v$ with $|Z_v|\leq k$.
  Let $Z:=Z_1\cup \ldots \cup Z_{|F|}$ denote the union of these sets. 
  Observe that $|Z|\leq 2k^2$.
  We set $M_i := T_i\cap Z$ and $M_i':=\lcac(M_i)$ for all $i\in[\ell]$.
  Observe that, by~\Cref{caclsbnd}, $|M_i'|\leq 2|M_i|$. 
  Finally, we set $M'=\bigcup_{i\in[\ell]} M_i'$ and $B=F\cup M'$ (note that $F\cap M'=\emptyset$).
  Observe that $|M'|\leq \sum_{i\in[\ell]} |M_i'| \le \sum_{i\in[\ell]} 2|M_i|\leq 2|Z| \leq 4k^2$ and by~\Cref{caclsbnd}, for every connected component $C$ in $G-B$ it holds that $|N_{G-F}(C)|\leq 2$ (hence, property (ii) is fulfilled).
  Altogether, $|B|=|F|+|M'|\leq 2k+4k^2$, yielding property (iv).
  It remains to show that property (iii) is fulfilled.

  Let $C$ be a connected component of $G-B$ and $x\in B$ some vertex. 
  Suppose that $x$ has two neighbors in $C$.
  Then $C_x:=C \cup \{x\}$ induces a cycle in $G$ as $C$ is connected.
  If $x\in F$, then this contradicts the set $Z_x \subseteq Z \subseteq M'\cup (F \setminus\{x\})$ hitting every cycle containing $x$.
  If $x\in M'$, then this contradicts the set $F$ hitting each cycle in $G$.
  Hence, property~(iii) is fulfilled.
  \end{proof}
}

\noindent Next, we show that if~$B$ is as in~\Cref{lem:setCm}, then the size and the number of the connected components in the $G-B$ is polynomially bounded from above in the size $k$ of the closed neighborhood of the feedback vertex set in question.
We first bound from above the size of the connected components in $G-B$ as follows.

\begin{lemma}\label[lemma]{lem:compsize}
 Let $(G,k)$ and~$B$ be as in~\Cref{lem:setCm}, and let~$C$ be a connected component of~$G - B$. Then the number of vertices~$|V(C)|$ of the connected component~$C$ is at most $(12k+7)(k+1)$.
\end{lemma}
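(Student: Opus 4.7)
The plan is to decompose $V(C) = C_H \cup C_{\bar H}$ with $C_H := V(C) \cap V(H)$ and $C_{\bar H} := V(C) \setminus V(H)$, bound $|C_{\bar H}| \le k\cdot |C_H|$ and $|C_H| \le 12k+7$ separately, and multiply to get the claim. For the first inequality, note that since $F \subseteq B$ is a feedback vertex set, $G[V(C)]$ is a tree, and after exhaustive application of \cref{rr:outside} combined with \cref{obs:2core}, every vertex $u \notin V(H)$ has exactly one $G$-neighbor, which lies in $V(H)$. Hence $C_{\bar H}$ consists of pendants hanging off $C_H$. By the effect of \cref{rr:bigdeg}, any $v \in C_H$ with a neighbor outside $V(H)$ satisfies $\deg_G(v) \le k$ and therefore carries at most $k$ such pendants; summing over $C_H$ gives $|C_{\bar H}| \le k \cdot |C_H|$, so $|V(C)| \le (k+1)|C_H|$.

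For the $|C_H| \le 12k+7$ bound, I would first establish an edge budget from $V(C)$ to $B$: property~(iii) of \cref{lem:setCm} says each $x \in B$ has at most one neighbor in $V(C)$, property~(ii) caps the $M'$-side of such neighbors at~2, and $|F| \le 2k$, so at most $2k+2$ edges go between $V(C)$ and $B$. Since $G[C_H]$ is a subtree of $G[V(C)]$ and every leaf of $G[C_H]$ has $\deg_H \ge 2$ (because it lies in the 2-core) and thus at least one neighbor in $B \cap V(H)$, the edge budget caps the number of leaves of $G[C_H]$ at $2k+2$; the standard tree inequality then bounds the branching vertices (degree $\ge 3$ in $G[C_H]$) by leaves~$-\,2 \le 2k$. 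The remaining $G[C_H]$-vertices have degree exactly~2; I would split them into \emph{anchored} ones (with an $H$-neighbor in $B$; again bounded by the $2k+2$ budget) and \emph{unanchored} ones (of $H$-degree exactly~2).

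The unanchored degree-2 vertices are then controlled via \cref{rr:path}: since no path in $H$ may contain three consecutive vertices of $H$-degree~2, along any maximal degree-2 path of $G[C_H]$ between two successive anchored vertices or path-endpoints lie at most two unanchored vertices. Summing this per-path bound over all maximal degree-2 paths---whose number is at most (leaves)~$+$~(branching)~$-\,1$, via the skeleton obtained by contracting each maximal degree-2 path---and combining with the bounds on leaves, branching, and anchored degree-2 vertices yields $|C_H| \le 12k+7$, and hence $|V(C)| \le (12k+7)(k+1)$. The main obstacle will be keeping the per-path counting tight enough to land on the stated constant, in particular handling the case where the endpoints of a maximal degree-2 path are themselves $H$-degree-$2$ leaves of $G[C_H]$ and therefore merge into a Rule-2 run, and making full use of the tight budget $|F|+2 \le 2k+2$ (rather than the much weaker $|B| \le 4k^2+2k$).
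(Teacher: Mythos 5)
Your route is essentially the paper's: the same split $|V(C)|\le (k+1)\,|C_H|$ via \cref{rr:outside} and \cref{rr:bigdeg}, the same edge budget of at most $|F|+2\le 2k+2$ edges between $C$ and $B$ from properties~(ii) and~(iii) of \cref{lem:setCm}, the same leaf/branching count in the tree $G[C_H]$, and \cref{rr:path} to cap the degree-two runs. The only point you leave open --- landing on the constant $12k+7$ --- is resolved exactly by the observation you are circling around: the leaves of $G[C_H]$ and your ``anchored'' degree-two vertices are \emph{disjoint} subsets of the single set $B_C$ of vertices of $C$ with a neighbor in $B$, so together they number at most $2k+2$; bounding each of them separately by $2k+2$, as your sketch suggests (``again bounded by the $2k+2$ budget''), only yields roughly $18k+10$. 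The paper does this bookkeeping by lumping $B_C$ and the branching vertices into one set $D_H^{-2}$ with $|D_H^{-2}|\le (2k+2)+(2k+1)=4k+3$; since $G[C_H]$ is acyclic, the remaining degree-two vertices lie on at most $4k+2$ paths joining vertices of $D_H^{-2}$, each with at most two internal vertices by \cref{rr:path}, giving $2(4k+2)+(4k+3)=12k+7$. With the shared budget your skeleton/segment count gives the same (in fact $\le 12k+4$ with your ``leaves $-2$'' bound), and your worry about \cref{rr:path} at path ends is unfounded: the flanking vertices only need $\deg_{H|0}\ge 2$, which every vertex of $C_H$ has. Two small omissions to patch: the case $C_H=\emptyset$ (a component avoiding the 2-core), where your chain $|V(C)|\le(k+1)|C_H|$ fails --- the paper handles it separately via \cref{obs:2core} and \cref{rr:outside}, showing $|V(C)|\le 1$ --- and, for components with $|C_H|\ge 1$, a sentence confirming that $G[C_H]$ is a connected tree (the vertices of $C\setminus V(H)$ are pendant after \cref{rr:outside}, and $F\subseteq B$ kills all cycles), so the tree-counting and the ``number of paths $=$ skeleton edges'' step are legitimate.
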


{
  \begin{proof}
  Le $H$ be the 2-core of $G$. We distinguish two cases on the size of $C_H:=V(C)\cap V(H)$, namely $|C_H|=0$ on the one hand, and $|C_H|>0$ on the other hand.
  
  \emph{Case $|C_H|=0$}:
  Observe that $C$ is a connected component in~$G-V(H)$.
  Hence, by~\Cref{obs:2core}, there is at most one vertex in $C$ adjacent to $H$.
  If $x\in V(C)$ is adjacent to $H$, no other vertex of $C$ is adjacent to~$H$. 
  Suppose that $|V(C)|>1$.
  Since $C$ is connected, there is a vertex $u\in V(C)$ such that $N[u]\subseteq G - V(H)$. 
  Existence of such vertex would contradict the instance being reduced with respect to~\Cref{rr:outside}. 
  Hence, $|V(C)|\leq 1$.
  
  \emph{Case $|C_H|>0$}:
    Recall that $(G,k)$ is reduced.
    On the one hand, due to~\Cref{rr:outside}, we know that every vertex in $C- V(H)$ has a neighbor in $C_H$.
    On the other hand, due to~\Cref{rr:bigdeg}, each vertex in $C_H$ has at most $k$ neighbors in $C- V(H)$.
    Hence, it follows that $|V(C)|\leq (k+1)\cdot |C_H|$.
    Consequently, it remains to bound the number of vertices in~$C_H$.
    
    In the following we count the number of vertices in~$G[C_H]$ having degree 1, 2, and at least 3 in~$G[C_H]$.
    Let $D_H^1\subseteq C_H$ be the set of vertices in~$G[C_H]$ having degree exactly one.
    Since $D_H^1\subseteq V(H)$, it holds that $\deg_{H|0} (v)\geq 2$ for each $v\in D_H^1$. 
    Since there is exactly one neighbor of $v$ in $G[C_H]$, at least one other neighbor is contained in $V(H) \cap B$.
    Let $B_C$ denote the vertices of $C$ having at least one neighbor in~$B$.
    Note that $D_H^1\subseteq B_C$.
    Due to~\Cref{lem:setCm}(ii), $C$ has at most two neighbors in $M'$ (recall $B=F\cup M'$).
    Moreover, due to~\Cref{lem:setCm}(iii), each vertex in $B$ has at most one neighbor in~$C$.
    It follows that $|B_C|\leq|F|+2\leq 2k+2$, and hence $|D_H^1|\leq 2k+2$. 
    
    Let $D_H^{\geq 3}\subseteq C_H$ be the set of vertices in~$G[C_H]$ having degree at least three. 
    Since $G[C_H]$ is acyclic (recall that $F\subseteq B$ is a feedback vertex set), it follows that $D_H^1$ forms the leaves in $G[C_H]$.
    A basic observation on trees is that the number of inner vertices of degree at least three is at most the number of leaves minus one.
    Hence, $|D_H^{\geq 3}|\leq|D_H^1|-1\leq 2k+1$.
    
    Let $D_H^{-2}:=B_C\cup D_H^{\geq 3}$. 
    Observe that $C_H\setminus D_H^{-2}$ only contains vertices having degree exactly two in~$G[C_H]$.
    Moreover, these vertices participate only in paths connecting vertices in~$D_H^{-2}$.
    Since $|D_H^{-2}|\leq 2k+2+2k+1=4k+3$, and $G[C_H]$ is acyclic, there are at most $4k+3-1=4k+2$ many of these paths.
    Moreover, due to~\Cref{rr:path}, these paths contain at most two vertices not being the endpoints. 
    Hence, $|C_H|\leq |C_H\setminus D_H^{-2}| + |D_H^{-2}|\leq 2\cdot (4k+2)+4k+3=12k+7$.
    It follows that $|V(C)|\leq (k+1)\cdot |C_H|\leq (k+1)\cdot(12k+7)$.    
  \end{proof}
}

Having an upper bound on the sizes of the set $B$ and of each connected component in $G-B$, it remains to count the number of connected components in $G-B$.
{
\begin{remark}
It is easy to polynomially upper-bound the number of connected components in $G-B$.
To this end, first observe that by~\Cref{rr:outside}, each connected component in $G-B$ has at least one neighbor in~$B$.
Next, consider those connected components in $G-B$ having exactly one neighbor in $B$.
Due to~\Cref{rr:flower}, each vertex in $B$ is incident to at most $k$ connected components in $G-B$ having exactly one neighbor in $B$.
Hence, the number of these connected components in $G-B$ is upper bounded by $|B|\cdot (k+1)$.
Last, consider those connected components in $G-B$ having at least two neighbors in $B$.
Then it follows from~\Cref{rr:parallel} that two vertices in $B$ are together contained in the neighborhood of at most $k$ connected components in $G-B$. Indeed, each connected component~$C$ of $G - B$ with $\{x,y\}\subseteq N_{G}(C)$ provides a separate path between~$x$ and~$y$. 
Altogether, the number of connected components in $G-B$ is upper bounded by $|B|^2\cdot(k+1)\in O(k^5)$.
Hence, together with~\Cref{lem:compsize}, we obtain a polynomial kernel of size $O(k^7)$ for SFVS.
\end{remark}
}
With the next lemma, we give an $O(k^3)$ upper bound on the number of connected components in~$G-B$.

\begin{lemma}\label[lemma]{lem:setcg}
 Let $(G,k)$ and~$B$ be as in~\Cref{lem:setCm}. Then the number of connected components in~$G - B$ is at most $15k^3+8k^2-k-1$.    
\end{lemma}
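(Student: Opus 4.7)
The plan is to classify the components $\mathcal{C}$ of $G - B$ according to the size and composition of $N_G(C) \cap B = (N_G(C) \cap F) \cup (N_G(C) \cap M')$ and bound each class using a different reduction rule. Split $\mathcal{C} = \mathcal{C}_{=1} \cup \mathcal{C}_{\geq 2}$ by $|N_G(C) \cap B|$; by~\Cref{rr:outside} the case $|N_G(C) \cap B| = 0$ is impossible, and recall from~\Cref{lem:setCm} the bounds $|F| \leq 2k$, $|M'| \leq 4k^2$, $|B| \leq 4k^2 + 2k$, together with $|N_G(C) \cap M'| \leq 2$ and $|N_G(x) \cap V(C)| \leq 1$ for every $x \in B$.

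For $\mathcal{C}_{=1}$, a component $C$ with sole $B$-neighbor $x$ is either (a) a single pendant vertex adjacent to $x$ when $V(C) \cap V(H) = \emptyset$, since exhaustive application of~\Cref{rr:outside} would delete every leaf of $C$ whose neighbors all lie outside the $2$-core, or (b) corresponds to a cycle through $x$ when $V(C) \cap V(H) \neq \emptyset$, since every $G$-cycle meeting $C$ must exit via $x$. In case (a),~\Cref{rr:bigdeg} limits the number of pendants at $x$ to $k$ (either $\deg_G(x) \leq k$, or else all neighbors of $x$ are in $V(H)$ and no such pendant exists). In case (b) distinct such components yield cycles through $x$ that pairwise intersect only at $x$, so~\Cref{rr:flower} bounds their count by $\lceil k/2 \rceil - 1$ per $x$. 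Summing over $x \in B$ gives $|\mathcal{C}_{=1}| \leq (k + \lceil k/2 \rceil - 1) \, |B|$, which is $O(k^3)$.

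For $\mathcal{C}_{\geq 2}$ I further split by $(|N_G(C) \cap F|, |N_G(C) \cap M'|)$. Components with $|N_G(C) \cap M'| = 2$ are precisely the boundary-$2$ subgraphs of $T_i - M_i'$ in the forest decomposition $G - F = T_1 \cup \cdots \cup T_\ell$: each corresponds to one edge of the LCA-contracted tree on $M_i'$, so at most $|M'|$ in total. Components with $|N_G(C) \cap F| \geq 2$ are bounded via~\Cref{rr:parallel} by $(k-1)\binom{|F|}{2}$, since each pair $\{x,y\} \subseteq F$ admits at most $k-1$ common-neighbor components (each furnishes an internally vertex-disjoint $x$-$y$ path of length $\geq 2$ through the unique $a_x$-$a_y$ path in $C$). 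The remaining case, one $F$-neighbor and one $M'$-neighbor, is the main obstacle: the same~\Cref{rr:parallel} argument over pairs $(x,y) \in F \times M'$ gives only $(k-1)|F||M'| = O(k^4)$, matching the naive $O(k^5)$ total from the preceding remark; to reach the claimed $O(k^3)$ one tightens this by exploiting the hanging-subtree decomposition at each $y \in M'$ together with~\Cref{rr:bigdeg} (a vertex has high degree only if all its neighbors lie in $V(H)$) and~\Cref{rr:flower}, which is the technical heart of the argument. Summing all class bounds with $|F| \leq 2k$, $|M'| \leq 4k^2$, $|B| \leq 4k^2 + 2k$ and simplifying yields the asserted $15k^3 + 8k^2 - k - 1$.
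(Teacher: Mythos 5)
There is a genuine gap: the class of components whose neighborhood in~$B$ consists of one vertex of~$F$ and one vertex of~$M'$ (the dominant class) is never actually bounded. You correctly identify that applying \cref{rr:parallel} pairwise over $F\times M'$ only gives $O(k^4)$, and then you defer the improvement to $O(k^3)$ to an unspecified "hanging-subtree decomposition at each $y\in M'$ together with \cref{rr:bigdeg} and \cref{rr:flower}," calling it the technical heart of the argument. But that heart is exactly what a proof of the lemma must contain; without it the claimed total $15k^3+8k^2-k-1$ cannot be derived, and the final "summing all class bounds" step has nothing to sum for this class. The paper closes this case differently from what you sketch: fix $x\in F$ and let $S_x\subseteq M'$ be those $y$ for which at least two components of $G-B$ neighbor both $x$ and $y$. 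Two such components yield two internally disjoint $x$-$y$ paths, hence a cycle through~$x$, and for distinct $y,y'\in S_x$ these cycles meet only in~$x$ (each component in this class has a unique $M'$-neighbor), so \cref{rr:flower} forces $|S_x|<\lceil k/2\rceil$; each $y\in S_x$ contributes at most $k$ components by \cref{rr:parallel}, each $y\notin S_x$ at most one, and summing over the at most $2k$ choices of $x$ gives $9k^3+k^2$. No appeal to \cref{rr:bigdeg} is needed here, and it is not clear how degree bounds alone would yield the count you need, since the number of components attached to a pair $(x,y)$ is not controlled by any single vertex degree.

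A secondary issue: in your class $\mathcal{C}_{=1}$, case~(b) argues that a component~$C$ with sole $B$-neighbor~$x$ and $V(C)\cap V(H)\neq\emptyset$ yields a cycle through~$x$. It cannot: by \cref{lem:setCm}(iii) the vertex $x$ has at most one neighbor in~$C$, so $C$ is attached by a single bridge and, since $C$ is acyclic ($F\subseteq B$ is a feedback vertex set), no cycle meets $C\cup\{x\}$ at all; in fact one can show such components contain no 2-core vertex, so case~(b) is empty and case~(a) with \cref{rr:bigdeg} (or, as in the paper, a direct $|B|\cdot k$ bound) already covers $\mathcal{C}_{=1}$. This slip does not invalidate your upper bound for that class, but the justification as written is incorrect. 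Your treatments of the $F$-$F$ pairs via \cref{rr:parallel} and of the $M'$-$M'$ pairs via the forest structure of $G-F$ do match the paper's argument.
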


{
  \begin{proof}
  We partition the connected components of $G-B$ by the number of their neighbors in~$B$, namely having exactly one neighbor and having at least two neighbors in~$B$.
  For $x\in B$, denote by $B_x$, the set of connected components in $G-B$ having vertex~$x$ as their only neighbor in~$B$.
  Further, for $x,y\in B$, denote by~$B_{xy}$, the set of connected components having at least $x$ and $y$ as their neighbors in~$B$.
  Observe that the connected components of $G-B$ are exactly $\bigcup_{\{x,y\}\subseteq B} (B_x\cup B_{xy})$, and hence the number of the connected components of $G-B$ is at most $|\bigcup_{x\in B} B_x|+|\bigcup_{\{x,y\}\subseteq B} B_{xy}|$.
  Further observe that $|\bigcup_{x\in B} B_x|\leq |B|k\leq 4k^3+2k^2$. 
  Hence, it remains to upper-bound the cardinality of~$\bigcup_{\{x,y\}\subseteq B} (B_{xy})$.
  To this end, observe that 
  \begin{align}
    \bigcup_{\{x,y\}\subseteq B} (B_{xy}) = \underbrace{\bigcup_{\{x,y\}\subseteq F} (B_{xy})}_{:=B^1} \cup \underbrace{\bigcup_{x\in F,y\in M'} (B_{xy})}_{:=\tilde{B}^2} \cup \underbrace{\bigcup_{\{x,y\}\subseteq M'} (B_{xy})}_{:=B^3}.
  \end{align}
  Notice that the equality is still true if we replace $\tilde{B}^2$ by $B^2:=\tilde{B}^2\setminus B^3$, since $B^3$ appears in the union on the right hand-side.
  Hence, in the remainder of this proof, we upper-bound the size of the sets $B^1$, $B^2$, and~$B^3$.
  Observe that the size of $B^1$ is upper bounded by $\binom{2k}{2}(k+1)=2k^3+k^2-k$.
  
  Next, we upper-bound the size of $B^2$.
  To this end, let $x\in F$ be arbitrary but fixed.
  Consider the set $S_x$ of vertices in $M'$ such that there are at least two connected components of $G-B$ neighboring with both $x$ and $y$.
  Observe that for each $y\in S_x$, the set of connected components in $B^2$ neighboring with both $x$ and $y$ is unique, as otherwise there is a connected component in~$B^2$ containing two vertices in $M'$ and hence belonging to $B^3$, contradicting our definition of $B^2:=\tilde{B}^2\setminus B^3$.
  Since for each $y\in S_x$ there are at least two connected components in $B^2$, they together with $x$ and $y$ form a cycle in $G$.
  Hence, due to~\Cref{rr:flower}, the number of vertices in $S_x$ is at most $k/2$.
  On the other hand, there are at most $k$ connected components neighboring with both $x$ and $y$ for any $y\in S_x$ due to~\Cref{rr:parallel}, since each such component provides a separate path of length at least 2 between $x$ and $y$.
  Finally, observe that the number of vertices $y\in M'$ such that there is at most one connected component of $G-B$ neighboring with both $x$ and $y$ is trivially bounded by $|M'|\leq 4k^2$.
  Altogether, we obtain that $|B^2|\leq \sum_{x\in F} (4k^2 + (k/2)(k+1))\leq 2k(4k^2 + (k/2)(k+1))=9k^3+k^2$.

  Last, we upper-bound the size of $B^3=\bigcup_{\{x,y\}\subseteq M'}$.
  Observe that due to~\Cref{lem:setCm}(ii), for each $x,y \in M'$ each connected component~$C$ in $B_{xy}$ only neighbors with $x$ and $y$ out of $M'$, that is, $N(C)\cap M'=\{x,y\}$.
  Moreover, by the connectedness of $C$, $x$ and $y$ are connected via a path through~$C$.
  By known facts on forests and trees, we know that if there are at least $r$ paths connecting vertex pairs out of $r$ vertices in a graph, then there is a cycle in the graph.
  Hence, since $F$ is a feedback vertex set in~$G$, there are at most $|M'|-1$ connected components in $B^3$.
  Recalling that $|M'|\leq 4k^2$, we obtain that $|B^3|\leq |M'|-1 \leq 4k^2-1$.
  
  Altogether, the number of connected components in $G-B$ is at most
  \begin{align*}
    4k^3+2k^2 + |\bigcup_{\{x,y\}\subseteq B} B_{xy}| &\leq 4k^3+2k^2 + |B^1| + |B^2|+ |B^3| \\
    &\leq 4k^3+2k^2 + 2k^3+k^2-k + 9k^3+k^2 + 4k^2-1 \\
    &= 15k^3+8k^2-k-1 .\qedhere
  \end{align*}
  \end{proof}
}

Finally, putting all together, we can prove the the main result of this section.

{
\begin{proof}[Proof of~\Cref{thm:sfvspolyker}]
 Let $(G',k)$ be the input instance of SFVS.
 We compute the 2-core~$H$ of $G$.
 We apply~\Cref{rr:outside,rr:path,rr:flower,rr:bigdeg,rr:parallel} exhaustively to obtain an equivalent instance $(G,k)$ such that $(G,k)$ is reduced.
 Next we apply~\Cref{lem:setCm} and obtain the set $B$ in $G$ with $|B|\leq 4k^2+2k$.
 Let $\cc$ denote the set of connected components in $G-B$.
 By~\Cref{lem:setcg}, we know that $|\cc|\leq 15k^3+8k^2-k-1$.
 Moreover, due to~\Cref{lem:compsize}, for each $C\in\cc$ it holds that $|V(C)|\leq (k+1)\cdot(12k+7)$.
 If follows that the number of vertices $|V(G)|$ in $G$ is at most
 $|B|+|\cc|\cdot \max_{C\in\cc}|V(C)| \leq 4k^2+2k + (15k^3+8k^2-k-1)\cdot(k+1)\cdot(12k+7) \in O(k^5)$.
\end{proof}
}
}

\subsection{Small Secluded Feedback Vertex Set}\label{ssec:ssfvs}

In contrast to restricting the closed neighborhood of a feedback vertex set, restricting the open neighborhood by a parameter yields a W[1]-hard problem.

\decprob{Small Secluded Feedback Vertex Set}
	{A graph~$G=(V,E)$ and two integers $k, \ell$.}
	{Is there a set~$S\subseteq V$ such that $G-S$ is cycle-free, $|S|\leq k$, and $|N_G(S)|\leq \ell$?}

\begin{theorem}
  \label{thm:ssfvsW1hard}
 \textsc{Small Secluded Feedback Vertex Set} is W[1]-hard with respect~to~$\ell$.
\end{theorem}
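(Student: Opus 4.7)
I would prove \cref{thm:ssfvsW1hard} by a polynomial-parameter transformation from \textsc{Cutting at Most $k$ Vertices with Terminal}, which is W[1]-hard with respect to $\ell$ by~\citet{FominGK13}. Following the spirit of \cref{thm:ssstshard}, given a cutting instance $(G, s, k, \ell)$ I would construct a \textsc{Small Secluded Feedback Vertex Set} instance $(G', k', \ell')$ by attaching to $s$ a triangle $\{s, c_1, c_2\}$ on two new vertices, each of $c_1$ and $c_2$ additionally decorated with $\ell + k + 3$ private pendant (degree-one) neighbors. The parameters are $k' := k$ and $\ell' := \ell + 2$.

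The pendant decorations make the gadget \emph{forcing} in the following sense: any candidate solution $S'$ must hit the new triangle, yet placing $c_1$ or $c_2$ into $S'$ is budgetarily impossible---the $\ell + k + 3$ corresponding pendants would either all have to land in $N_{G'}(S')$ (breaking $\ell' = \ell + 2$) or all have to join $S'$ (breaking $k' = k$). Hence $s \in S'$ is forced. Since the pendants themselves play no role in any cycle, they can be discarded from $S'$ without loss of generality, so $S' \subseteq V(G)$; then $|N_{G'}(S')| = |N_G(S')| + 2$, aligning exactly with the cutting budget.

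Provided that $G - s$ is a forest, every cycle of $G$ passes through $s$, so any $S' \ni s$ is automatically a feedback vertex set of $G$ and hence of $G'$ (the extra triangle being already hit by $s$). This yields the desired equivalence: cutting solutions lift to \textsc{Small Secluded Feedback Vertex Set} solutions, and the latter restrict to cutting solutions, in both directions with only an additive constant shift in $\ell$.

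The main obstacle is therefore to justify this structural restriction on the source problem, i.e.\ to establish W[1]-hardness of \textsc{Cutting at Most $k$ Vertices with Terminal} parameterized by $\ell$ on instances where $G - s$ is a forest. I would tackle this either by inspecting the construction of \citet{FominGK13} to verify that it already has the property, or by inserting an additional polynomial-parameter transformation that reroutes every cycle of $G$ not passing through $s$ via a small $s$-traversing gadget while preserving both the cutting semantics and the value of $\ell$ up to a constant. A plan-B that side-steps this obstacle entirely is to start instead from \cref{thm:ssstshard} and install symmetric forcing gadgets at both $s$ and $t$, so that any small-neighborhood feedback vertex set in the constructed graph must contain both $s$ and $t$ and its remainder is an $s$-$t$-separator; the analogous requirement there---that $G - \{s, t\}$ be a forest---may in turn be easier to enforce by edge-subdivision and gadget-replacement in the underlying separator instance.
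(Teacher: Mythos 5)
Your reduction gadget itself is fine (the pendant-decorated triangle does force $s$ into any solution, and the budget arithmetic matches), but the step you defer is not just unproven---it cannot be made to work. Your construction needs the source instances to satisfy that every cycle of $G$ passes through $s$ (otherwise a cutting solution containing $s$ is not a feedback vertex set, and the forward direction breaks). But if $G-s$ is a forest, then $G$ is a forest plus one apex vertex and has treewidth at most~$2$, and \textsc{Cutting at Most $k$ Vertices with Terminal} is polynomial-time solvable on graphs of bounded treewidth by a standard dynamic program over a tree decomposition whose states record, for each bag vertex, membership in $S$, in $N(S)$, or outside, together with counters for $|S|$ and $|N(S)|$ (both at most $n$). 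Hence the restricted problem you need is in P: the instances of \citet{FominGK13} cannot have this structure unless $\FPT=\W{1}$, and no answer-preserving polynomial-time ``rerouting'' transformation onto such instances can exist unless P${}={}$NP, since the unrestricted cutting problem is NP-hard. Your plan~B fails at exactly the same point: demanding that $G-\{s,t\}$ be a forest bounds the treewidth by~$3$, where \sssts{} is again polynomial-time solvable by the analogous dynamic program, so no W[1]-hard family of that shape exists either.

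The paper takes a different route that avoids any acyclicity assumption on the source: a parameterized reduction from \textsc{Multicolored Independent Set} parameterized by the solution size~$k$. Each color class $V_i$ is turned into a clique, an apex vertex $u$ is added adjacent to all of $V$ together with a padding set $L$ of $k'+\ell$ further neighbors of $u$, and one sets $k'=|V|-k$ and $\ell=k+1$. The padding makes $u$ undeletable, so all but one vertex of each clique $V_i\cup\{u\}$ must be deleted; the $k$ surviving vertices must be pairwise non-adjacent in the original graph, since any edge among them would form a triangle with $u$. Because $\ell=k+1$ depends only on the \textsc{MIS} parameter, W[1]-hardness with respect to $\ell$ follows directly. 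If you want to rescue your own approach, you would need a source problem whose hard instances already have the ``all cycles through the terminal'' structure built in---which, as argued above, is exactly what bounded-treewidth solvability rules out---so switching the source problem as the paper does is really the essential idea you are missing.
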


{
  \begin{proof}
  We provide a parameterized reduction from \textsc{Multicolored Independent Set (MIS)}: given a $k$-partite graph $G=(V,E)$ and its partite sets~$V_1 \cup \ldots\cup V_k = V$, the question is whether there is an independent set~$I$ of size~$k$ %
  such that $I\cap V_i\neq \emptyset$ for each~$i\in\{1, \ldots,k\}$. MIS is W[1]-hard when parameterized by the size~$k$ of the independent set~\citep{FellowsHRV09}.
  
  Let $G=(V,E)$ with partite sets~$V_1\cup V_2 \cup \ldots \cup V_k = V$ be an instance of MIS.
  We can assume that for each $i\in \{1, \ldots,k\}$ we have $|V_i|\geq 2$ and there is no edge $\{v,w\}\in E$ with $v,w\in V_i$.
  We create an instance~$(G',k',\ell)$ of \textsc{Small Secluded Feedback Vertex Set (SSFVS)} with $k'=|V|-k$ and $\ell=k+1$ as follows.

  \emph{Construction}: 
  (Refer to \cref{fig:ssfvsW1hard} for a sketch of the construction.)
  \begin{figure}
    \centering
    \begin{tikzpicture}
      \usetikzlibrary{shapes}
      \def\ya{1.25}
      \node (u) at (4,3)[circle,draw,scale=2/3,label=0:{$u$}]{};

      \node (u1) at (3,3.75)[circle,fill,scale=1/2,draw]{};
      \node (u2) at (3.5,3.75)[circle,fill,scale=1/2,draw]{};
      \node (uldts) at (4.25,3.75)[]{$\ldots$};
      \node (ukl) at (5,3.75)[circle,fill,scale=1/2,draw]{};

\draw[thick,decorate,decoration={brace,amplitude=5pt}] (3-0.2,3.75+0.2) -- (5+0.2,3.75+0.2) node[midway, above,yshift=5pt,]{$k'+\ell$ vertices in $L$};

      \draw (u) -- (u1);
      \draw (u) -- (u2);
      \draw (u) -- (ukl);

            \foreach \x in {1,2,...,6}{
	      \draw[-] (u) to [out=210,in=70+20-5*\x] (0-1-0.8+0.25*\x,\ya);
	      \draw[-] (u) to [out=225,in=90+20-5*\x] (3-0.70+0.20*\x,\ya);
	      \draw[-] (u) to [out=-30,in=110+20-5*\x] (8+1-0.85+0.25*\x,\ya);
      }

      \node[ellipse,minimum width=50pt,minimum height=20pt,draw, fill=white] (V1)  at (0-1,\ya)[label=135:{$V_1$}]{};
      \node[ellipse,minimum width=50pt,minimum height=20pt,draw, fill=white] (V2)  at (3,\ya)[label=155:{$V_2$}]{};
      \node at (5.5,\ya)[scale=1.75]{$\cdots$};
      \node[ellipse,minimum width=50pt,minimum height=20pt,draw, fill=white] (Vk)  at (8+1,\ya)[label=45:{$V_k$}]{};

      \node (v1i) at (0-0.5,\ya)[fill,circle,scale=1/3,draw,label=-180:{$v^1_i$}]{};
      \node (v2j) at (3,\ya)[fill,circle,scale=1/3,draw,label=-0:{$v^2_j$}]{};

      \draw[-,out=-45,in=-135] (v1i) to node[midway,below]{$\{v^1_i,v^2_j\}\in E(G)$}(v2j);

    \end{tikzpicture}
    \caption{Sketch of the construction of graph $G'$ on an input graph $G=(V = V_1 \cup \ldots\cup V_k, E)$ as used in the proof of \cref{thm:ssfvsW1hard}. The ellipses correspond to cliques with vertex sets $V_i$, $i\in \{1, \ldots,k\}$.}\label{fig:ssfvsW1hard}
  \end{figure}
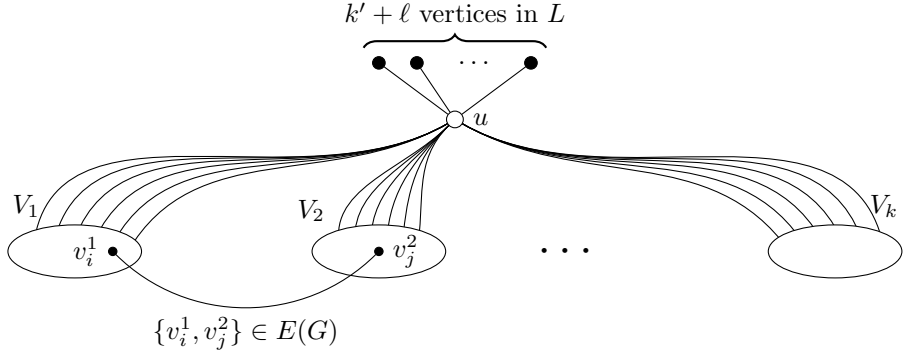
  Initially, let $G':=G$. 
  For each $i\in \{1, \ldots,k\}$ turn $V_i$ into a clique, that is, add the edge sets $\{\{a,b\}\mid a,b\in V_i, a\neq b\}$.
  Next, add to $G'$ a vertex $u$ and a set $L$ of $k'+\ell$ vertices. 
  Finally, connect each vertex in $V \cup L$ to~$u$ by an edge.
  
  \emph{Correctness}: 
  We show that $(G,k)$~is a \yes-instance of MIS if and only if $(G',k',\ell)$~is a \yes-instance of SSFVS.
  
  \raproof{} 
  Let $(G,k)$ be a \yes-instance of MIS and let $I\subseteq V$ with $|I|=k$ be a multicolored independent set in~$G$.
  We delete all vertices in~$S:=V(G')\setminus (I \cup L \cup \{u\})$ from~$G'$.
  Observe that $|S|=|V|-k=k'$.
  Moreover, $N_{G'}(S) = k+1=\ell$.
  Since there is no edge between any two vertices in~$I$, $G-S$~forms a star with center~$u$ and $k'+\ell+1+k$ vertices.
  Since every star is acyclic, $(G',k',\ell)$~is a \yes-instance of SSFVS.
  
  \laproof{} 
  Let $(G',k',\ell)$ be a \yes-instance of SSFVS and let $S\subseteq V(G')$ be a solution.
  Observe that $G'[V_i\cup\{u\}]$ forms a clique of size $|V_i|+1$ for each $i\in \{1, \ldots,k\}$.
  Since the budget does not allow for deleting the vertex $u$ (i.e.~$u\not\in S$), all but at most one vertex in each $V_i$ must be deleted.
  Since $k'=|V|-k$ and $|V_i|\geq 2$ for all~$i\in \{1, \ldots,k\}$, $S$ contains exactly $|V_i|-1$ vertices of $V_i$ for each $i\in \{1, \ldots,k\}$.
  Hence, $|S|=|V|-k$ and $N_{G'}(S)= k+1=\ell$.
  Let $F:=V\setminus S$ denote the set of vertices in $V$ not contained in $S$.
  Recall that $|F|=k$ and $|F\cap V_i|=1$ for all~$i\in\{1, \ldots,k\}$.
  Next, suppose there is an edge between two vertices $v,w\in F$.
  Since $u\not\in S$ and $u$~is incident to all vertices in~$V$, the vertices~$u,v,w$ form a triangle in $G'$.
  This contradicts the fact that $S$~is a solution for~$(G',k',\ell)$, that is, that $G'-S$ is acyclic.
  It follows that $E(G'[F])=\emptyset$, that is, no two vertices in $F$ are connected by an edge.
  Together with $|F|=k$ and $|F\cap V_i|=1$ for all $i\in\{1, \ldots,k\}$, it follows that $F$ forms a multicolored independent set in $G$.
  Thus, $(G,k)$ is a \yes-instance of MIS.
  \end{proof}
}

\section{Independent Set}\label{sec:is}

For \textsc{Independent Set}, it makes little sense to bound the size of the closed neighborhood from above, as in this case the empty set always constitutes a solution. One might ask for an independent set with closed neighborhood as large as possible. However, for any inclusion-wise maximal independent set~\(S\), one has \(N[S]=V\). Hence, this question is also trivial.
Therefore, in this section we only consider the following problem.

\decprob{Large Secluded Independent Set (LSIS)}
	{A graph~$G=(V,E)$ and two integers $k,\ell$.}
	{Is there an independent set $S\subseteq V$ such that $|S|\geq k$ and $|N_G(S)|\leq \ell$?}

\noindent
The case $\ell = |V|$ equals \textsc{Independent Set} and, thus, \textsc{LSIS} is $W[1]$-hard with respect to $k$. We show that \textsc{LSIS} is also W[1]-hard when parameterized by $k+\ell$.	

\begin{theorem}
  \label{thm:lsishard}
\textsc{Large Secluded Independent Set} is $W[1]$-hard with respect to~$k+\ell$.
\end{theorem}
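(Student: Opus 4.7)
My plan is to prove the claim via a parameterized reduction from \textsc{Multicolored Clique} (MCC), which is W[1]-hard parameterized by the number $k$ of color classes. Given an MCC instance $H=(V_1\cup\dots\cup V_k,E)$, the goal is to construct in polynomial time an equivalent \textsc{LSIS} instance $(G',k',\ell')$ in which both $k'$ and $\ell'$ are bounded by a computable function of $k$ alone (so that $k'+\ell'$ is too).

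The construction I have in mind introduces, for each color class $V_i$, a vertex-selector gadget and, for each pair of classes $(i,j)$ with $i<j$, an edge-selector gadget that represents the edges of $H$ between $V_i$ and $V_j$. Compatibility edges between selectors enforce that the chosen witness in each $V_i$-selector agrees with the chosen edge in all $k-1$ edge-selectors incident to $V_i$. The canonical solution representing a multicolored clique $\{s_1,\dots,s_k\}$ is then to include one witness from each vertex selector (picking $s_i$) and one from each edge selector (picking $\{s_i,s_j\}$), yielding an independent set of size $k + \binom{k}{2}$; this will determine $k'$.

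The main obstacle will be to bound $|N_{G'}(S)|$ by a function of $k$ rather than by the size of the MCC input. A naive reduction (e.g.~making each $V_i$ a clique in $G'$) places $\Omega(|V(H)|)$ vertices into $N(S)$. To sidestep this, I would avoid attaching selector witnesses directly to the many rejected alternatives in each $V_i$; instead, all compatibility information would be routed through a small set of $O(k^2)$ interface vertices, one per pair of selectors, together with a constant-size boundary per gadget, so that the open neighborhood of any canonical solution is confined to these $O(k^2)$ vertices. Setting $\ell' = O(k^2)$ then suffices.

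Finally, I would verify both directions of equivalence. For one direction, any multicolored clique in $H$ yields the canonical IS of size $k'$ with $|N_{G'}(S)|\le\ell'$ by construction. For the converse, I would argue that any IS of size at least $k'$ with open neighborhood at most $\ell'$ is forced by the gadget design to pick exactly one witness per vertex-selector and per edge-selector, and that the compatibility edges together with the bound on $|N_{G'}(S)|$ force the selected vertices to form a multicolored clique in $H$. Since the reduction runs in polynomial time and $k'+\ell'$ depends only on $k$, this establishes W[1]-hardness of \textsc{LSIS} with respect to $k+\ell$.
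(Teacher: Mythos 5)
There is a genuine gap: your proposal defers the entire difficulty to an unspecified gadget design, and the specific design you hint at cannot work. In \textsc{LSIS} the only mechanisms available to enforce consistency between selectors are (a) independence, i.e.\ edges between incompatible witnesses, and (b) the budget~$\ell'$ on $|N_{G'}(S)|$. If, as you propose, all compatibility information is routed through a \emph{fixed} set of $O(k^2)$ interface vertices (one per pair of selectors) that is contained in the neighborhood of every canonical solution, then every selection---consistent or not---has its neighborhood inside the same $O(k^2)$ set, so the budget provides no discriminating power and consistency must be enforced by mechanism (a). But an incompatibility edge from a chosen witness $a_v$ (representing $v\in V_i$) to every edge-witness not incident to $v$ places all of those $\Omega(|E(H)|)$ unchosen witnesses into $N_{G'}(S)$, destroying any bound $\ell'=f(k)$; the same problem recurs when you try to enforce ``at most one witness per selector'' (a clique on the selector puts $\Omega(n)$ rejected alternatives into the neighborhood, exactly the issue you flagged for $V_i$ but did not resolve for the edge selectors or for the one-per-selector constraint). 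So the heart of the reduction---how the neighborhood budget, rather than adjacency, does the consistency checking---is missing.

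The paper's proof (following Marx's hardness proof for \textsc{Cutting $\ell$ Vertices}) shows what that missing idea looks like: reduce from plain \textsc{Clique}, add one vertex $x_{uv}$ per edge $\{u,v\}$ of $G$ adjacent exactly to $u$ and $v$, turn $V(G)$ into a clique so no original vertex can belong to a small-neighborhood solution, and set $k'=\binom{k}{2}$, $\ell'=k$. There are no selector gadgets at all: the solution consists solely of edge-vertices, and the ``interface'' vertices are the $n$ original vertices, of which the budget permits only $k$ in $N(S)$. Consistency is then a pure counting argument---$\binom{k}{2}$ distinct edges spanned by at most $k$ vertices force a $k$-clique. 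In other words, the interface through which information flows must be large (one potential neighbor per original vertex) with the budget selecting only $k$ of them, rather than a fixed $O(k^2)$-size set as in your sketch; without this (or an equivalent working mechanism), the claimed bound on $|N_{G'}(S)|$ and the converse direction of your reduction cannot both be made to hold.
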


We remark that the proof of \cref{thm:lsishard} is identical to the W[1]-hardness proof for \textsc{Cutting $\ell$ Vertices}~\citep{Marx06}.
However, for the sake of completeness, we present the proof in the remainder of this section.
{
\begin{proof}
  We provide a polynomial-parameter transformation from \textsc{Clique} parameterized by the solution size~$k$.
  
  \emph{Construction.} 
  Let $(G,k)$~be an instance of~\textsc{Clique} and assume without loss of generality that $k<|V(G)|-1$ (otherwise, solve the instance in polynomial time). We construct an equivalent instance $(G',k',\ell')$ of \textsc{Large Secluded Independent Set} as follows (see~\cref{fig:lsisconstr} for an example).
  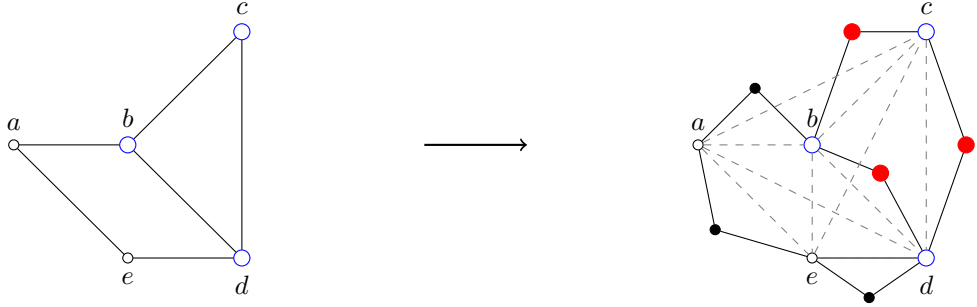
\begin{figure}
    \centering
    \begin{tikzpicture}[scale=1.5]

      \tikzstyle{xnode}=[circle, scale=0.4,draw]

      \node (a) at (0,0)[xnode, label=90:{$a$}]{};
      \node (b) at (1,0)[xnode, label=90:{$b$}, color=blue, scale=1.6]{};
      \node (c) at (2,1)[xnode, label=90:{$c$}, color=blue, scale=1.6]{};
      \node (d) at (2,-1)[xnode, label=-90:{$d$}, color=blue, scale=1.6]{};
      \node (e) at (1,-1)[xnode, label=-90:{$e$}]{};

      \draw (a) --(b);
      \draw (b) --(c);
      \draw (b) --(d);
      \draw (c) --(d);
      \draw (a) --(e);
      \draw (d) --(e);

      \def\x{6};
      \draw[thick,->] (0.6*\x,0) to (0.75*\x,0);
      \tikzstyle{xxnode}=[circle, scale=0.4, fill, draw]

      \node (a) at (0+\x,0)[xnode, label=90:{$a$}]{};
      \node (b) at (1+\x,0)[xnode, label=90:{$b$}, color=blue, scale=1.6]{};
      \node (c) at (2+\x,1)[xnode, label=90:{$c$}, color=blue, scale=1.6]{};
      \node (d) at (2+\x,-1)[xnode, label=-90:{$d$}, color=blue, scale=1.6]{};
      \node (e) at (1+\x,-1)[xnode, label=-90:{$e$}]{};

      \node (ab) at (\x+0.5, 0.5)[xxnode]{};
      \draw (a) -- (ab) -- (b);
      \node (bc) at (\x+1.35, 1.0)[xxnode, color=red, scale=1.6]{};
      \draw (b) -- (bc) -- (c);
      \node (bd) at (\x+1.6, -0.25)[xxnode, color=red, scale=1.6]{};
      \draw (b) -- (bd) -- (d);
      \node (cd) at (\x+2.35, 0)[xxnode, color=red, scale=1.6]{};
      \draw (c) -- (cd) -- (d);
      \node (ae) at (\x+0.15, -0.75)[xxnode]{};
      \draw (a) -- (ae) -- (e);
      \draw (d) --(e);
      \node (de) at (\x+1.5, -1.35)[xxnode]{};
      \draw (d) -- (de) -- (e);

      \draw[color=gray, thin, dashed] (a) --(b);
      \draw[color=gray, thin, dashed] (b) --(c);
      \draw[color=gray, thin, dashed] (b) --(d);
      \draw[color=gray, thin, dashed] (c) --(d);
      \draw[color=gray, thin, dashed] (a) --(e);
      \draw[color=gray, thin, dashed] (d) --(e);
      \draw[color=gray, thin, dashed] (a) --(c);
      \draw[color=gray, thin, dashed] (e) --(c);
      \draw[color=gray, thin, dashed] (b) --(e);
      \draw[color=gray, thin, dashed] (a) --(d);
    \end{tikzpicture}
    \caption{Example of the construction used in the proof of \Cref{thm:lsishard}.  The left-hand side shows the original graph, the right-hand side the graph constructed by the reduction, where the newly introduced edges between each pair of vertices of the original graph are drawn in dashed grey. The vertices introduced for each edge of the original graph are filled red and black, the corresponding new edges are drawn in black. Note that the enlarged, blue vertices of the original graph form a clique and that the vertices corresponding to the edges of said clique (enlarged, filled  red) form an independent set in the new graph.}\label{fig:lsisconstr}
  \end{figure}
  Initially, let $G'$~be an empty graph. 
  Add all vertices of $G$ to~$G'$. 
  Denote the vertex set by~$V$.
  If two vertices of~$G$ are adjacent, we add a vertex to $G'$, that is,
  \(G'\)~additionally to~\(V\) contains the vertex set $X:=\{x_{uv}\mid \{u,v\}\in E\}$.
  Next, connect~$x_{uv}$ to~$u$ and~$v$, that is, add the edge set $E'=\{\{u,x_{uv}\},\{v,x_{uv}\}\mid \{u,v\}\in E\}$.
  Finally, connect any two vertices in~$V$ by an edge.
  Graph $G'$~consists of the vertex set $V\cup X$ and of the edge set~$E'\cup \binom{V}{2}$.
  Observe that $X$~forms an independent set in~$G'$.
  Set $k':=\binom{k}{2}$ and $\ell':=k$.
  We claim that $(G,k)$ is a yes-instance of \textsc{Clique} if and only if $(G',k',\ell')$ is a yes-instance of \textsc{Large Secluded Independent Set}.

  \raproof{}
  Let $C\subseteq V(G)$ be a clique of size~$k=|C|$ in $G$.
   We claim that $X':=\{x_{u,v}\mid u,v\in C\}$ forms an independent set of size $\binom{k}{2}$ with $|N(X')|=k=\ell'$ in~$G'$.
  Since $X'\subseteq X$, $X'$ forms an independent set.
  Moreover, since $C$~is a clique of size $k$, there are $\binom{k}{2}$ edges in $G[C]$, and thus $|X'|=\binom{k}{2}$.
  By construction, each vertex in $X$~is only adjacent to vertices in~$C$. Hence, $|N(X')|=|C|=k$.
  Therefore, $X'$~witnesses that $(G',k',\ell')$~is a yes-instance of \textsc{Large Secluded Independent Set}.

  \laproof{}
  Let $U\subseteq V(G')$ form an independent set of size~$k'$ with open neighborhood of size upper-bounded by $\ell'$.
  Suppose that $v\in V$ is contained in $U$ (observe that $U$ contains at most one vertex of $V$, as otherwise it would not be independent).
  Then $|N(U)|\geq |V|-1>k=\ell'$, which contradicts the choice of~$U$.
  It follows that $U\cap V=\emptyset$, and hence $U\subseteq X$.
  By construction, for each $x_{uv}\in U$, the vertices $u,v$ are contained in $N(U)$.
  Since each vertex in $U$ corresponds to an edge in $G$, we have $\binom{k}{2}$ edges incident with at most $k$ vertices.
  The only graph that fulfills this property is the complete graph on $k$ vertices.
  Hence, $G$ contains a clique of size~$k$, and thus $(G,k)$ is a yes-instance of~\textsc{Clique}($k$).
\end{proof}
}

\section{Summary and Future Work}\label{sec:summary}

In this paper, we studied the problem of finding solutions with small neighborhood to
classical combinatorial optimization problems in graphs. We presented computational complexity results for
secluded and small secluded variants of \textsc{$s$-$t$-Separator},
\textsc{$q$-Dominating Set}, \textsc{Feedback Vertex Set},
\textsc{$\mathcal{F}$-free Vertex Deletion}, and for the large
secluded variant of~\textsc{Independent Set}.  
In the case of \textsc{$q$-Dominating Set}, we leave as an open question the parameterized complexity of \textsc{Small $p$-Secluded $q$-Dominating Set}, with $2p>q$, when parameterized by~$\ell$.
Concerning \textsc{Secluded $\mathcal{F}$-free Vertex Deletion}, we
would like to point out that it is an interesting question which
families~$\mathcal{F}$ exactly yield \NP-hardness as opposed to
polynomial-time solvability.

A natural way to generalize our results would be to consider vertex-weighted graphs and directed graphs. This generalization was already investigated by~\citet{Chechikjpp16} for \textsc{Secluded Path} and \textsc{Secluded Steiner Tree}. Furthermore, replacing the bound on the open neighborhood in the case of small secludedness by a bound on the outgoing edges of a solution would be an interesting modification of the problem. The variation follows the idea of the concept of \emph{isolation} \citep{hkmn09,huffnerks15,itoio05,khmn09}. As the number of outgoing edges is at least as large as the open neighborhood, this might offer new possibilities for fixed-parameter algorithms. Finally, we focused on solutions of size at most or at least an integer~$k$ and did not discuss the case of size \emph{exactly}~$k$ so far.%

\paragraph{Acknowledgment}
This research was initiated at the annual research retreat of the
algorithms and complexity group of TU~Berlin, held in Krölpa,
Thuringia, Germany, from April 3\textsuperscript{rd} till April 9\textsuperscript{th}, 2016.

We would like to thank the anonymous referees of IPEC for comments
that helped to improve the paper and for pointing us to the work
of~\citet{FominGK13}.  The second author thanks Nikolay Karpov
(St.~Petersburg Department of the Steklov Institute of Mathematics of
the Russian Academy of Sciences) for discussion on secluded problems.

\section*{\bibname}%
\bibliographystyle{abbrvnat}
\bibliography{secluded-arxiv-finalbib}%

\newcommand{\noopsort}[1]{}
\begin{thebibliography}{43}
\providecommand{\natexlab}[1]{#1}
\providecommand{\url}[1]{\texttt{#1}}
\expandafter\ifx\csname urlstyle\endcsname\relax
  \providecommand{\doi}[1]{doi: #1}\else
  \providecommand{\doi}{doi: \begingroup \urlstyle{rm}\Url}\fi

\bibitem[Bafna et~al.(1999)Bafna, Berman, and Fujito]{BafnaBF99}
V.~Bafna, P.~Berman, and T.~Fujito.
\newblock A 2-approximation algorithm for the undirected feedback vertex set
  problem.
\newblock \emph{SIAM Journal on Discrete Mathematics}, 12\penalty0
  (3):\penalty0 289--297, 1999.
\newblock \doi{10.1137/S0895480196305124}.

\bibitem[{\noopsort{Bevern}van Bevern}(2014{\natexlab{a}})]{Bev14}
R.~{\noopsort{Bevern}van Bevern}.
\newblock \emph{Fixed-Parameter Linear-Time Algorithms for {NP}-hard Graph and
  Hypergraph Problems Arising in Industrial Applications}, volume~1 of
  \emph{Foundations of Computing}.
\newblock Universitätsverlag der TU Berlin, 2014{\natexlab{a}}.
\newblock \doi{10.14279/depositonce-4131}.

\bibitem[{\noopsort{Bevern}van Bevern}(2014{\natexlab{b}})]{Bev14b}
R.~{\noopsort{Bevern}van Bevern}.
\newblock Towards optimal and expressive kernelization for {$d$-Hitting Set}.
\newblock \emph{Algorithmica}, 70\penalty0 (1):\penalty0 129--147,
  2014{\natexlab{b}}.
\newblock \doi{10.1007/s00453-013-9774-3}.

\bibitem[{\noopsort{Bevern}van Bevern} et~al.(2012){\noopsort{Bevern}van
  Bevern}, Moser, and Niedermeier]{BMN12}
R.~{\noopsort{Bevern}van Bevern}, H.~Moser, and R.~Niedermeier.
\newblock Approximation and tidying---a problem kernel for $s$-{P}lex {C}luster
  {V}ertex {D}eletion.
\newblock \emph{Algorithmica}, 62\penalty0 (3-4):\penalty0 930--950, 2012.
\newblock ISSN 0178-4617.
\newblock \doi{10.1007/s00453-011-9492-7}.

\bibitem[{\noopsort{Bevern}van Bevern} et~al.(2017){\noopsort{Bevern}van
  Bevern}, Fluschnik, Mertzios, Molter, Sorge, and Suchý]{BFM+17}
R.~{\noopsort{Bevern}van Bevern}, T.~Fluschnik, G.~B. Mertzios, H.~Molter,
  M.~Sorge, and O.~Suchý.
\newblock Finding secluded places of special interest in graphs.
\newblock In \emph{Proc.\ 11th IPEC}, volume~63 of \emph{LIPIcs}, pages
  5:1--5:16. Schloss Dagstuhl, 2017.
\newblock \doi{10.4230/LIPIcs.IPEC.2016.5}.

\bibitem[Bodlaender et~al.(2016)Bodlaender, Drange, Dregi, Fomin, Lokshtanov,
  and Pilipczuk]{BodDDFLP16}
H.~L. Bodlaender, P.~G. Drange, M.~S. Dregi, F.~V. Fomin, D.~Lokshtanov, and
  M.~Pilipczuk.
\newblock A $c^k n$ 5-approximation algorithm for treewidth.
\newblock \emph{SIAM Journal on Computing}, 45:\penalty0 317--378, 2016.
\newblock \doi{10.1137/130947374}.

\bibitem[Bruglieri et~al.(2006)Bruglieri, Ehrgott, Hamacher, and
  Maffioli]{Bru+06}
M.~Bruglieri, M.~Ehrgott, H.~W. Hamacher, and F.~Maffioli.
\newblock An annotated bibliography of combinatorial optimization problems with
  fixed cardinality constraints.
\newblock \emph{Discrete Applied Mathematics}, 154\penalty0 (9):\penalty0
  1344--1357, 2006.
\newblock \doi{10.1016/j.dam.2005.05.036}.

\bibitem[Bui and Jones(1992)]{BuiJ92}
T.~N. Bui and C.~Jones.
\newblock Finding good approximate vertex and edge partitions is {NP}-hard.
\newblock \emph{Information Processing Letters}, 42\penalty0 (3):\penalty0
  153--159, 1992.
\newblock \doi{10.1016/0020-0190(92)90140-Q}.

\bibitem[Cai(1996)]{Cai96}
L.~Cai.
\newblock Fixed-parameter tractability of graph modification problems for
  hereditary properties.
\newblock \emph{Information Processing Letters}, 58\penalty0 (4):\penalty0
  171--176, 1996.
\newblock \doi{10.1016/0020-0190(96)00050-6}.

\bibitem[Cai(2008)]{Cai08}
L.~Cai.
\newblock Parameterized {{Complexity}} of {{Cardinality Constrained
  Optimization Problems}}.
\newblock \emph{The Computer Journal}, 51\penalty0 (1):\penalty0 102--121,
  2008.
\newblock \doi{10.1093/comjnl/bxm086}.

\bibitem[Cai et~al.(2006)Cai, Chan, and Chan]{CCC06}
L.~Cai, S.~M. Chan, and S.~O. Chan.
\newblock Random {{Separation}}: {{A New Method}} for {{Solving
  Fixed}}-{{Cardinality Optimization Problems}}.
\newblock In \emph{Proc.\ 2nd IWPEC}, pages 239--250. {Springer}, 2006.
\newblock \doi{10.1007/11847250_22}.

\bibitem[Chechik et~al.(2016)Chechik, Johnson, Parter, and Peleg]{Chechikjpp16}
S.~Chechik, M.~P. Johnson, M.~Parter, and D.~Peleg.
\newblock Secluded connectivity problems.
\newblock \emph{Algorithmica}, 2016.
\newblock \doi{10.1007/s00453-016-0222-z}.
\newblock In press.

\bibitem[Courcelle(1990)]{Cou90}
B.~Courcelle.
\newblock The monadic second-order logic of graphs. {{I}}. {{Recognizable}}
  sets of finite graphs.
\newblock \emph{Information and Computation}, 85\penalty0 (1):\penalty0 12--75,
  1990.
\newblock \doi{10.1016/0890-5401(90)90043-H}.

\bibitem[Cygan et~al.(2015)Cygan, Fomin, Kowalik, Lokshtanov, Marx, Pilipczuk,
  Pilipczuk, and Saurabh]{CyganFKLMPPS15}
M.~Cygan, F.~V. Fomin, L.~Kowalik, D.~Lokshtanov, D.~Marx, M.~Pilipczuk,
  M.~Pilipczuk, and S.~Saurabh.
\newblock \emph{Parameterized Algorithms}.
\newblock Springer, 2015.
\newblock \doi{10.1007/978-3-319-21275-3}.

\bibitem[Diestel(2010)]{Diestel10}
R.~Diestel.
\newblock \emph{Graph Theory}, volume 173 of \emph{Graduate Texts in
  Mathematics}.
\newblock Springer, 4th edition, 2010.

\bibitem[Dom et~al.(2014)Dom, Lokshtanov, and Saurabh]{dom2014kernelization}
M.~Dom, D.~Lokshtanov, and S.~Saurabh.
\newblock Kernelization lower bounds through colors and {ID}s.
\newblock \emph{ACM Transactions on Algorithms}, 11\penalty0 (2):\penalty0 13,
  2014.
\newblock \doi{10.1145/2650261}.

\bibitem[Downey et~al.(2003)Downey, Estivill{-}Castro, Fellows, Prieto, and
  Rosamond]{DowneyEFPR03}
R.~Downey, V.~Estivill{-}Castro, M.~Fellows, E.~Prieto, and F.~Rosamond.
\newblock Cutting up is hard to do: the parameterized complexity of $k$-{Cut}
  and related problems.
\newblock \emph{Electronic Notes in Theoretical Computer Science}, 78:\penalty0
  209--222, 2003.
\newblock \doi{10.1016/S1571-0661(04)81014-4}.

\bibitem[Downey and Fellows(2013)]{DowneyF13}
R.~G. Downey and M.~R. Fellows.
\newblock \emph{Fundamentals of Parameterized Complexity}.
\newblock Texts in Computer Science. Springer, 2013.
\newblock \doi{10.1007/978-1-4471-5559-1}.

\bibitem[Fafianie and Kratsch(2015)]{FK15}
S.~Fafianie and S.~Kratsch.
\newblock A shortcut to (sun)flowers: Kernels in logarithmic space or linear
  time.
\newblock In \emph{Proc.\ 40th MFCS}, volume 9235 of \emph{LNCS}, pages
  299--310. Springer, 2015.
\newblock \doi{10.1007/978-3-662-48054-0_25}.

\bibitem[Fellows et~al.(2009)Fellows, Hermelin, Rosamond, and
  Vialette]{FellowsHRV09}
M.~Fellows, D.~Hermelin, F.~Rosamond, and S.~Vialette.
\newblock On the parameterized complexity of multiple-interval graph problems.
\newblock \emph{Theoretical Computer Science}, 410\penalty0 (1):\penalty0
  53--61, 2009.
\newblock \doi{10.1016/j.tcs.2008.09.065}.

\bibitem[Fellows et~al.(2011)Fellows, Guo, Moser, and Niedermeier]{FGMN11}
M.~Fellows, J.~Guo, H.~Moser, and R.~Niedermeier.
\newblock A complexity dichotomy for finding disjoint solutions of vertex
  deletion problems.
\newblock \emph{ACM Transactions on Computation Theory}, 2\penalty0
  (2):\penalty0 5, 2011.
\newblock \doi{10.1145/1944857.1944860}.

\bibitem[Flum and Grohe(2006)]{FlumG06}
J.~Flum and M.~Grohe.
\newblock \emph{Parameterized Complexity Theory}.
\newblock Springer, 2006.
\newblock \doi{10.1007/3-540-29953-X}.

\bibitem[Fomin et~al.(2012)Fomin, Lokshtanov, Misra, and Saurabh]{FominLMS12}
F.~Fomin, D.~Lokshtanov, N.~Misra, and S.~Saurabh.
\newblock Planar {F}-{D}eletion: Approximation, kernelization and optimal {FPT}
  algorithms.
\newblock In \emph{Proc.\ 56th FOCS}, pages 470--479. {IEEE} Computer Society,
  2012.
\newblock \doi{10.1109/FOCS.2012.62}.

\bibitem[Fomin et~al.(2013)Fomin, Golovach, and Korhonen]{FominGK13}
F.~Fomin, P.~Golovach, and J.~Korhonen.
\newblock On the parameterized complexity of cutting a few vertices from a
  graph.
\newblock In \emph{Proc.\ 38th MFCS}, volume 8087 of \emph{LNCS}, pages
  421--432. Springer, 2013.
\newblock \doi{10.1007/978-3-642-40313-2_38}.

\bibitem[Fomin et~al.(2017)Fomin, Golovach, Karpov, and Kulikov]{FominGKK16}
F.~V. Fomin, P.~A. Golovach, N.~Karpov, and A.~S. Kulikov.
\newblock Parameterized complexity of secluded connectivity problems.
\newblock \emph{Theory of Computing Systems}, 61\penalty0 (3):\penalty0
  795--819, 2017.
\newblock \doi{10.1007/s00224-016-9717-x}.

\bibitem[Gaertler(2004)]{Gae04}
M.~Gaertler.
\newblock Clustering.
\newblock In U.~Brandes and T.~Erlebach, editors, \emph{Network Analysis:
  Methodological Foundations [outcome of a Dagstuhl seminar, 13-16 April
  2004]}, volume 3418 of \emph{LNCS}, pages 178--215. Springer, 2004.
\newblock ISBN 3-540-24979-6.
\newblock \doi{10.1007/978-3-540-31955-9_8}.

\bibitem[Giannopoulou et~al.(2016)Giannopoulou, Lokshtanov, Saurabh, and
  Suchý]{GiannopoulouLSS16}
A.~C. Giannopoulou, D.~Lokshtanov, S.~Saurabh, and O.~Suchý.
\newblock Tree deletion set has a polynomial kernel (but no {OPT}$^{O(1)}$
  approximation).
\newblock \emph{SIAM Journal on Discrete Mathematics}, 30\penalty0
  (3):\penalty0 1371--1384, 2016.
\newblock \doi{10.1137/15M1038876}.

\bibitem[H{\"{u}}ffner et~al.(2009)H{\"{u}}ffner, Komusiewicz, Moser, and
  Niedermeier]{hkmn09}
F.~H{\"{u}}ffner, C.~Komusiewicz, H.~Moser, and R.~Niedermeier.
\newblock Isolation concepts for clique enumeration: Comparison and
  computational experiments.
\newblock \emph{Theoretical Computer Science}, 410\penalty0 (52):\penalty0
  5384--5397, 2009.
\newblock \doi{10.1016/j.tcs.2009.05.008}.

\bibitem[H{\"u}ffner et~al.(2010)H{\"u}ffner, Komusiewicz, Moser, and
  Niedermeier]{HKMN10}
F.~H{\"u}ffner, C.~Komusiewicz, H.~Moser, and R.~Niedermeier.
\newblock Fixed-parameter algorithms for cluster vertex deletion.
\newblock \emph{Theory of Computing Systems}, 47\penalty0 (1):\penalty0
  196--217, 2010.
\newblock \doi{10.1007/s00224-008-9150-x}.

\bibitem[H{\"{u}}ffner et~al.(2015)H{\"{u}}ffner, Komusiewicz, and
  Sorge]{huffnerks15}
F.~H{\"{u}}ffner, C.~Komusiewicz, and M.~Sorge.
\newblock Finding highly connected subgraphs.
\newblock In \emph{Proc.\ 41st SOFSEM}, volume 8939 of \emph{{LNCS}}, pages
  254--265. Springer, 2015.
\newblock \doi{10.1007/978-3-662-46078-8_21}.

\bibitem[Ito et~al.(2005)Ito, Iwama, and Osumi]{itoio05}
H.~Ito, K.~Iwama, and T.~Osumi.
\newblock Linear-time enumeration of isolated cliques.
\newblock In \emph{Proc.\ 13th ESA}, volume 3669 of \emph{LNCS}, pages
  119--130. Springer, 2005.
\newblock \doi{10.1007/11561071_13}.

\bibitem[Kleinberg and Tardos(2006)]{KleinbergT06}
J.~M. Kleinberg and {\'{E}}.~Tardos.
\newblock \emph{Algorithm design}.
\newblock Addison-Wesley, 2006.

\bibitem[Komusiewicz and Sorge(2015)]{KS15}
C.~Komusiewicz and M.~Sorge.
\newblock An algorithmic framework for fixed-cardinality optimization in sparse
  graphs applied to dense subgraph problems.
\newblock \emph{Discrete Applied Mathematics}, 193:\penalty0 145--161, 2015.
\newblock \doi{10.1016/j.dam.2015.04.029}.

\bibitem[Komusiewicz et~al.(2009)Komusiewicz, H{\"{u}}ffner, Moser, and
  Niedermeier]{khmn09}
C.~Komusiewicz, F.~H{\"{u}}ffner, H.~Moser, and R.~Niedermeier.
\newblock Isolation concepts for efficiently enumerating dense subgraphs.
\newblock \emph{Theoretical Computer Science}, 410\penalty0 (38-40):\penalty0
  3640--3654, 2009.
\newblock \doi{10.1016/j.tcs.2009.04.021}.

\bibitem[Lewis and Yannakakis(1980)]{LY80}
J.~Lewis and M.~Yannakakis.
\newblock The node-deletion problem for hereditary properties is {NP}-complete.
\newblock \emph{Journal of Computer and System Sciences}, 20\penalty0
  (2):\penalty0 219--230, 1980.
\newblock \doi{10.1016/0022-0000(80)90060-4}.

\bibitem[Lokshtanov et~al.(2013)Lokshtanov, Misra, Philip, Ramanujan, and
  Saurabh]{LokshtanovMPRS13}
D.~Lokshtanov, N.~Misra, G.~Philip, M.~Ramanujan, and S.~Saurabh.
\newblock Hardness of $r$-dominating set on graphs of diameter ($r + 1$).
\newblock In \emph{Proc.\ 8th IPEC}, volume 8246 of \emph{LNCS}, pages
  255--267. Springer, 2013.
\newblock \doi{10.1007/978-3-319-03898-8_22}.

\bibitem[Marx(2006)]{Marx06}
D.~Marx.
\newblock Parameterized graph separation problems.
\newblock \emph{Theoretical Computer Science}, 351\penalty0 (3):\penalty0
  394--406, 2006.
\newblock \doi{10.1016/j.tcs.2005.10.007}.

\bibitem[Marx et~al.(2013)Marx, O'{S}ullivan, and Razgon]{MOR13}
D.~Marx, B.~O'{S}ullivan, and I.~Razgon.
\newblock Finding {{Small Separators}} in {{Linear Time}} via {{Treewidth
  Reduction}}.
\newblock \emph{ACM Transactions on Algorithms}, 9\penalty0 (4):\penalty0
  30:1--30:35, 2013.
\newblock \doi{10.1145/2500119}.

\bibitem[Naor et~al.(1995)Naor, Schulman, and Srinivasan]{NSS95}
M.~Naor, L.~J. Schulman, and A.~Srinivasan.
\newblock Splitters and near-optimal derandomization.
\newblock In \emph{Proceedings of {{IEEE}} 36th {{Annual Foundations}} of
  {{Computer Science}}}, pages 182--191. IEEE Computer Society, 1995.
\newblock \doi{10.1109/SFCS.1995.492475}.

\bibitem[Niedermeier(2006)]{Nie06}
R.~Niedermeier.
\newblock \emph{Invitation to Fixed-Parameter Algorithms}.
\newblock Oxford University Press, 2006.
\newblock \doi{10.1093/acprof:oso/9780198566076.001.0001}.

\bibitem[Seidman(1983)]{SEIDMAN1983269}
S.~Seidman.
\newblock Network structure and minimum degree.
\newblock \emph{Social Networks}, 5\penalty0 (3):\penalty0 269--287, 1983.
\newblock \doi{10.1016/0378-8733(83)90028-X}.

\bibitem[Thomass{\'e}(2010)]{Thomasse10}
S.~Thomass{\'e}.
\newblock A $4k^2$ kernel for feedback vertex set.
\newblock \emph{ACM Transactions on Algorithms}, 6\penalty0 (2), 2010.
\newblock \doi{10.1145/1721837.1721848}.

\bibitem[West(2000)]{West00}
D.~B. West.
\newblock \emph{Introduction to Graph Theory}.
\newblock Prentice Hall, 2 edition, 2000.

\end{thebibliography}

\end{document}

